\DeclareMathOperator*{\E}{\mathbb{E}}
\newcommand{\ha}{A^{1/2}}
\DeclareMathOperator{\tr}{tr}
\DeclareMathOperator{\rank}{rank}
\newcommand{\R}{\mathbb{R}}
\DeclareMathOperator*{\argmin}{arg\,min}
\newcommand{\poly}{\mathop\mathrm{poly}}
\DeclareMathOperator{\nnz}{nnz}
\newcommand{\eqdef}{\mathbin{\stackrel{\rm def}{=}}}
\newcommand{\norm}[1]{\|#1\|} 
\newtheorem*{rep@theorem}{\rep@title}
\newcommand{\newreptheorem}[2]{%
\newenvironment{rep#1}[1]{%
 \def\rep@title{#2 \ref{##1}}%
 \begin{rep@theorem}}%
 {\end{rep@theorem}}}
\newtheorem{theorem}{Theorem}
\newtheorem{corollary}[theorem]{Corollary}
\newtheorem{lemma}[theorem]{Lemma}
\newtheorem*{lemma*}{Lemma}
\newtheorem{claim}[theorem]{Claim}
\newtheorem{definition}[theorem]{Definition}
\newif\iffinal
\title{Sublinear Time Low-Rank Approximation of\\Positive Semidefinite Matrices}
\author{
 Cameron Musco \\
  MIT \\
  \texttt{cnmusco@mit.edu}
\and
David P. Woodruff\\
Carnegie Mellon University \\
\texttt{dwoodruf@cs.cmu.edu }
}
\date{}
\begin{document}


\begin{titlepage}
  \maketitle
\begin{abstract}
  We show how to compute a {\it relative-error} low-rank approximation to any
  positive semidefinite (PSD) matrix in {\it sublinear time}, i.e., for any
  $n \times n$ PSD matrix $A$,  in $\tilde O(n \cdot \poly(k/\epsilon))$ time we output a rank-$k$ matrix $B$, in factored form, for which
  $\|A-B\|_F^2 \leq (1+\epsilon)\|A-A_k\|_F^2$, where $A_k$ is the best
  rank-$k$ approximation to $A$. When $k$ and $1/\epsilon$ are not too large compared to the sparsity of $A$, our algorithm does not need to read all entries of the matrix. Hence, we significantly improve upon previous
  $\nnz(A)$ time algorithms based on oblivious
  subspace embeddings, and bypass an $\nnz(A)$ time lower bound for
  general matrices (where $\nnz(A)$ denotes the number of non-zero entries in the matrix). 
  We prove time lower bounds for low-rank approximation of PSD matrices,
  showing that our algorithm is close to optimal. Finally, we extend our techniques to give sublinear time algorithms for low-rank approximation of $A$ in the (often stronger) spectral norm metric $\norm{A-B}_2^2$ and for ridge regression on PSD matrices. 
  \end{abstract}

\thispagestyle{empty}
\end{titlepage}

\section{Introduction}
%
A fundamental task in numerical linear algebra is to compute a low-rank approximation of a matrix. Such an approximation can reveal underlying low-dimensional structure, can provide a compact way of storing a matrix in factored form, and can be quickly applied to a vector. 
Countless applications include clustering \cite{dfkvv04,fss13,lbkw14,cohen2015dimensionality}, datamining \cite{afkms01}, information retrieval \cite{prtv00}, learning mixtures of distributions \cite{am05,ksv08}, recommendation systems \cite{dkr02}, topic modeling \cite{h03}, and web search \cite{afkm01,k99}. 

One of the most well-studied versions of the problem is to compute a near optimal low-rank approximation with respect to the Frobenius norm. That is, given an $n \times n$ input matrix $A$ and an accuracy parameter $\epsilon > 0$, output a rank-$k$ matrix $B$ for which:
\begin{eqnarray}\label{eqn:guarantee}
\|A-B\|_F^2 \leq (1+\epsilon) \|A-A_k\|_F^2,
\end{eqnarray}
where for a matrix $C$, $\|C\|_F^2 = \sum_{i,j} C_{i,j}^2$ is its squared Frobenius norm, and $A_k = \textrm{argmin}_{\textrm{rank-}k \textrm{ }B }\|A-B\|_F$. $A_k$ can be computed exactly using the singular value decomposition, but takes $O(n^3)$ time in practice and $n^{\omega}$ time in theory, where $\omega \approx 2.373$ is the exponent of matrix multiplication. 

In seminal work, Frieze, Kannan, and Vempala \cite{fkv04} and Achlioptas and McSherry \cite{am07} show that using randomization and approximation, much faster runtimes are possible. Specifically, \cite{fkv04} gives an algorithm that, assuming access to the row norms of $A$, outputs rank-$k$ $B$, in factored form, such that with good probability, $\|A-B\|_F^2 \leq \|A-A_k\|_F^2 + \epsilon \|A\|_F^2$. The algorithm runs in just $n \cdot \poly(k/\epsilon)$ time. However $\nnz(A)$ additional time is required to compute the row norms, where $\nnz(A)$ denotes the number of non-zero entries of $A$. Further, the guarantee achieved can be significantly weaker than (\ref{eqn:guarantee}), since the error is of the form $\epsilon\|A\|_F^2$ rather than $\epsilon \|A-A_k\|_F^2$.
Note that $\|A-A_k\|_F^2 \ll \|A\|_F^2$ precisely when $A$ is well-approximated by a rank-$k$ matrix.
Related additive error algorithms with additional assumptions were given for tensors in \cite{swz16}. 

Sarl\'os \cite{sarlos2006improved} showed how to achieve (\ref{eqn:guarantee}) with constant probability in $\tilde O(\nnz(A) \cdot k/\epsilon) + n \cdot \poly(k/\epsilon)$ time. This was improved by Clarkson and Woodruff \cite{clarkson2013low} who achieved $O(\nnz(A)) + n \cdot \poly(k/\epsilon)$ time. See also work by Bourgain, Dirksen, and Nelson \cite{bdn15}, Cohen \cite{c16}, Meng and Mahoney \cite{mm13}, and Nelson and Nguyen \cite{nn13} which further improved the degree in the $\poly(k/\epsilon)$ term. For a survey, see \cite{w14}.

In the special case that $A$ is rank-$k$ and so $\norm{A-A_k}_F^2 = 0$, \eqref{eqn:guarantee} is equivalent to the well studied low-rank matrix completion problem \cite{candes2009exact}. Much attention has focused on completing \emph{incoherent} low-rank matrices, whose singular directions are represented uniformly throughout the rows and columns and hence can be identified via uniform sampling and without fully  accessing the matrix. Under incoherence (and often condition number) assumptions, a number of methods are able to complete a rank-$k$ matrix in $\tilde O(n \cdot \poly(k))$ time \cite{jain2013low,hardt2014understanding}. 

For general matrices, without incoherence, it is not hard to see that $\Omega(\nnz(A))$ is a time lower bound: if one does not read a constant fraction of entries of $A$, with constant probability one can miss an entry much larger than all others, which needs to be included in the low-rank approximation. 

\subsection{Low-rank Approximation of Positive Semidefinite Matrices}
An important class of matrices for which low-rank approximation is often applied is the set of positive semidefinite (PSD) matrices. These are real symmetric matrices with all non-negative eigenvalues. They arise for example as covariance matrices, graph Laplacians, Gram matrices (in particular, kernel matrices), and random dot product models \cite{YS}. In multidimensional scaling, low-rank approximation of PSD matrices in the Frobenius norm error metric \eqref{eqn:guarantee} corresponds to the standard `strain minimization' problem \cite{cox2000multidimensional}. 
Completion of low-rank, or nearly low-rank (i.e., when $\norm{A-A_k}_F^2 \approx 0$), PSD matrices from few entries  is important in applications such as quantum state tomography \cite{gross2010quantum} and global positioning using local distances \cite{so2005theory,young1938discussion}.

Due to its importance, a vast literature studies low-rank approximation of PSD matrices \cite{drineas2005nystrom,zhang2008improved,KMT,belabbas2009spectral,li2010making,GM,WangZ,duan2014low,WangLZ,tropp2016randomized,li2016fast,mm16,CWPSD}. However, known algorithms either run in at least $\nnz(A)$ time,
do not achieve the relative-error guarantee of (\ref{eqn:guarantee}), or require strong incoherence assumptions\footnote{
Many of these algorithms satisfy the additional constraint that the low-rank approximation $B$ is PSD. This is also now known to be possible in $O(\nnz(A))$ time using sketching-based algorithms for general matrices \cite{CWPSD}.} (see Table \ref{resultsTable}).

At the same time, 
the simple $\Omega(\nnz(A))$ time lower bound for general matrices \emph{does not hold in the PSD case}. Positive semidefiniteness ensures that for all $i,j$, $|A_{i,j}| \leq \max(A_{i,i}, A_{j,j})$. So `hiding' a large entry in $A$ requires creating a corresponding large diagonal entry. By reading the $n$ diagonal elements, an algorithm can avoid being tricked by this approach.
While far from an algorithm, this argument raises the possibility that
improved runtimes could be possible for PSD matrices.



\subsection{Our Results}
We give the first sublinear time relative-error low-rank approximation algorithm for PSD matrices. Our algorithm reads just $n k \cdot \poly(\log n/\epsilon)$ entries of $A$ and runs in $n k^{\omega-1} \cdot \poly(\log n/\epsilon)$ time (Theorem \ref{thm:main}). With probability $99/100$ it outputs a matrix $B$ in factored form which satisfies (\ref{eqn:guarantee}). We critically exploit the intuition that large entries cannot `hide' in PSD matrices, but surprisingly require {\it no additional assumptions} on $A$, such as incoherence or bounded condition number. 


We complement our algorithm with an $\Omega(n k / \epsilon)$ time lower bound. The lower bound is information-theoretic, showing that any algorithm which reads fewer than this number of entries in the input PSD matrix cannot achieve the relative-error guarantee of (\ref{eqn:guarantee}) with constant probability. As our algorithm only reads $n k \cdot \poly(\log n/\epsilon)$ entries of $A$, this is nearly optimal for constant $\epsilon$.
We note that the actual time complexity of our algorithm is slower by a factor of $k^{\omega-2}$.

Finally, we show that our techniques can be extended to compute $B$  satisfying the spectral norm guarantee: $\norm{A-B}_2^2 \le (1+\epsilon) \norm{A-A_k}_2^2 + \frac{\epsilon}{k}  \norm{A-A_k}_F^2$ using just $nk^2 \cdot \poly(\log n/\epsilon)$ accesses to $A$ and $nk^\omega \cdot \poly(\log n/\epsilon)$ time (Theorem \ref{thm:spectral}). This guarantee is often stronger than \eqref{eqn:guarantee} when $\norm{A-A_k}_F^2$ is large, and is important in many applications. For example, we use this result to solve the ridge regression problem $\min_{x\in \mathbb{R}^n} \norm{Ax-y}_2^2 + \lambda \norm{x}_2^2$ up to $(1+\epsilon)$ relative error in $\tilde O \left  (\frac{n s_\lambda^\omega}{\epsilon^{2\omega}}  \right )$ time, where $s_\lambda = \tr((A^2 +\lambda I)^{-1} A^2)$ is the \emph{statistical dimension} of the problem (see Theorem \ref{thm:regression}). Typically $s_\lambda \ll n$, so our runtime is sublinear and improves significantly on existing input-sparsity time results \cite{HCW}. 
For a summary of our results and comparison to prior work see, Table \ref{resultsTable}.

\begin{table}
\begin{center}
  \begin{tabular}{| c | c | c |}
    \hline
    \textbf{Source} & \textbf{Runtime} &\textbf{Approximation Bound} \\ \hline
    \cite{drineas2005nystrom}  & $n k^{\omega-1} \cdot \poly(1/\epsilon)$& $\norm{A-B}_F \le \norm{A-A_k}_F + \epsilon \norm{A}_*$\tablefootnote{Note that this bound is stated incorrectly as $\norm{A-B}_F \le \norm{A-A_k}_F + \epsilon \sum_{i=1}^n (A_{ii})^2$ in \cite{drineas2005nystrom}.} \\ \hline
    \cite{KMT} & $nk^{\omega-1} \cdot \poly(1/\epsilon)$ & $\norm{A-B}_F \le \norm{A-A_k}_F + \epsilon n \cdot \max_i A_{ii}$
    \\ \hline
     \cite{GM} & $\tilde O(n^2) + nk^{\omega-1}\poly(\log n/\epsilon)$ & $\norm{A-B}_F \le \norm{A-A_k}_F + \epsilon \norm{A-A_k}_*$\\ \hline
    \cite{agr16} + \cite{belabbas2009spectral} & $n \log n \cdot \poly(k)$ & $\norm{A-B}_F \le (k+1) \norm{A-A_k}_*$\\ \hline
    \cite{mm16} & $n k^{w-1} \cdot \poly(\log k/\epsilon)$ & $\norm{\ha-B}_F^2 \le (1+\epsilon) \norm{\ha-\ha_k}_F^2$ \\
    \hline 
    \cite{CWPSD} & $O(\nnz(A)) + n \poly(k/\epsilon)$ & $\norm{A-B}_F^2 \le (1+\epsilon) \norm{A-A_k}_F^2$ \\
    \hline
    \multicolumn{3}{|c|}{\textbf{Our Results}}\\
    \hline
    Theorem \ref{thm:main} & $n k^{\omega-1} \cdot \poly(\log n/\epsilon)$ & $\norm{A-B}_F^2 \le (1+\epsilon) \norm{A-A_k}_F^2$ \\ \hline
    Theorem \ref{thm:spectral} & $nk^\omega \cdot \poly(\log n/\epsilon)$ & $\norm{A-B}_2^2 \le (1+\epsilon) \norm{A-A_k}_2^2 + \frac{\epsilon}{k}  \norm{A-A_k}_F^2$ \\\hline
  \end{tabular}
\end{center}
\caption{Comparison of our results to prior work on low-rank approximation of PSD matrices. $\norm{M}_* = \sum_{i=1}^n \sigma_i(M)$ denotes the nuclear norm of matrix $M$.
The cited results all output $B$ (in factored form), which is itself PSD. In Theorem \ref{thm:psdOutput} we show how to modify our algorithm to satisfy this condition, and run in $nk^\omega \cdot \poly(\log n/\epsilon)$ time. The table shows results that do not require incoherence assumptions on $A$. For general PSD matrices, all known incoherence based results (see e.g., \cite{gittens2011spectral,GM}) degrade to $\Theta(n^\omega)$ runtime. 
Additionally, as discussed, any general low-rank approximation algorithm can be applied to PSD matrices, with state-of-the-art approaches running in input-sparsity time \cite{clarkson2013low,mm13,nn13}. \cite{CWPSD} extends these results to the case where the output $B$ is restricted to be PSD. \cite{tropp2016randomized} does the same but outputs $B$ with rank $2k$.
For a more in depth discussion of bounds obtained in prior work, see Section \ref{sec:intuition}.
} 
\label{resultsTable}
\end{table}
\vspace{-.25em}

\subsection{Algorithm Overview} 

The starting point for our approach is the fundamental fact that \emph{any} matrix $A$ contains a subset of $O(k/\epsilon)$ columns, call them $C$, that span a relative-error rank-$k$ approximation to $A$ \cite{deshpande2006matrix,deshpande2006adaptive,drineas2006subspace}. Computing the best low-rank approximation to $A$ using an SVD requires access to all $\Theta(n^2)$ dot products between the columns of the matrix. However, given $C$, just  $n \cdot O(k/\epsilon)$ dot products are needed -- to project the remaining columns of the matrix to the span of the subset. 

Additionally, a subset of size $\poly(k/\epsilon)$ can be identified using an intuitive approach known as \emph{adaptive sampling} \cite{deshpande2006adaptive}: columns are iteratively added to the subset, with each new column being sampled with probability proportional to its norm \emph{outside the column span} of the current subset. Formally, column $a_i$ is selected with probability $\frac{\norm{a_i - P_Ca_i}_2^2}{\norm{A- P_CA}_F^2}$ where $P_C$ is the projection onto the current subset $C$. Computing these sampling probabilities requires knowing the norm of each $a_i$ along with its dot product with each column currently in $C$. So, overall this approach gives a relative-error low-rank approximation using just $n \cdot \poly(k/\epsilon)$ dot products between columns of $A$.

The above observation is surprising -- not only does every matrix contain a small column subset witnessing a near optimal low-rank approximation, but also, such a witness can be found using significantly less information about the column span of the matrix than is required by a full SVD. 

This fact is not immediately algorithmically useful, as computing the required dot products takes $\nnz(A) \cdot \poly(k/\epsilon)$ time. However, given PSD $A$, we can write the eigendecomposition $A = U \Lambda U^T$ where $\Lambda$ is a non-negative diagonal matrix of eigenvalues, and let $A^{1/2} = U \Lambda^{1/2} U^T$ be the matrix square root of $A$. Since $\ha\ha = A$, the entry $A_{i,j}$ is just the dot product between the $i^{th}$ and $j^{th}$ columns of $\ha$. So with $A$ in hand, the dot products have been `precomputed' and the above approach yields a low-rank approximation algorithm for $\ha$ running in just $n \cdot \poly(k/\epsilon)$ time. Note that, aligning with our initial intuition that reading the diagonal entries of $A$ is necessary to avoid the $\nnz(A)$ time lower bound for general matrices, the diagonal entries of $A$ are the column norms of $\ha$, and hence their values are critical to computing the adaptive sampling probabilities.

By the above argument, given PSD $A$, we can compute in $n \cdot \poly(k/\epsilon)$ time a rank-$k$ orthogonal projection matrix $P \in \mathbb{R}^{n \times n}$ (in factored form) for which $\|A^{1/2} - A^{1/2}P\|_F^2 \leq (1+\epsilon)\|A^{1/2} - A^{1/2}_k\|_F^2$. This approach can be implemented using adaptive sampling \cite{deshpande2006adaptive}, sublinear time volume sampling \cite{agr16}, or as shown in \cite{mm16}, recursive \emph{ridge leverage score sampling}. The ridge leverage scores are a natural interpolation between adaptive sampling and the widely studied leverage scores, which, as we will see, have a number of additional algorithmically useful properties.
As discussed in \cite{mm16}, the guarantee for $\ha$ is useful for a number of kernel learning methods such as kernel ridge regression. However, it is very different from our final goal. 
%
%
%
In fact, one can show that projecting to $P$ can yield an {\it arbitrarily bad} low-rank approximation to $A$ itself (\iffinal see Appendix A in full paper\else see Appendix \ref{sec:example}\fi).

We note that, since $P$ is constructed via column selection methods, it is possible to efficiently compute a factorization of $\ha P \ha$ (\iffinal see Appendix A in full paper\else see Appendix \ref{sec:example}\fi). Further, this matrix gives a near optimal low-rank approximation of $A$ if we use error parameter $\epsilon' = \epsilon/\sqrt{n}$. This approach gives a first sublinear time algorithm, but it is significantly suboptimal. Namely, it requires reading $\tilde O(n k/\epsilon') = \tilde O(n^{3/2} k /\epsilon)$ entries of $A$ and takes $n^{1.69} \cdot \poly(k/\epsilon)$ time using fast matrix multiplication. 

To improve the dependence on $n$, we need a better understanding of how to perform ridge leverage score sampling on $A$ itself. 
We start by showing that the ridge leverage scores of $A^{1/2}$ are within a factor of $O(\sqrt{n/k})$ of the ridge leverage scores of $A$. By this bound, if we over-sample columns of $A$ by a factor of $O(\sqrt{n/k})$ using the ridge leverage scores of $A^{1/2}$ (computable via \cite{mm16}), obtaining a sample of $\tilde O(\sqrt{n/k} \cdot k/\epsilon^2)$ columns, the sample will be a so-called {\it projection-cost preserving sketch} (PCP) of $A$. The notion of a PCP was introduced in \cite{cohen2015dimensionality}, where a matrix $C$ is defined to be an $(\epsilon, k)$-column PCP of $A$ if for all rank-$k$ projection matrices $P$:
\begin{eqnarray}\label{eqn:pcp}
(1-\epsilon) \|A-PA\|_F^2 \leq \|C - PC\|_F^2 \leq (1+\epsilon)\|A-PA\|_F^2.
\end{eqnarray}
One important property of a PCP is that good low-rank approximations to $C$ translate to good low-rank approximations of $A$. More precisely, if $U$ is an $n \times k$ matrix with orthonormal columns for which $\|C-UU^TC\|_F^2 \leq (1+\epsilon)\|C-C_k\|_F^2$, then $\|A-UU^TA\|_F^2 \leq \frac{(1+\epsilon)^2}{(1-\epsilon)}\|A-A_k\|_F^2$. 

Letting $C$ be the $n \times \tilde O(\sqrt{nk}/\epsilon^2)$ submatrix which we sample via ridge leverage scores, we can apply an $\nnz(C)$ time algorithm to compute a subspace $U \in \R^{n \times k}$ whose columns span a near-optimal low-rank approximation of $C$, and hence of $A$ by the PCP property. Using standard sampling techniques, we can approximately project the columns of $A$ to $U$, producing our final solution. This gives time complexity $n^{3/2} \cdot \poly(k/\epsilon)$, improving slightly upon our first approach. 


To reduce the time to linear in $n$, we must further reduce the size of $C$ by sampling
a small subset of its rows, which themselves form a PCP.
To find these rows, we cannot
afford to use oblivious sketching techniques, which would take at least $\nnz(C)$ time, nor
can we use our previous method for providing $O(\sqrt{n/k})$ overestimates
to the ridge leverage scores, since $C$ is no longer PSD. In fact, the row ridge leverage scores of $C$ can be arbitrarily large compared to those of $\ha$. 

The key idea to getting around this issue is that, since $C$ is a column PCP of $A$, projecting its columns onto $A$'s top eigenvectors gives a near optimal low-rank approximation. Further, we can show that the ridge leverage scores of $\ha$ (appropriately scaled) upper bound the \emph{standard leverage scores} of this low-rank approximation. Sampling by these leverage scores is not enough to give a guarantee like \eqref{eqn:pcp} -- they ignore the entire component of $C$ not falling in the span of $A$'s top eigenvectors and so may significantly distort projection costs over the matrix.
Further, at this point, we have no idea how to estimate the row norms of $C$, or even its Frobenius norm, with $n \poly(k/\epsilon)$ samples, which are necessary to implement any kind of adaptive sampling approach.

Fortunately, using that sampling at least preserves the matrix in expectation, along with a few other properties of the ridge leverage scores of $\ha$, we show that, with good probability, sampling $\tilde O(\sqrt{nk}/\poly(\epsilon))$ rows of $C$ by these scores yields $R$ satisfying for all rank-$k$ projection matrices $P$: 
$$(1-\epsilon)\|C-CP\|_F^2 \leq \|R-RP\|_F^2 + \Delta \leq (1+\epsilon)\|C-CP\|_F^2$$
where $\Delta$ is a fixed value, independent of $P$, with $|\Delta| \le c\norm{C-C_k}_F^2$ for some constant $c$. Since the same $\Delta$ distortion applies to all $P$, and since it is at most a constant times the true optimum, a near optimal low-rank approximation for $R$ still translates to a near optimal approximation for $C$.  

At this point $R$ is a small matrix, and we can
run any $O(\nnz(R))$ time algorithm to find a good low-rank factorization $EF^T$ to it, where $F^T$
is $k \times \tilde O(\sqrt{nk}/\poly(\epsilon))$. Since $R$ is a row PCP for $C$, 
by regressing the rows of $C$ to the span of $F$, we can obtain a near optimal low-rank approximation to $C$.
We can solve this multi-response regression approximately in sublinear time via standard sampling techniques. Approximately regressing $A$ to the span of this approximation using similar techniques yields our final result.
The total runtime is dominated by the input-sparsity low-rank approximation of $R$ requiring $O(\nnz(R)) = \tilde O(nk /\poly(\epsilon))$ time.

To improve $\epsilon$ dependencies in our final runtime, achieving sample complexity $\tilde O \left (\frac{nk}{\epsilon^{2.5}} \right)$, we modify  this approach somewhat, showing that $R$ actually satisfies a stronger \emph{spectral norm PCP} property for $C$. This property lets us find a low-rank span $Z$ with $\norm{C-CZZ^T}_2^2 \le \frac{\epsilon}{k} \norm{A-A_k}_F^2$, from which, through a series of approximate regression steps, we can extract a low-rank approximation to $A$ satisfying \eqref{eqn:guarantee}. This stronger spectral guarantee also lies at the core of our extensions to near optimal spectral norm low-rank approximation (Theorem \ref{thm:spectral}), ridge regression (Theorem \ref{thm:regression}), and low-rank approximation where $B$ is restricted to be PSD (Theorem \ref{thm:psdOutput}).
\subsection{Some Further Intuition on Error Guarantees}\label{sec:intuition}

Observe that in computing a low-rank approximation of $A$, we read just $\tilde O(n \cdot \poly(k/\epsilon))$ entries of the matrix, which is, up to lower order terms, the same number of entries (corresponding to column dot products of $\ha$) that we accessed to compute a low-rank approximation of $\ha$ in our description above. However, these sets of entries are very different. While low-rank approximation of $\ha$ looks at an $n \times \poly(k/\epsilon)$ sized submatrix of $A$ together with the diagonal entries, our algorithm considers a carefully chosen $\sqrt{nk} \poly(\log n/\epsilon) \times \sqrt{nk} \poly(\log n/\epsilon)$ submatrix together with the diagonal entries, which gives significantly more information about the spectrum of $A$. 

As a simple example, consider $A$ with top eigenvalue $\lambda_1 = \sqrt{n}$, and $\lambda_i = 1$ for $i=2,...n$. $\norm{\ha}_F^2 = \sum_{i=1}^n \lambda_i = \sqrt{n} + n -1$ while $\norm{\ha - \ha_1}_F^2 = \sum_{i=2}^n \lambda_i = n-1$. So, $\ha$ has no good rank-$1$ approximation. Unless we set $\epsilon = O(1/\sqrt{n})$, a low-rank approximation algorithm for $\ha$ can learn nothing about $\lambda_1$ and still be near optimal. In contrast, $\norm{A}_F^2 = \sum_{i=1}^n \lambda_i^2 = 2n -1$ and $\norm{A-A_1}_F^2 = \sum_{i=2}^n \lambda_i^2 = n-1$. So, even with $\epsilon = 1/2$, any rank-$1$ approximation algorithm for $A$ must identify the presence of $\lambda_1$ and project this direction off the matrix.
In this sense, our algorithm is able to obtain a much more accurate picture of $A$'s spectrum. 

With incoherence assumptions, prior work on PSD low-rank approximation \cite{GM} obtains the bound $\norm{A-B}_* \le (1+\epsilon) \norm{A-A_k}_*$ in sublinear time, where $\norm{M}_* = \sum_{i=1}^n \sigma_i(M)$ is the nuclear norm of $M$. Recent work (\cite{agr16} in combination with \cite{belabbas2009spectral}) gives $\norm{A-B}_F \le (k+1) \norm{A-A_k}_*$ without the incoherence assumption. These nuclear norm bounds are closely related to approximation bounds for $\ha$ and it is not hard to see that neither require $\lambda_1$ to be detected in the example above, and so in this sense are weaker than our Frobenius norm bound.

A natural question if even stronger bounds are possible: e.g., can we compute $B$ with $\norm{A-B}_2^2 \le (1+\epsilon) \norm{A-A_k}_2^2$ in sublinear time? We partially answer this question in Theorem \ref{thm:spectral}. In $\tilde O(nk^\omega \poly(\log n/\epsilon))$ time, we can find $B$ satisfying $\norm{A-B}_2^2 \le (1+\epsilon) \norm{A-A_k}_2^2 + \frac{\epsilon}{k}  \norm{A-A_k}_F^2$.

Significantly improving the above bound seems difficult: it is easy to see that a relative error spectral norm guarantee requires $\Omega(n^2)$ time. Consider $A$ which is the identity except with $A_{i,j} = A_{j,i} = 1$ for some uniform random pair $(i,j)$. Finding $(i,j)$ requires $\Omega(n^2)$ queries to $A$. However, it is necessary to achieve a  relative error  spectral norm guarantee with $\epsilon < 3$ since $\norm{A}_2^2 = 4$ while $\norm{A-A_1}_2^2 = 1$ where $A_1$ is all zeros with ones at its $(i,i)$, $(j,j)$, $(i,j)$, and $(j,i)$ entries. 

A similar argument shows that relative error low-rank approximation in higher Schatten-$p$ norms, i.e., $\norm{A-B}_p^p$ for $p > 2$ requires superlinear dependence on $n$ (where $\norm{M}_p^p = \sum_{i=1}^n \sigma_i^p(M)$.) We can set $A$ to be the identity but with an all ones block on a uniform random subset of $n^{1/p}$ indices. This block has associated eigenvalue $\lambda_1 = n^{1/p}$ and so, since all other ($n-n^{1/p}$) eigenvalues of $A$ are $1$, $\norm{A}_p^p = \Theta(n)$, and the block must be recovered to give a relative error approximation to $\norm{A-A_1}_p^p$. However, as the block is placed uniformly at random and contains just $n^{2/p}$ entries, finding even a single entry requires $n^{2-2/p}$ queries to $A$ -- superlinear for $p > 2$.

\subsection{Open Questions}
While it is apparent that obtaining stronger error guarantees than \eqref{eqn:guarantee} may require increased runtime, understanding exactly what can be achieved in sublinear time is an interesting direction for future work. We also note that it is still unknown how to compute a number of  basic properties of PSD matrices in sublinear time. For example, while we can output $B$ satisfying $\norm{A-B}_F^2 \le (1+\epsilon) \norm{A-A_k}_F^2$, surprisingly it is not clear how to actually  estimate the value $\norm{A-A_k}_F^2$ to within a $(1\pm \epsilon)$ factor. This can be achieved in $n^{3/2} \poly(k/\epsilon)$ time using our PCP techniques. However, obtaining linear runtime in $n$ is open. Estimating $ \norm{A-A_k}_F^2$ seems strongly connected to estimating other important quantities such as the statistical dimension of $A$ for ridge regression (see Theorem \ref{thm:regression}) which we do not know how to do in $o(n^{3/2})$ time.

Finally, an open question is if these techniques can be generalized to a broader class of matrices. As discussed, in the matrix completion literature, much attention has focused on incoherent low-rank matrices \cite{candes2009exact} which can be approximated with uniform sampling. PSD matrices are not incoherent in general, which is highlighted by the fact that our sampling schemes are far from uniform and very adaptive to previously seen matrix entries. However, perhaps there is some other parameter (maybe relating to a measure of diagonal dominance) which characterizes when low-rank approximation can be performed with just a small number of adaptive accesses to $A$.

\subsection{Paper Outline}

\noindent \textbf{Section \ref{sec:ridge}: Ridge Leverage Score Sampling}. We show that the ridge leverage scores of $A$ are within an $O(\sqrt{n/k})$ factor of those of $A^{1/2}$, letting us use the fast ridge leverage score sampling algorithm of \cite{mm16} to sample $\tilde O(\sqrt{nk}/\epsilon^2)$ columns of $A$ that form a column PCP of the matrix.

\medskip
\noindent \textbf{Section \ref{sec:row}: Row Sampling}.
We discuss how to further accelerate our algorithm by obtaining a row PCP for our column sample, allowing us to achieve runtime linear in $n$.

\medskip
\noindent \textbf{Section \ref{sec:full}: Full Algorithm}.
We use the primitives in the previous sections along with approximate regression techniques to give our full sublinear time low-rank approximation algorithm.

\medskip
\noindent \textbf{Section \ref{sec:lower}: Lower Bounds}.
We show that our algorithm is nearly optimal -- any relative error low-rank approximation algorithm must read $\Omega(nk/\epsilon)$ entries of $A$. 

\medskip
\noindent \textbf{Section \ref{sec:spectral}: Spectral Norm Bounds}. We modify the algorithm of Section \ref{sec:full} to give a tighter approximation in the spectral norm and discuss applications to sublinear time ridge regression.

\section{Ridge Leverage Score Sampling}\label{sec:ridge}
Our main
algorithmic tool will be ridge leverage score sampling, which is used to identify a small subset of columns of $A$ that span a good low-rank approximation of the matrix. Following the definition of \cite{cohen2015ridge}, the rank-$k$ ridge leverage scores of any matrix $A$ are given by:
\begin{definition}[Ridge Leverage Scores]\label{def:ridgeScores} For any $A \in \mathbb{R}^{n \times d}$, letting $a_i \in \mathbb{R}^{n}$ be the $i^{th}$ column of $A$, the $i^{th}$ rank-$k$ column ridge leverage score of $A$ is:
\begin{align*}
\tau_i^k(A) = a_i^T \left (AA^T + \frac{\norm{ A-A_k}_F^2}{k} I\right )^{+} a_i.
\end{align*}
\end{definition}
Above $I$ is the appropriately sized identity matrix and $M^+$ denotes the matrix pseudoinverse, equivalent to the inverse unless $\norm{ A-A_k}_F^2 = 0$ and $A$ is singular. Analogous scores can be defined for the rows of $A$ by simply transposing the matrix.
It is not hard to see that $0 < \tau^k_i(A) < 1$ for all $i$. Since we use these scores as sampling probabilities, it is critical that the sum of scores, and hence the size of the subsets we sample, is not too large.
We have the following
(\iffinal see Appendix B in full paper\else see Appendix \ref{additional_proofs}\fi):
\begin{lemma}[Sum of Ridge Leverage Scores]\label{lem:sum}
For any $A \in \R^{n\times d}$, $\sum_{i=1}^d \tau^k_i(A) \le 2k$.
\end{lemma}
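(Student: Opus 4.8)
The plan is to express $\sum_{i=1}^d \tau_i^k(A)$ as a single trace and then bound that trace by splitting the spectrum of $A$ at the index $k$. Set $\lambda \eqdef \norm{A-A_k}_F^2/k$ and $M \eqdef AA^T + \lambda I$. If $\norm{A-A_k}_F^2 = 0$, i.e.\ $A$ has rank at most $k$, I would dispatch the claim immediately: $M^+ = (AA^T)^+$, so $\tr(M^+ AA^T)$ equals the rank of the projection onto the range of $AA^T$, which is $\rank(A) \le k \le 2k$. So assume $\lambda > 0$ from now on. Since $\tau_i^k(A) = a_i^T M^+ a_i = \tr(M^+ a_i a_i^T)$, linearity of the trace gives
\begin{align*}
\sum_{i=1}^d \tau_i^k(A) = \tr\!\left( M^+ \sum_{i=1}^d a_i a_i^T \right) = \tr(M^+ A A^T).
\end{align*}

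Next I would diagonalize. Let $A = U\Sigma V^T$ be a singular value decomposition with singular values $\sigma_1 \ge \sigma_2 \ge \cdots \ge 0$, so $AA^T = U\Sigma^2 U^T$ and, since $\lambda > 0$, $M$ is invertible with $M^{-1} = U(\Sigma^2 + \lambda I)^{-1} U^T$. Substituting and using cyclicity of the trace,
\begin{align*}
\sum_{i=1}^d \tau_i^k(A) = \tr\!\left( (\Sigma^2 + \lambda I)^{-1}\Sigma^2 \right) = \sum_{j=1}^{\rank(A)} \frac{\sigma_j^2}{\sigma_j^2 + \lambda}.
\end{align*}

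Then I split the sum at $j=k$. If $\rank(A) \le k$ we are already done as above, so suppose $\rank(A) > k$. For the first $k$ terms, each summand satisfies $\frac{\sigma_j^2}{\sigma_j^2+\lambda} \le 1$, contributing at most $k$ in total. For $j > k$, I bound $\frac{\sigma_j^2}{\sigma_j^2+\lambda} \le \frac{\sigma_j^2}{\lambda}$, so the tail contributes at most $\frac{1}{\lambda}\sum_{j>k}\sigma_j^2 = \frac{\norm{A-A_k}_F^2}{\lambda} = k$, using the standard identity $\norm{A-A_k}_F^2 = \sum_{j>k}\sigma_j^2$ together with our choice of $\lambda$. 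Adding the two pieces yields $\sum_{i=1}^d \tau_i^k(A) \le 2k$.

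There is essentially no serious obstacle here; the only two points requiring a moment of care are (i) the pseudoinverse bookkeeping when $\norm{A-A_k}_F^2 = 0$, which is why I peel that case off at the start, and (ii) observing that the regularizer $\lambda = \norm{A-A_k}_F^2/k$ is calibrated precisely so that the tail sum equals $k$ — this is exactly what produces the constant $2k$ (one $k$ from the head, one from the tail) rather than something larger.
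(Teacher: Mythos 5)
Your proof is correct and takes essentially the same approach as the paper: diagonalize via the SVD, express the sum as $\sum_j \sigma_j^2/(\sigma_j^2+\lambda)$ with $\lambda = \norm{A-A_k}_F^2/k$, and split at index $k$, bounding each head term by $1$ and each tail term by $\sigma_j^2/\lambda$ so that both pieces contribute $k$. Your explicit peeling-off of the $\norm{A-A_k}_F^2 = 0$ case is a small bit of extra bookkeeping that the paper's proof glosses over by implicitly working on the column span of $A$.
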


Intuitively, the ridge leverage scores are similar to the standard leverage scores of $A$, which are given by $a_i^T (AA^T)^+ a_i$. By writing $A = U \Sigma V^T$ in its SVD, one sees that standard leverage scores
are just the squared column norms of $V^T$.
Sampling columns by ridge leverage scores yields a spectral approximation to the matrix. The addition of the weighted identity (or `ridge') $\frac{\norm{ A-A_k}_F^2}{k} I$  `dampens' contributions from smaller singular directions of $A$, decreasing the sum of the scores and allowing us to sample fewer columns. At the same time, it introduces error dependent on the size of the tail $\norm{A-A_k}_F^2$, ultimately giving an approximation from which it is possible to output a near optimal low-rank approximation to the original matrix. Specifically, sampling by ridge leverage scores yields a \emph{projection-cost preserving} sketch (PCP) of $A$:

\begin{lemma}[Theorem 6 of \cite{cohen2015ridge}]\label{thm:pcp} For any $A \in \R^{n \times d}$, for $i \in \{1,\ldots,d\}$, let $\tilde \tau_i^k \ge \tau_i^k(A)$ be an overestimate for the $i^{th}$ rank-$k$ ridge leverage score. 
Let $p_i = \frac{\tilde \tau^k_i}{\sum_i \tilde \tau^k_i}$ and $t = \frac{c\log(k/\delta)}{\epsilon^2} \sum_i \tilde \tau^k_i$ for any $\epsilon < 1$ and sufficiently large constant $c$. Construct $C$ by sampling $t$ columns of $A$, each set to $\frac{1}{\sqrt{tp_i}}a_i$ with probability $p_i$.
With probability $1-\delta$, for any rank-$k$ orthogonal projection $P \in \R^{n \times n}$,
\begin{align*}
(1-\epsilon)\|A - PA\|^2_F \leq \|C - PC\|^2_F\leq (1+\epsilon)\|A - PA\|^2_F.
\end{align*}
We refer to $C$ as an $(\epsilon,k)$-column PCP of $A$.
\end{lemma}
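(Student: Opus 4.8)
The plan is to reduce the statement to a single randomized event---a spectral approximation obtained from a matrix Chernoff bound---and then derive the projection-cost guarantee deterministically from it, so that no union bound over the (infinitely many) projections $P$ is needed. Write $K \eqdef \norm{A-A_k}_F^2/k$ for the ridge parameter built into the scores, and let $P_k$ be the orthogonal projection onto the top-$k$ eigenspace of $AA^T$. The rescaling in the construction of $C$ makes $\E[CC^T]=AA^T$, and, since $\norm{M-PM}_F^2 = \tr(MM^T)-\tr(MM^TP)$ for any matrix $M$ and projection $P$, the target is equivalent to
\begin{align*}
\left|\tr(E(I-P))\right| \le \epsilon\norm{A-PA}_F^2 \quad\text{for every rank-}k\text{ projection }P, \qquad E \eqdef CC^T-AA^T.
\end{align*}

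\emph{Step 1: spectral sandwich.} First I would show that with probability $1-\delta$,
\begin{align*}
(1-\epsilon)(AA^T+KI) \preceq CC^T+KI \preceq (1+\epsilon)(AA^T+KI).
\end{align*}
Writing $CC^T=\sum_{j=1}^t y_j y_j^T$ with $y_j = a_{i_j}/\sqrt{t p_{i_j}}$ and conjugating by $(AA^T+KI)^{-1/2}$, this becomes an intrinsic-dimension matrix Bernstein/Chernoff bound for $\sum_j X_j$, $X_j = (AA^T+KI)^{-1/2} y_j y_j^T (AA^T+KI)^{-1/2}$. Its two inputs are: (i) the per-term operator norm $\norm{X_j} = \tau^k_{i_j}(A)/(t p_{i_j}) \le (\sum_i \tilde\tau^k_i)/t = \epsilon^2/(c\log(k/\delta))$, using $p_i = \tilde\tau^k_i/\sum_i\tilde\tau^k_i$, the overestimate property $\tau^k_i(A)\le\tilde\tau^k_i$, and the value of $t$; and (ii) the intrinsic dimension $\tr((AA^T+KI)^{-1}AA^T) = \sum_i \tau^k_i(A) \le 2k$ by Lemma~\ref{lem:sum}. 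Input (ii) is exactly what lets $t$ carry $\log(k/\delta)$ rather than $\log(n/\delta)$. This is the only step that can fail.

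\emph{Step 2: from the sandwich to the PCP.} The sandwich alone is not enough: applied to a term that is not confined to an $O(k)$-dimensional subspace, the ridge $KI$ contributes a spurious $\approx (n-k)K = \tfrac{n-k}{k}\norm{A-A_k}_F^2 \gg \norm{A-PA}_F^2$. The remedy is a complementary \emph{scalar} concentration bound plus a block decomposition. For the scalar bound, $(AA^T+KI)^{+}\succeq \tfrac{1}{\sigma_{k+1}^2+K}P_k^{\perp}$ together with $\sigma_{k+1}^2 \le \norm{A-A_k}_F^2 = kK$ gives $\norm{P_k^{\perp}a_i}^2 \le (k+1)K\,\tau^k_i(A)$; hence $\max_i \tfrac{\norm{P_k^{\perp}a_i}^2/\norm{A-A_k}_F^2}{p_i} \le \tfrac{k+1}{k}\sum_i\tilde\tau^k_i$, which for the given $t$ makes a scalar Chernoff bound on the nonnegative variables $\norm{P_k^{\perp}a_{i_j}}^2/(t p_{i_j})$ yield, with probability $1-\delta$,
\begin{align*}
\left|\tr(E\,P_k^{\perp})\right| = \left|\norm{P_k^{\perp}C}_F^2 - \norm{A-A_k}_F^2\right| \le \epsilon\norm{A-A_k}_F^2.
\end{align*}
Now split $E = P_k E P_k + (P_k E P_k^{\perp} + P_k^{\perp}E P_k) + P_k^{\perp}E P_k^{\perp}$ and pair each block with $I-P$. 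Head--head: by the sandwich, $|\tr(P_k E P_k(I-P))| \le \epsilon(\tr((AA^T)_k(I-P)) + K\tr(P_k(I-P)P_k)) \le 2\epsilon\norm{A-PA}_F^2$, using $\tr((AA^T)_k(I-P))\le\norm{A-PA}_F^2$ (a consequence of the optimality of $A_k$) and $K\tr(P_k(I-P)P_k)\le Kk = \norm{A-A_k}_F^2 \le \norm{A-PA}_F^2$---the ridge survives only at scale $Kk$ because it is confined to the rank-$k$ head. Tail--tail: $\tr(P_k^{\perp}E P_k^{\perp}(I-P)) = \tr(E P_k^{\perp}) - \tr(E\,P_k^{\perp}PP_k^{\perp})$, the first term controlled by the scalar estimate (this is exactly where the $(n-k)K$ would otherwise reappear) and the second by $\epsilon(\tr((AA^T)_{\setminus k}P) + K\norm{PP_k^{\perp}}_F^2) \le 2\epsilon\norm{A-A_k}_F^2$ since it is paired with the rank-$k$ projection $P$. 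The cross block, using $\tr(P_k E P_k^{\perp})=0$, equals $-2\tr(E\,P_k^{\perp}PP_k)$, which I would bound by a weighted Cauchy--Schwarz against the head and tail quantities already estimated. Adding the blocks, using $\norm{A-A_k}_F^2 \le \norm{A-PA}_F^2$, rescaling $\epsilon$ by an absolute constant and absorbing it into $c$ completes the argument; this is Theorem~6 of \cite{cohen2015ridge}, whose proof we follow.

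I expect the real obstacle to be Step 2---arranging the block decomposition so that \emph{no} term blows up. The spectral sandwich is scale-blind, so any block that touches the full $(n-k)$-dimensional tail threatens to absorb the $(n-k)K$ ridge mass; the exact choice $K=\norm{A-A_k}_F^2/k$, the separate scalar control of $\norm{P_k^{\perp}C}_F^2$, and the care to make every block either localize to the rank-$k$ head or pair with the rank-$k$ projection $P$ all exist to quarantine that blow-up into the single quantity $\tr(E P_k^{\perp})$, which is then handled head-on. Checking that the cross terms and the head--tail interaction really land at $O(\epsilon)\norm{A-PA}_F^2$ rather than $O(\epsilon)\norm{A}_F^2$ is the delicate bookkeeping at the heart of the proof.
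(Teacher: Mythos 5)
The paper does not prove this lemma; it imports it verbatim as Theorem~6 of \cite{cohen2015ridge}. Your reconstruction follows the actual argument of that reference faithfully: an intrinsic-dimension matrix Chernoff bound gives the spectral sandwich $(1-\epsilon)(AA^T+KI)\preceq CC^T+KI\preceq(1+\epsilon)(AA^T+KI)$ with sample complexity governed by $\sum_i\tau_i^k\le 2k$, a separate scalar concentration controls $\norm{P_k^\perp C}_F^2$, and the head/tail/cross decomposition paired against $I-P$ (with the ridge mass quarantined by pairing every block with either the rank-$k$ head $P_k$ or the rank-$k$ projection $P$) assembles the projection-cost guarantee -- exactly the PCP machinery of \cite{cohen2015dimensionality} adapted to ridge scores as in \cite{cohen2015ridge}. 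The one small inaccuracy is the unqualified claim $\lvert\tr(P_k E P_k^\perp)\rvert=0$; $E$ need not commute with $P_k$, so the head--tail cross block $\tr\bigl((P_kEP_k^\perp+P_k^\perp EP_k)(I-P)\bigr)$ must be bounded directly (by Cauchy--Schwarz against the head and tail pieces, as you do for the other cross term) rather than dismissed, but this is a local bookkeeping slip that your own weighted Cauchy--Schwarz step already handles.
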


Since the 	`cost' $\norm{A-PA}_F^2$ of any rank-$k$ projection of $A$ is preserved by $C$, 
any near-optimal low-rank approximation of $C$ yields a near optimal low-rank approximation of $A$. Further, $C$ is much smaller than $A$, so such a low-rank approximation can be computed quickly.
%
The difficulty is in computing the approximate leverage scores. To do this,
we use the main result from \cite{mm16}:
\begin{lemma}[Corollary of Theorem 20 of \cite{mm16}]\label{thm:originalSampling}
There is an algorithm that given any PSD matrix $A \in \R^{n\times n}$, runs in $O(n(k\log (k/\delta))^{\omega-1})$ time, accesses $O(nk \log (k/\delta))$ entries of $A$, and
 returns for each $i \in [1,..,n]$, $\tilde \tau^k_i(A^{1/2})$ such that with probability $1-\delta$, for all $i$:
\begin{align*}
\tau^k_i(A^{1/2}) \le \tilde \tau^k_i(A^{1/2}) \le 3\tau^k_i(A^{1/2}).
\end{align*}
\end{lemma}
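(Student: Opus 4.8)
The plan is to obtain this as a direct corollary of Theorem~20 of \cite{mm16}, which already supplies a recursive ridge-leverage-score sampling algorithm that, for a PSD matrix $K$ and a ridge parameter $\lambda$, accesses only entries of $K$ (never forming $K^{1/2}$) and returns constant-factor overestimates of the $\lambda$-ridge leverage scores of $K^{1/2}$, with running time and query complexity governed by the $\lambda$-effective dimension $d_\lambda \eqdef \tr(K(K+\lambda I)^{+})$. First I would set $K = A$, so that $K^{1/2} = \ha$, and take $\lambda = \norm{\ha - \ha_k}_F^2/k$, which is exactly the ridge appearing in Definition~\ref{def:ridgeScores}; since the eigenvalues of $\ha$ are the square roots of those of $A$, this is $\lambda = \frac1k\sum_{i>k}\lambda_i(A)$, and with this choice the $\lambda$-ridge leverage scores of $\ha$ are precisely the rank-$k$ scores $\tau_i^k(\ha)$.

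The complexity bookkeeping is then routine. The recursive algorithm of \cite{mm16} samples $\bar s = O(d_\lambda\log(d_\lambda/\delta))$ columns at each of its $O(\log n)$ recursion levels. Since $\sum_i \tau_i^k(\ha) = \tr(A(A+\lambda I)^{+}) = d_\lambda$, Lemma~\ref{lem:sum} applied to $\ha$ gives $d_\lambda \le 2k$, hence $\bar s = O(k\log(k/\delta))$. At the top level the algorithm computes a pseudoinverse of an $\bar s\times \bar s$ matrix and applies it to the $n\times\bar s$ block of sampled columns of $A$ to read off the scores; via blocked fast matrix multiplication this costs $O(n\bar s^{\omega-1})$ time and reads $O(n\bar s)$ entries of $A$, and the geometrically shrinking lower levels contribute only a constant factor. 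Substituting $\bar s = O(k\log(k/\delta))$ yields the claimed $O\!\big(n(k\log(k/\delta))^{\omega-1}\big)$ time and $O(nk\log(k/\delta))$ accesses.

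It remains to pin down the approximation factor and the failure probability. Theorem~20 of \cite{mm16} produces overestimates within some universal constant $c_0$ of the true scores; if $c_0>3$ one simply runs the procedure with its internal accuracy parameter tightened, which changes runtime and query complexity by constant factors only, so that $\tau_i^k(\ha) \le \tilde\tau_i^k(\ha) \le 3\tau_i^k(\ha)$ holds for all $i$. For the failure probability, one union-bounds over the $O(\log n)$ recursion levels; as argued in \cite{mm16} each level only needs to fail with probability $1/\poly(\bar s)$ rather than $1/\log n$, so the overall guarantee holds with probability $1-\delta$ with the stated $\log(k/\delta)$ (rather than $\log(kn/\delta)$) dependence.

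The only genuinely non-bookkeeping point --- and the one I would flag as the heart of the argument --- is that the algorithm never has access to $\ha$ nor to the tail quantity $\norm{\ha-\ha_k}_F^2 = \sum_{i>k}\lambda_i(A)$ that defines $\lambda$; it must estimate this on the fly from the sampled columns of $A$ inside the recursion, which is precisely what the construction of \cite{mm16} does and is what makes the whole computation sublinear. Thus the main obstacle is not a new calculation but carefully matching the generic-ridge parameterization of \cite{mm16} to the rank-$k$ ridge of Definition~\ref{def:ridgeScores} and verifying $d_\lambda = \Theta(k)$; once that correspondence is in place, the stated bounds are a direct specialization.
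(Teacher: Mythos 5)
Your proposal is correct and follows essentially the same route as the paper: invoke Theorem~20 of \cite{mm16} with ridge parameter $\lambda = \norm{\ha-\ha_k}_F^2/k$, observe via Lemma~\ref{lem:sum} that the effective dimension and hence sample size is $O(k\log(k/\delta))$, and tally up the $O(n\bar s)$ accesses and $O(n\bar s^{\,\omega-1})$ time for the top-level pseudoinverse step. The only thing the paper makes more explicit than you do is the mechanism by which the scores are read off without forming $\ha$ --- namely the identity $x_i^T(\ha SS^T\ha + \lambda I)^{-1}x_i = \frac{1}{\lambda}\bigl(A - AS(S^TAS+\lambda I)^{-1}S^TA\bigr)_{i,i}$ from Lemma~6 of \cite{mm16} --- but you correctly flag this as the crux and correctly attribute it to the \cite{mm16} construction.
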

\begin{proof}
Theorem 20 of \cite{mm16} shows that by using a recursive ridge leverage score sampling algorithm, it is possible to return (with probability $1-\delta$) a sampling matrix $S \in \mathbb{R}^{n \times s}$ with $s = O \left (k\log(k/ \delta) \right )$ such that, letting $\lambda = \frac{1}{k}\norm{\ha-\ha_k}_F^2$:
\begin{align*}
\frac{1}{2} \left ( A + \lambda I \right ) \preceq \left (\ha SS^T \ha + \lambda I \right ) \preceq \frac{3}{2} \left (A + \lambda I \right)
\end{align*}
where $M \preceq N$ indicates $x^TMx \le x^TN x$ for all $x$. If
we set $\tilde \tau^k_i(\ha) = 2 \cdot x_i^T \left (\ha SS^T \ha + \lambda I \right )^{+}  x_i$, where $x_i$ is the $i^{th}$ column of $A^{1/2}$ we have the desired bound. Of course, we cannot directly compute this value without factoring $A$ to form $A^{1/2}$. However, as shown in Lemma 6 of \cite{mm16}:
\begin{align*}
x_i^T \left (\ha SS^T \ha + \lambda I \right )^{-1}  x_i = \frac{1}{\lambda} \left ( A - AS(S^TAS + \lambda I)^{-1} S^TA \right )_{i,i}.
\end{align*}

Computing $(S^TAS + \lambda I)^{-1}$ requires $O(s^2) = O((k\log (k/\delta))^2)$ accesses to $A$ and $O(s^\omega) = O((k\log (k/\delta))^\omega)$ time. Computing all $n$ diagonal entries of $AS(S^TAS + \lambda I)^{-1} S^TA$ then requires $O(nk \log (k/\delta))$ accesses to $A$ and $ O(n(k\log (k/\delta))^{\omega-1})$ time.
With these entries in hand we can simply subtract from the diagonal entries of $A$ and rescale to give the final leverage score approximation. Critically, this calculation always reads \emph{all diagonal entries} of $A$, allowing it to identify rows containing large off diagonal entries and skirt the $\nnz(A)$ time lower bound for general matrices.

Note that the stated runtime in \cite{mm16} for outputting $S$ is $\tilde O(n k)$ accesses to $A$ (\emph{kernel evaluations} in the language of \cite{mm16}) and $\tilde O(nk^2)$ runtime. However this runtime is improved to $\tilde O(nk^{\omega-1})$ using fast matrix multiplication. 
\end{proof}

In order to apply Lemmas \ref{thm:pcp} and \ref{thm:originalSampling} to low-rank approximation of $A$, we now show that the ridge leverage scores of $\ha$ coarsely approximate those of $A$:

\begin{lemma}[Ridge Leverage Score Bound]\label{lem:scoreBound} For any PSD matrix $A \in \mathbb{R}^{n\times n}$:
\begin{align*}
\tau_i^k(A) \le 2\sqrt{\frac{n}{k}} \cdot \tau_i^k(\ha).
\end{align*}
\end{lemma}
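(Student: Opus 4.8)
The plan is to diagonalize $A$ and reduce the matrix inequality to a scalar inequality over the eigenvalues. Write $A = U\Lambda U^T$ with $U$ orthogonal and $\Lambda = \mathrm{diag}(\lambda_1,\dots,\lambda_n)$, $\lambda_1 \ge \dots \ge \lambda_n \ge 0$, so that $\ha = U\Lambda^{1/2}U^T$ and the $i$-th columns of $A$ and $\ha$ are $a_i = U\Lambda U^T e_i$ and $U\Lambda^{1/2}U^T e_i$. Since $AA^T = U\Lambda^2 U^T$ and $\ha\ha^T = U\Lambda U^T$, substituting into Definition~\ref{def:ridgeScores} and simplifying the pseudoinverse in the eigenbasis gives, with $v = U^T e_i$ (so $\norm{v}_2 = 1$),
\[
\tau_i^k(A) = \sum_{j=1}^n \frac{\lambda_j^2}{\lambda_j^2 + \mu_A}\, v_j^2, \qquad
\tau_i^k(\ha) = \sum_{j=1}^n \frac{\lambda_j}{\lambda_j + \mu_{\ha}}\, v_j^2,
\]
where $\mu_A = \tfrac1k\norm{A-A_k}_F^2 = \tfrac1k\sum_{j>k}\lambda_j^2$ and $\mu_{\ha} = \tfrac1k\norm{\ha-\ha_k}_F^2 = \tfrac1k\sum_{j>k}\lambda_j$. (If $\norm{A-A_k}_F = 0$ then both $\mu_A$ and $\mu_{\ha}$ vanish, every summand on each side equals $v_j^2\cdot\mathbf{1}[\lambda_j>0]$, and the claim is immediate; so assume $\mu_A,\mu_{\ha}>0$.) Since $v_j^2 \ge 0$, it suffices to prove the \emph{termwise} bound $\frac{\lambda^2}{\lambda^2+\mu_A} \le 2\sqrt{n/k}\cdot\frac{\lambda}{\lambda+\mu_{\ha}}$ for every $\lambda \ge 0$, the case $\lambda = 0$ being trivial.

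The single nontrivial ingredient is that the two ridge parameters are comparable: I claim $\mu_{\ha} \le \sqrt{n\mu_A/k}$. Indeed, by Cauchy--Schwarz over the $n-k$ trailing eigenvalues, $\bigl(\sum_{j>k}\lambda_j\bigr)^2 \le (n-k)\sum_{j>k}\lambda_j^2 \le n\cdot k\mu_A$, so $k\mu_{\ha} = \sum_{j>k}\lambda_j \le \sqrt{nk\mu_A}$, which rearranges to the claim. Given this, the termwise bound is elementary: clearing denominators it is equivalent to $\lambda(\lambda+\mu_{\ha}) \le 2\sqrt{n/k}\,(\lambda^2+\mu_A)$, and using $\mu_{\ha}\le\sqrt{n\mu_A/k}$ followed by AM--GM, $\lambda\mu_{\ha} \le \sqrt{n/k}\cdot\lambda\sqrt{\mu_A} \le \tfrac12\sqrt{n/k}\,(\lambda^2+\mu_A)$; combining with $\lambda^2 \le \sqrt{n/k}\,\lambda^2$ (valid since $k\le n$) yields $\lambda(\lambda+\mu_{\ha}) \le \tfrac32\sqrt{n/k}\,(\lambda^2+\mu_A) \le 2\sqrt{n/k}\,(\lambda^2+\mu_A)$, as required.

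The only places that require any care are bookkeeping ones: justifying the eigenbasis formula for the (possibly rank-deficient) pseudoinverse, and separating out the degenerate cases $\lambda=0$ and $\norm{A-A_k}_F=0$. I do not anticipate a genuine obstacle; all of the mathematical content sits in the Cauchy--Schwarz comparison $\mu_{\ha}^2 \le (n/k)\,\mu_A$ between the Frobenius tails of $A$ and of $\ha$, which is precisely where the $\sqrt{n/k}$ loss enters.
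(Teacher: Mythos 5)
Your proof is correct and follows essentially the same route as the paper: diagonalize $A$, reduce both ridge scores to weighted sums over the eigenvalues, and establish a termwise scalar inequality $\lambda(\lambda+\mu_{\ha}) \le 2\sqrt{n/k}\,(\lambda^2+\mu_A)$ using the $\ell_1/\ell_2$ comparison between the tails of $\ha$ and $A$ (the source of the $\sqrt{n/k}$ factor). The only difference is cosmetic: the paper verifies the scalar inequality by a two-case split on whether $\lambda_i$ is large or small relative to the $\ha$-tail, whereas you first isolate the clean bound $\mu_{\ha} \le \sqrt{n\mu_A/k}$ and then finish in one line via AM--GM.
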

\begin{proof}
We write $\ha$ in its eigendecomposition $\ha = U \Lambda^{1/2} U^T$, where $\Lambda_{i,i} = \lambda_i$ is the $i^{th}$ eigenvalue of $A$. Letting $x_i$ denote the $i^{th}$ column of $\ha$ we have:
\begin{align*}
\tau_i^k (\ha ) &= x_i^T \left (A + \frac{\norm{\ha-\ha_k}_F^2}{k} I \right )^{-1} x_i = x_i^T U \bar \Lambda U^T x_i
\end{align*} 
where
$
\bar \Lambda_{i,i} \eqdef \frac{1}{\lambda_i + \frac{1}{k}\sum_{j=k+1}^n \lambda_j}.
$
We can similarly write:
\begin{align*}
\tau_i^k(A) &= a_i^T \left (A^2 + \frac{\norm{A-A_k}_F^2}{k} I \right )^{-1} a_i\\
&= x_i^T \ha \left (A^2 + \frac{\norm{A-A_k}_F^2}{k} I \right )^{-1} \ha x_i\\
& = x_i^T U \hat \Lambda U^T x_i
\end{align*} 
where 
$
\hat \Lambda_{i,i} \eqdef \frac{\lambda_i}{\lambda_i^2 + \frac{1}{k}\sum_{j=k+1}^n \lambda_j^2}.
$
Showing 
$\hat \Lambda \preceq 2\sqrt{\frac{n}{k}} \cdot \bar \Lambda$ is enough to give the lemma. Specifically we must show, for all $i$, $\hat \Lambda_{i,i} \le  2\sqrt{\frac{n}{k}} \cdot \bar \Lambda_{i,i}$ which after cross-multiplying is equivalent to:
\begin{align}\label{finalGoal}
 \lambda_i^2 + \frac{1}{k}\lambda_i \sum_{j=k+1}^n \lambda_j \le 2\sqrt{\frac{n}{k}} \left( \lambda_i^2 + \frac{1}{k}\sum_{j=k+1}^n \lambda_j^2 \right ).
\end{align}
First consider the relatively large eigenvalues. Say we have $\frac{1}{k} \sum_{j=k+1}^n \lambda_j \le \sqrt{\frac{n}{k}} \lambda_i$. Then:
\begin{align*}
 \lambda_i^2 + \frac{1}{k}\lambda_i \sum_{j=k+1}^n \lambda_j &\le \left (1 + \sqrt{n/k} \right )\lambda_i^2
\end{align*}
which gives \eqref{finalGoal}.
Next consider small eigenvalues with $ \frac{1}{k}\sum_{j=k+1}^n \lambda_j \ge \sqrt{\frac{n}{k}} \lambda_i$. In this case:
\begin{align*}
 \lambda_i^2 + \frac{1}{k}\lambda_i \sum_{j=k+1}^n \lambda_j &\le \lambda_i^2 + \frac{1}{\sqrt{n}\cdot k^{3/2}} \left (\sum_{j=k+1}^n \lambda_j \right )^2\\
&\le \lambda_i^2 + \frac{1}{\sqrt{n} \cdot k^{3/2}} \cdot n \sum_{j=k+1}^n \lambda_j^2\tag{\text{Norm bound: }$\norm{\cdot}_1^2 \le n\norm{\cdot}_2^2$}\\
&\le \sqrt{\frac{n}{k}} \left (\lambda_i^2 + \frac{1}{k}\sum_{j=k+1}^n \lambda_j^2 \right )
\end{align*}
which gives \eqref{finalGoal}, completing the proof.
\end{proof}

Combining Lemmas \ref{lem:sum}, \ref{thm:pcp}, \ref{thm:originalSampling}, \ref{lem:scoreBound} we have:
\begin{corollary}[Fast PSD Ridge Leverage Score Sampling]\label{cor:samplingCor}
There is an algorithm that given any PSD matrix $A \in \mathbb{R}^{n\times n}$ runs in $\tilde O(n k^{\omega-1})$ time, accesses $\tilde O(nk)$ entries of $A$, and with prob. $1-\delta$ outputs a weighted sampling matrix $S_1 \in \R^{n \times \tilde O \left (\frac{\sqrt{nk}}{\epsilon^2} \right )}$ such that $AS_1$ is an $(\epsilon$,$k)$-column PCP of $A$.
\end{corollary}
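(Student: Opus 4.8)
The plan is to chain the four preceding lemmas. First I would run the algorithm of Lemma~\ref{thm:originalSampling} on the input PSD matrix $A$ with failure parameter $\delta/2$. This produces, in $O(n(k\log(k/\delta))^{\omega-1}) = \tilde O(nk^{\omega-1})$ time and $O(nk\log(k/\delta)) = \tilde O(nk)$ accesses to $A$, estimates $\tilde\tau_i^k(\ha)$ satisfying $\tau_i^k(\ha) \le \tilde\tau_i^k(\ha) \le 3\tau_i^k(\ha)$ for all $i$, with probability $1-\delta/2$. The point of this step is that it never forms $\ha$ explicitly: it reads only the diagonal of $A$ together with an $n\times O(k\log(k/\delta))$ submatrix, which is exactly what gives the sublinear access bound and sidesteps the $\nnz(A)$ lower bound.

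Next I would convert these into valid overestimates of the ridge leverage scores of $A$ itself. By Lemma~\ref{lem:scoreBound}, $\tau_i^k(A) \le 2\sqrt{n/k}\,\tau_i^k(\ha)$, so defining $\hat\tau_i \eqdef 2\sqrt{n/k}\,\tilde\tau_i^k(\ha)$ gives $\hat\tau_i \ge \tau_i^k(A)$ for every $i$ on the good event. The key quantitative observation is that this rescaling does not blow up the sum: combining the $3\times$ overestimation with the sum bound $\sum_i \tau_i^k(\ha) \le 2k$ of Lemma~\ref{lem:sum} yields $\sum_i \hat\tau_i = 2\sqrt{n/k}\sum_i \tilde\tau_i^k(\ha) \le 6\sqrt{n/k}\sum_i \tau_i^k(\ha) \le 12\sqrt{nk}$, which is $\tilde O(\sqrt{nk})$ rather than something growing linearly in $n$.

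Finally I would feed $\{\hat\tau_i\}$ into Lemma~\ref{thm:pcp} with parameters $\epsilon$ and $\delta/2$: it samples $t = \frac{c\log(k/\delta)}{\epsilon^2}\sum_i \hat\tau_i = O\!\left(\frac{\sqrt{nk}\log(k/\delta)}{\epsilon^2}\right) = \tilde O(\sqrt{nk}/\epsilon^2)$ columns, the $\ell$-th sampled index $i_\ell$ contributing the rescaled column $\frac{1}{\sqrt{tp_{i_\ell}}}a_{i_\ell}$ with $p_i = \hat\tau_i/\sum_j \hat\tau_j$. Taking $S_1 \in \R^{n\times t}$ to be the weighted sampling matrix whose $\ell$-th column is $\frac{1}{\sqrt{tp_{i_\ell}}}e_{i_\ell}$, the sampled matrix is precisely $AS_1$, and Lemma~\ref{thm:pcp} says it is an $(\epsilon,k)$-column PCP of $A$ with probability $1-\delta/2$. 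A union bound over the two bad events gives overall success probability $1-\delta$. For the resource count: constructing $S_1$ requires only the probabilities $p_i$ and internal randomness, so no accesses to $A$ beyond the $\tilde O(nk)$ of Lemma~\ref{thm:originalSampling}, and the additional work ($O(n)$ for rescaling, $O(t)$ for the draw) is dominated by the $\tilde O(nk^{\omega-1})$ cost of that lemma. There is no real obstacle here — the statement is a bookkeeping composition; the only things to get right are that the $O(\sqrt{n/k})$ blow-up of Lemma~\ref{lem:scoreBound} and the sum bound of Lemma~\ref{lem:sum} together keep $\sum_i \hat\tau_i = O(\sqrt{nk})$, and that the failure probability is split across the estimation and sampling steps.
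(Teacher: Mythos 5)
Your proposal is correct and is essentially identical to the paper's own proof: both obtain constant-factor estimates of $\tau_i^k(\ha)$ via Lemma~\ref{thm:originalSampling}, scale by $2\sqrt{n/k}$ using Lemma~\ref{lem:scoreBound} to overestimate $\tau_i^k(A)$, invoke Lemma~\ref{lem:sum} to bound the sum of scaled scores by $O(\sqrt{nk})$, and then apply Lemma~\ref{thm:pcp} to obtain the PCP guarantee with $t = \tilde O(\sqrt{nk}/\epsilon^2)$ sampled columns. Your additional care in explicitly splitting the failure probability and tracking the factor-3 overestimate is sound but does not change the argument.
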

\begin{proof}
By Lemma \ref{thm:originalSampling} we can compute constant factor approximations to the ridge leverage scores of $\ha$ in time $\tilde O(nk^{\omega-1})$. Applying Lemma \ref{lem:scoreBound}, if we scale these scores up by $2\sqrt{n/k}$ they will be overestimates of the ridge leverage scores of $A$.
If we set $t = O \left (\frac{\log(k/\delta)}{\epsilon^2} \cdot \sum \tilde \tau_i^k \right )$, and generate $S_1$ by sampling $t$ columns of $A$ with probabilities proportional to these estimated scores, by Lemma \ref{thm:pcp}, $AS_1$ will be an $(\epsilon,k)$-column PCP of $A$ with probability $1-\delta$.
By Lemma \ref{lem:sum}, $\sum_{i=1}^n \tau_i^k(\ha) \le 2k$. So we have $t = \tilde O(\sum \tilde \tau_i^k/\epsilon^2) = \tilde O(\sqrt{nk}/\epsilon^2)$. 
\end{proof}

Forming $AS_1$ requires reading just $\tilde O(n^{3/2}\sqrt{k}/\epsilon^2)$ entries of $A$. At this point, we could 
employ any input sparsity time algorithm to find a near optimal rank-$k$ projection $P$ for approximating $AS_1$ in $O(\nnz(AS_1)) + n\poly(k/\epsilon) = n^{3/2} \cdot \poly(k/\epsilon)$ time. 
This would in turn yield a near optimal low-rank approximation of $A$. 
However, as we will see in the next section, by further sampling the rows of $AS_1$, we can significantly improve this runtime.

\section{Row Sampling}\label{sec:row}

To achieve near linear dependence on $n$, we sample roughly $\sqrt{nk}$ rows from $AS_1$, producing an even smaller matrix $S_2^T A S_1$, which we can afford to fully read and from which we can form a near optimal low-rank approximation to $AS_1$ and consequently to $A$. 
However, sampling $AS_1$ is challenging: we cannot employ input sparsity time methods as we cannot afford to read the full matrix, and since it is no longer PSD, we cannot apply the same approach we used for $A$, approximating the ridge leverage scores with those of $\ha$. 

Rewriting Definition \ref{def:ridgeScores} using the SVD $AS_1 = U\Sigma V^T$ (and transposing $AS_1$ to give row instead of column scores) we see that the row ridge leverage scores of $AS_1$ are the diagonal entries of: 
$$AS_1 \left (S_1^T A^T A S_1 + \frac{\norm{AS_1-(AS_1)_k}_F^2}{k} I\right)^+ S_1^TA^T = U \bar \Sigma U^T$$ 
where $\bar \Sigma_{i,i} = \frac{\Sigma_{i,i}^2}{\Sigma_{i,i}^2 + \frac{\norm{AS_1-(AS_1)_k}_F^2}{k}}$. That is, the row ridge leverage scores depend only on the column span $U$ of $AS_1$ and its spectrum. Since $AS_1$ is a column PCP of $A$ this gives hope that the two matrices have similar row ridge leverage scores. 

Unfortunately, this is \emph{not the case}. It is possible to have rows in $AS_1$ with ridge leverage scores significantly higher than in $A$. Thus, even if we knew the ridge leverage scores of $A$, we would have to scale them up significantly to sample from $AS_1$. As an example,
consider $A$ with relatively uniform ridge leverage scores: $\tau_i(A) \approx k/n$ for all $i$. When a column is selected to be included in $AS_1$ it will be reweighted by roughly a factor of $\sqrt{n/k}$. Now, append a number of rows to $A$ each with very small norm and just a containing single non-zero entry. These rows will have little effect on the ridge leverage scores if their norms are small enough. However, if the column corresponding to the nonzero in a row is selected, the row will appear in $AS_1$ with $\sqrt{n/k}$ times the weight that it appears in $A$, and its ridge leverage score will be roughly a factor $n/k$ times higher. 

Fortunately, we are still able to show that sampling the rows of $AS_1$ by the rank $k' = O(k/\epsilon)$ leverage scores of $\ha$ scaled up by a $\sqrt{n/k'}$ factor yields a row PCP for this matrix. Our proof works not with the ridge scores of $AS_1$ but with  
the \emph{standard leverage scores} of a near optimal low-rank approximation to this matrix -- specifically the approximation given by projecting onto the top eigenvectors of $A$. We have:

\begin{lemma}[Row PCP]\label{leveragePCP} For any PSD $A \in \R^{n \times n}$ and $\epsilon \le 1$ let $k' = \lceil c k/\epsilon \rceil$ and let $\tilde \tau_i^{k'}(\ha) \ge \tau_i^{k'}(\ha)$ be an overestimate for the $i^{th}$ rank-$k'$ ridge leverage score of $\ha$. Let $\tilde \ell_i = \sqrt{\frac{16n\epsilon}{k}} \cdot \tilde \tau_i^{k'}(\ha) $, $p_i = \frac{\tilde \ell_i}{\sum_i \tilde \ell_i}$, and $t = \frac{c'\log n}{\epsilon^2} \sum_i \tilde \ell_i$.
Construct weighted sampling matrices $S_1,S_2 \in \R^{n \times t}$ each whose $j^{th}$ column is set to $\frac{1}{\sqrt{tp_i}} e_i$ with probability $p_i$.

For sufficiently large constants $c,c'$, with probability $\frac{99}{100}$, letting $\tilde A = S_2^T A S_1$, for any rank-$k$ orthogonal projection $P \in \mathbb{R}^{t \times t}$:
\begin{align*}
(1-\epsilon) \norm{AS_1(I-P)}_F^2 \le \norm{\tilde A(I-P)}_F^2 + \Delta \le (1+\epsilon) \norm{AS_1(I-P)}_F^2
\end{align*}
 for some fixed $\Delta$ (independent of $P$) with $|\Delta| \le 600\norm{A-A_k}_F^2$. We refer to $\tilde A$ as an $(\epsilon,k)$-row PCP of $AS_1$.
\end{lemma}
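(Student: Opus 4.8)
The plan is to decompose $AS_1$ into a low-rank "head" $H$ obtained by projecting $AS_1$ onto $A$'s top eigenvectors, plus a "tail" $E = AS_1 - H$, and to control the two pieces separately under row sampling. For the head, I would argue that the standard leverage scores of $H$ are upper bounded by $\sqrt{n/k'}\cdot \tau_i^{k'}(\ha)$ (up to constants). This should follow from the observation that $H$'s rows live in (a reweighted copy of) the span of $A$'s top $k'$ eigenvectors: since $AS_1$ is a column PCP of $A$, projecting onto $A_{k'}$'s eigenspace captures a near-optimal rank-$k'$ approximation of $AS_1$, and the leverage scores of that projection are governed by how the $i$-th standard basis vector overlaps the top eigenspace of $A$ — exactly what the rank-$k'$ ridge leverage scores of $\ha$ measure, after accounting for the $S_1$ reweighting which costs the $\sqrt{n/k'}$ factor (this is the same phenomenon isolated in Lemma~\ref{lem:scoreBound}, now in "standard leverage score" rather than "ridge" form). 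Hence sampling rows by $\tilde\ell_i$ oversamples the leverage scores of $H$, and a standard matrix-Chernoff / subspace-embedding argument gives that $S_2$ is a $(1\pm\epsilon)$ subspace embedding for the column span of $H^T$; in particular $\|S_2^T H(I-P)\|_F^2 = (1\pm\epsilon)\|H(I-P)\|_F^2$ for \emph{every} $P$ simultaneously, since $H(I-P)$ has columns in a fixed $k'$-dimensional space.

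For the tail $E$, I cannot hope for a multiplicative guarantee — $E$ can have huge, spiky row leverage scores, which is precisely the obstruction described in the text. Instead I would use that row sampling with the weights $1/(tp_i)$ is \emph{unbiased}: $\E[\|S_2^T E(I-P)\|_F^2] = \|E(I-P)\|_F^2$, and more importantly $\E[\,\langle S_2^T E(I-P), S_2^T H(I-P)\rangle\,]$ and the cross terms behave well in expectation. The key quantitative input is that $\|E\|_F^2 = \|AS_1 - (AS_1 \text{ projected onto } A_{k'})\|_F^2$ is, by the column-PCP property plus the choice $k' = O(k/\epsilon)$, at most $O(\|A - A_{k'}\|_F^2) \le O(\|A-A_k\|_F^2)$ — this is where the constant in $|\Delta|\le 600\|A-A_k\|_F^2$ comes from. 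I would then set $\Delta$ to absorb the (sampled) tail energy and the head–tail cross terms into a single $P$-independent quantity: concretely something like $\Delta \eqdef \|H\|_F^2 - \|S_2^T H\|_F^2 - (\text{sampled tail and cross terms evaluated at } P=0)$, and show via a second-moment / Markov argument (or a Matrix-Bernstein bound on the relevant sums) that the $P$-dependent residual is $\le \epsilon\|AS_1(I-P)\|_F^2$ with probability $99/100$. The facts that $\tau_i^{k'}(\ha)$ appear (not $\tau_i^k$) and that we scale by $\sqrt{16n\epsilon/k}$ rather than $\sqrt{n/k'}$ are exactly the slack needed to make these concentration bounds go through with room to spare.

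I would organize it as: (i) Define $H, E$ and prove $\|E\|_F^2 = O(\|A-A_k\|_F^2)$ using Corollary~\ref{cor:samplingCor} and $k'=\Theta(k/\epsilon)$. (ii) Prove the leverage-score domination $\ell_i(H) \le \tilde\ell_i$ (this is the analogue of Lemma~\ref{lem:scoreBound}); here I'd expand $H$ via its relation to the top eigenspace of $A$ and bound the $i$-th leverage score by $\sqrt{n/k}\,\tau_i^{k'}(\ha)$. (iii) Invoke the standard leverage-score subspace-embedding guarantee on $H$ to get the head bound for all $P$. (iv) Bound the tail and cross terms in expectation, pick $\Delta$, and apply Markov to the $P$-dependent part, using a net over rank-$k$ projections (or the fact that the bad event can be written in terms of a fixed spectral quantity) to make it hold for all $P$ at once.

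The main obstacle I anticipate is step (iv): getting a \emph{uniform-over-all-}$P$ bound on the $P$-dependent residual coming from the tail and the head–tail cross terms, since $E$ has no bounded leverage scores and a naive union bound over projections fails. The way I expect to beat this is to note that the residual, as a function of $P$, is a quadratic form in the rank-$2k'$-ish object $S_2^T AS_1$ restricted to an appropriate subspace, so it suffices to control a fixed small matrix (e.g. show $\|S_2^T E\|_F^2$ and the operator norm of the relevant cross-term matrix are small), after which the bound for every rank-$k$ $P$ follows by Cauchy–Schwarz / trace inequalities rather than a net. Establishing that the sampled tail Frobenius norm $\|S_2^T E\|_F^2$ concentrates (it is a sum of independent nonnegative terms with bounded expectation $\|E\|_F^2$) is routine by Markov, and that is ultimately what caps $|\Delta|$.
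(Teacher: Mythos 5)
Your plan tracks the paper's own proof closely: the same head/tail split of $C=AS_1$ by projecting onto $A$'s top eigenvectors, the same use of a Lemma-\ref{lem:scoreBound}-style bound to show the ridge scores of $\ha$ dominate the standard column leverage scores of the head (so $S_2$ is a subspace embedding for $C_H$, giving a multiplicative bound uniform in $P$), the same definition of $\Delta$ as the $P$-independent sampled-tail deficit controlled by Markov, and the same use of $\operatorname{rank}(P)=k$ together with a spectral-norm bound on $C_T$ and $S_2^T C_T$ to kill the $P$-dependent tail residual.

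Two places where your outline is underspecified in ways that actually matter. First, the head cannot simply be ``the top $k'$ eigenvectors'': the paper takes $U_H = U_m$ where $m$ is the largest index with $\lambda_m^2 \ge \frac{1}{k'}\|A-A_{k'}\|_F^2$. You need $\lambda_m^2$ above this threshold so that the Lemma-\ref{lem:aux1} leverage-score domination applies to $U_H$, \emph{and} you need $\lambda_{m+1}^2$ below it so that $\|C_T\|_2^2 = O(\tfrac{\epsilon}{k})\|A-A_k\|_F^2$; the fixed cut at $k'$ does not guarantee either inequality. (Also a small correction: the $\sqrt{n/k'}$ factor comes from relating $\ha$'s ridge scores to $U_H$'s row norms via Lemma \ref{lem:scoreBound}, not from the $S_1$ reweighting — the column leverage scores of $C_H = U_HU_H^T A S_1$ do not depend on $S_1$ at all.) Second, the cross term is the genuinely tricky step, and ``bound the operator norm of the relevant cross-term matrix, then Cauchy--Schwarz'' is not by itself a complete plan: $|\tr((I-P)C_T^T S_2S_2^T C_H(I-P))|$ has to be compared against $\|C(I-P)\|_F^2$ (not just a fixed quantity), and the paper's route is to insert the projection $(C^TC)^+(C^TC)$ (legal because $C_H^T$ lies in $C$'s row span), apply Cauchy--Schwarz in the semi-inner product $\tr(M(C^TC)^+N^T)$ to peel off a $\|C(I-P)\|_F$ factor, and then bound the remaining fixed matrix $\|C_T^T S_2 S_2^T U_H\|_F$ by an approximate matrix multiplication argument. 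That specific trick is the piece you would still need to discover, and without it a naive operator-norm or nuclear-norm bound on the cross matrix does not give the required $\epsilon\|C(I-P)\|_F^2$ form.
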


Note that by Lemma \ref{lem:sum}, $\sum_i \tau_i^{k'}(\ha) = O(k/\epsilon)$. So if $\tilde \tau_i^{k'}(\ha)$ is a constant factor approximation to $\tau_i^{k'}(\ha)$, $t = O \left (\frac{\sqrt{nk}\log n}{\epsilon^{2.5}} \right)$. Also note that the Lemma requires both $S_1$ and $S_2$ to be sampled using the rank $k'$ ridge scores. If we sample $S_1$ using a sum of the rank-$k$ and rank-$k'$ ridge scores (appropriately scaled) Lemma \ref{leveragePCP} and Lemma \ref{thm:pcp} will hold simultaneously. 

By applying an input sparsity time low-rank approximation algorithm to $\tilde A$ (which has just $\tilde O \left (\frac{nk}{\epsilon^5} \right )$ entries) we can find a near optimal low-rank approximation of $AS_1$, and thus for $A$. However, in our final algorithm, we will take a somewhat different approach. We are able to show that using appropriate sampling probabilities, we can in fact sample $\tilde A$ which is a projection-cost preserving sketch of $AS_1$ for \emph{spectral norm} error. As we will see, recovering a near optimal spectral norm low-rank approximation to $AS_1$ suffices to recover a near optimal Frobenius norm approximation to $A$, and allows us to improve $\epsilon$ dependencies in our final runtime. 

\begin{lemma}[Spectral Norm Row PCP]\label{leveragePCPspectral} For any PSD $A \in \R^{n \times n}$, and $\epsilon < 1$ let $k' = \lceil ck/\epsilon^2 \rceil$ and $\tilde \tau_i^{k'}(\ha) \ge \tau_i^{k'}(\ha)$ be an overestimate for the $i^{th}$ rank-$k'$ ridge leverage score of $\ha$. Let $\tilde \ell_i = 4\epsilon \sqrt{\frac{n}{k}} \tau_i^{k'}(\ha)$, $p_i = \frac{\tilde \ell_i}{\sum_i \tilde \ell_i}$, and $t = \frac{c '\log n}{\epsilon^2}\cdot \sum_i \tilde \ell_i$. Construct weighted sampling matrices $S_1,S_2 \in \R^{n \times t}$, each whose $j^{th}$ column is set to $\frac{1}{\sqrt{tp_i}} e_i$ with probability $p_i$.

For sufficiently large constants $c,c'$, with high probability (i.e. probability $\ge 1 -1/n^d$ for some large constant $d$), letting $\tilde A = S_2^T A S_1$, for any orthogonal projection $P \in \mathbb{R}^{t \times t}$:
\begin{align*}
(1-\epsilon) \norm{AS_1(I-P)}_2^2 - \frac{\epsilon}{k}\norm{A-A_k}_F^2 \le \norm{\tilde A(I-P)}_2^2 \le (1+\epsilon) \norm{AS_1(I-P)}_2^2 + \frac{\epsilon}{k}\norm{A-A_k}_F^2.
\end{align*}
We refer to $\tilde A$ as an $(\epsilon,k)$-spectral PCP of $AS_1$. 
\end{lemma}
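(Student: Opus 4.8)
The plan is to follow the route behind the Frobenius-norm row PCP (Lemma~\ref{leveragePCP}) but to carry the spectral norm through, splitting $AS_1$ according to the eigenspaces of $A$ rather than of $AS_1$. Write $A=U\Lambda U^T$ with $\lambda_1\ge\lambda_2\ge\cdots\ge0$ and set $\lambda'\eqdef\frac1{k'}\norm{\ha-\ha_{k'}}_F^2=\frac1{k'}\sum_{j>k'}\lambda_j$, the ridge parameter defining $\tau^{k'}(\ha)$. Let $\hat U$ collect the eigenvectors of $A$ with eigenvalue at least $\lambda'$; a short counting argument (as in the proof of Lemma~\ref{lem:scoreBound}) gives $\hat k\eqdef\rank(\hat U\hat U^T)\le 2k'$. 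Decompose $AS_1=H+E$ with $H=\hat U\hat U^TAS_1$ (the ``head'', rank $\le\hat k$) and $E=(I-\hat U\hat U^T)AS_1$ (the ``tail''). Since $\norm{I-P}_2\le1$, the lemma follows once we show, with probability $\ge1-n^{-d}$: (a) $\norm{S_2^TH(I-P)}_2=(1\pm O(\epsilon))\norm{H(I-P)}_2$ for every orthogonal projection $P$; and (b) $\norm{E}_2^2\le O(\epsilon^2/k)\norm{A-A_k}_F^2$ and $\norm{S_2^TE}_2^2\le O(\epsilon^2/k)\norm{A-A_k}_F^2$. Indeed, two nested applications of the triangle inequality — $\norm{S_2^TAS_1(I-P)}_2\in\norm{S_2^TH(I-P)}_2\pm\norm{S_2^TE}_2$ and $\norm{H(I-P)}_2\in\norm{AS_1(I-P)}_2\pm\norm{E}_2$ — followed by squaring and the bound $2ab\le\epsilon a^2+\epsilon^{-1}b^2$ with $b^2\le O(\epsilon^2/k)\norm{A-A_k}_F^2$ (so that $\epsilon^{-1}b^2\le O(\epsilon/k)\norm{A-A_k}_F^2$), yield exactly the claimed two-sided bound after rescaling $\epsilon$ by a constant.

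For (a) I would argue that $S_2$ is a one-sided $\epsilon$-subspace embedding for $\mathrm{col}(\hat U)$, i.e.\ $(1-\epsilon)I\preceq\hat U^TS_2S_2^T\hat U\preceq(1+\epsilon)I$; since every column of $H(I-P)$ lies in $\mathrm{col}(\hat U)$, this is exactly (a). The row leverage scores of $\hat U$ satisfy $\norm{\hat U^Te_i}_2^2=\sum_{j:\lambda_j\ge\lambda'}(U^Te_i)_j^2\le 2\sum_j\tfrac{\lambda_j}{\lambda_j+\lambda'}(U^Te_i)_j^2=2\tau_i^{k'}(\ha)$, using $\tfrac{\lambda_j}{\lambda_j+\lambda'}\ge\tfrac12$ on the retained directions. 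Hence the sampling probabilities $p_i\propto\tilde\ell_i\propto\tau_i^{k'}(\ha)$ dominate the normalized leverage scores of $\hat U$ up to a constant, and $\sum_i\tau_i^{k'}(\ha)=O(k')$ by Lemma~\ref{lem:sum}; by the standard leverage-score subspace-embedding guarantee it suffices that $t=\Omega(\epsilon^{-2}\log n)\cdot\sum_i\tau_i^{k'}(\ha)$. The point of the extra $\epsilon\sqrt{n/k}=\Theta(\sqrt{n/k'})$ factor in $\tilde\ell_i$ is precisely to inflate $t=\frac{c'\log n}{\epsilon^2}\sum_i\tilde\ell_i$ to at least this size — which is why the lemma commits to $k'=\Theta(k/\epsilon^2)$, so that $\sqrt{n/k'}\ge1$ whenever $k'\le\rank A$ (if $k'>\rank A$ then $\hat U\hat U^T=I$, $E=0$, and there is nothing to prove).

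For (b) the key point is that $\tilde\ell_i$ is simultaneously a valid overestimate of the rank-$k'$ \emph{column} ridge leverage scores of $A$: by Lemma~\ref{lem:scoreBound}, $\tau_i^{k'}(A)\le 2\sqrt{n/k'}\,\tau_i^{k'}(\ha)\le\tilde\ell_i$. So $AS_1$ is also a ridge-leverage-score column sample of $A$ with ridge $\lambda''\eqdef\frac1{k'}\norm{A-A_{k'}}_F^2\le\frac1{k'}\norm{A-A_k}_F^2$, and the spectral form of the ridge-leverage-score sampling guarantee (a standard companion to Lemma~\ref{thm:pcp}; cf.\ the argument in the proof of Lemma~\ref{thm:originalSampling}) gives $\norm{E}_2^2=\norm{(I-\hat U\hat U^T)AS_1}_2^2\le(1+\epsilon)(\lambda_{\hat k+1}^2+\lambda'')$; using $\lambda_{\hat k+1}<\lambda'$, the choice $k'=\Theta(k/\epsilon^2)$, and $\lambda''\le\frac1{k'}\norm{A-A_k}_F^2$, both terms are $O(\epsilon^2/k)\norm{A-A_k}_F^2$ (the case $\lambda'=0$, i.e.\ $\rank A\le k'$, is trivial since then $E=0$). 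To transfer this to $\norm{S_2^TE}_2$ I would apply matrix Bernstein to $(S_2^TE)^T(S_2^TE)=\sum_{\ell=1}^t\tfrac1{tp_{i_\ell}}(e_{i_\ell}^TE)^T(e_{i_\ell}^TE)$, whose mean is $E^TE$ with $\norm{E^TE}_2=\norm{E}_2^2$ already small; the per-summand operator norm is $\tfrac1{tp_i}\norm{e_i^TE}_2^2$, and since row $i$ of $E$ comes from the eigen-directions with $\lambda_j<\lambda'$ one has $\norm{e_i^TE}_2^2\le O(\lambda'\lambda_{\hat k+1})\cdot\tau_i^{k'}(\ha)$ (from $\lambda_j<\lambda'\Rightarrow\lambda_j\le 2\lambda'\tfrac{\lambda_j}{\lambda_j+\lambda'}$, together with the fact that $S_1$ merely reweights columns, so $\E_{S_1}\norm{e_i^TE}_2^2=((I-\hat U\hat U^T)A^2)_{ii}$), while $p_i\ge\tau_i^{k'}(\ha)/O(k')$; hence each summand has norm $O(\lambda'\lambda_{\hat k+1}k'/t)$, which is $O(\epsilon^2/k)\norm{A-A_k}_F^2$ by the choice of $t$, the variance proxy is controlled by this together with $\norm{E}_F^2=O(\norm{A-A_k}_F^2)$, and the $c'\log n$ factor in $t$ drives the failure probability to $n^{-\Omega(1)}$.

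The hard part is (b), and specifically the tension between the two requirements on the cutoff: the head subspace embedding in (a) wants the retained eigenspace $\mathrm{col}(\hat U)$ small ($\rank\le 2k'$, so that $t=\tilde O(\sqrt{nk}/\epsilon^3)$ samples suffice), whereas the tail bound in (b) wants $\hat U$ deep enough that $\lambda_{\hat k+1}^2$, the column-sampling ridge $\lambda''$, and the post-sampling fluctuation of $\norm{S_2^TE}_2^2$ are all simultaneously $O(\epsilon^2/k)\norm{A-A_k}_F^2$. Reconciling these — via the choice $\lambda'=\frac1{k'}\norm{\ha-\ha_{k'}}_F^2$ with $k'=\Theta(k/\epsilon^2)$, which is exactly calibrated to make both true, and which forces the reliance on Lemma~\ref{lem:scoreBound} — and verifying that the per-row quantities feeding matrix Bernstein are governed by the elementary sandwich $\Omega\!\big(\tfrac{A_{ii}}{\lambda_1+\lambda'}\big)\le\tau_i^{k'}(\ha)\le O\!\big(\tfrac{A_{ii}}{\lambda'}\big)$ on the ridge leverage scores of $\ha$, together with chasing the constants and $\epsilon$-powers through the nested triangle inequalities, is where essentially all the work lies; the subspace embedding of (a) and the degenerate cases ($\rank A\le k'$, or $k'\ge n$) are routine.
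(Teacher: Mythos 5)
Your proposal follows the same high-level structure as the paper's proof — split $AS_1$ into a head $H=\hat U\hat U^T AS_1$ and a tail $E=(I-\hat U\hat U^T)AS_1$ with $\hat U$ a set of top eigenvectors of $A$, establish a subspace embedding for the head, a spectral-norm bound for the tail and for $S_2^T E$, and combine via the triangle inequality. But your choice of cutoff defining the head is different from the paper's, and this difference is fatal to the tail bound.

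You put $j$ into the head iff $\lambda_j\ge\lambda'\eqdef\frac1{k'}\norm{\ha-\ha_{k'}}_F^2$, i.e.\ you threshold the eigenvalues of $A$ against the ridge parameter of $\ha$. This makes the head leverage-score bound pleasantly clean: $\norm{\hat U^Te_i}_2^2\le 2\tau_i^{k'}(\ha)$ directly, bypassing Lemma~\ref{lem:aux1} and Lemma~\ref{lem:scoreBound}. But the tail bound you then need — that $\lambda_{\hat k+1}^2\le O(\epsilon^2/k)\norm{A-A_k}_F^2$, which you deduce from $\lambda_{\hat k+1}<\lambda'$ — is \emph{false} with this cutoff. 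Take $k=1$, $k'=c/\epsilon^2$, $n\gg k'$, and eigenvalues $\lambda_1=\frac{nb}{2k'}$, $\lambda_2=\cdots=\lambda_n=b$. Then $\lambda'=\frac{(n-k')b}{k'}>\lambda_1$, so $\hat k=0$ and $\lambda_{\hat k+1}=\lambda_1$. But $\norm{A-A_k}_F^2=(n-1)b^2$, and
\begin{align*}
\frac{\lambda_{\hat k+1}^2}{\frac{\epsilon^2}{k}\norm{A-A_k}_F^2}=\Theta\!\left(\frac{n^2b^2/(k')^2}{\epsilon^2 n b^2}\right)=\Theta\!\left(\frac{n\epsilon^2}{c^2 k}\right)\xrightarrow[n\to\infty]{}\infty .
\end{align*}
In words: $\lambda'$ only bounds $\lambda_{\hat k+1}$ in the $\ell_1$ sense, and squaring an $\ell_1$-tail cutoff does not yield an $\ell_2$-tail cutoff; the two can differ by a $\sqrt{n/k'}$ factor exactly as in Lemma~\ref{lem:scoreBound}, and that factor reappears uncancelled here. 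Your own Cauchy--Schwarz argument would only give $\lambda_{\hat k+1}^2\le(n/(k')^2)\norm{A-A_k}_F^2$, which is $O(\epsilon^2/k)\norm{A-A_k}_F^2$ only when $n=O(k')$, i.e.\ precisely the regime where the lemma is not interesting.

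The paper avoids this by thresholding directly on $\lambda_j^2$ against $\frac{\epsilon^2}{k}\norm{A-A_k}_F^2$ (the tail of $A$, not of $\ha$), which makes the tail bound $\lambda_{m+1}^2\le\frac{\epsilon^2}{k}\norm{A-A_k}_F^2$ immediate by construction. The price is that the head leverage scores of $U_m$ are no longer directly dominated by $\tau_i^{k'}(\ha)$; the paper pays it through Lemma~\ref{lem:aux1}, which shows they \emph{are} dominated by $\tau_i^{k'}(A)$ (up to a factor $2$), and then transfers to $\ha$ via Lemma~\ref{lem:scoreBound} — exactly recovering the $4\epsilon\sqrt{n/k}$ factor built into $\tilde\ell_i$. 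You explicitly flagged this tension between head size and tail depth as ``where essentially all the work lies'' and asserted that the choice $\lambda'=\frac1{k'}\norm{\ha-\ha_{k'}}_F^2$ ``is exactly calibrated to make both true''; it is not, and the calibration really must be done on $\norm{A-A_k}_F^2$, not $\norm{\ha-\ha_{k'}}_F^2$. Aside from this, your per-row matrix-Bernstein treatment of $\norm{S_2^TE}_2^2$ is a plausible alternative to the paper's use of Corollary~\ref{cor:chernoff}, but it inherits the same broken tail cutoff and so cannot be salvaged as written.
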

Note that if $\tilde \tau_i^{k'}(\ha)$ is a constant factor approximation to $\tau_i^{k'}(\ha)$, $t = O \left (\frac{\sqrt{nk}\log n}{\epsilon^3} \right)$. 
\iffinal
The proofs of Lemmas \ref{leveragePCP} and \ref{leveragePCPspectral} are given in Appendix B of the our full paper.
\else
We defer the proofs of Lemmas \ref{leveragePCP} and \ref{leveragePCPspectral} to Appendix \ref{additional_proofs}.
\fi

\section{Full Low-Rank Approximation Algorithm}\label{sec:full}
We are finally ready to give our main algorithm for relative error low-rank approximation of PSD matrices in $\tilde O(n \poly(k/\epsilon))$ time, \nameref{mainAlgo}. We set $k_1 \eqdef \lceil ck/\epsilon \rceil$ and estimate the both the rank-$k$ and rank-$c'k_1$ ridge leverage scores of $\ha$ using the algorithm of \cite{mm16} (Step 1). If $c,c'$ are sufficiently large, sampling by the sum of these scores (Steps 2-3) ensures that $AS_1$ is an $(\epsilon,k)$-column PCP for $A$ and simultaneously, by applying Lemmas \ref{leveragePCP} and \ref{leveragePCPspectral} that $\tilde A = S_2^T A S_1$ is a row PCP in both spectral and Frobenius norm with rank $k_1$ and error $\epsilon = 1/2$ for $AS_1$

In conjunction, these guarantees ensure that we can apply an input sparsity time algorithm to $\tilde A$ (Step 4) to find a rank-$k_1$ $Z$ satisfying $\norm{AS_1 - AZZ^T}_2^2 = O (\norm{AS_1 - (AS_1)_{k_1}}_2^2 + \frac{1}{k_1}\norm{A-A_k}_F^2) = O \left (\frac{\epsilon}{k}\norm{A-A_k}_F^2 \right )$, where the final bound holds since $k_1 = \Theta(k/\epsilon)$. It is not hard to show that, due to this strong spectral norm bound, projecting $AS_1$ to $Z$ and taking the best rank-$k$ approximation in the span gives a near optimal Frobenius norm low-rank approximation to $AS_1$ and hence $A$. 

We can still not afford to read $AS_1$ in its entirety, so we employ a number of standard leverage score sampling techniques to perform this projection approximately. In Step 5, we sample $\tilde O(k/\epsilon^2)$ columns of $AS_1$ using the leverage scores of $Z$ (its row norms since it is an orthonormal matrix) to form $AS_1S_3$. We argue that there is a good rank-$k$ approximation to $AS_1$ lying in both the column span of $AS_1S_3$ and the row span of $Z^T$. In Step 6 we find a near optimal such approximation by further sampling $\tilde O(k/\epsilon^4)$ rows $AS_1$ by the leverage scores of $AS_1S_3$ (the row norms of $V$, an orthonormal basis for its span), and computing the best rank-$k$ approximation to the sampled matrix falling in the column span of $AS_1S_3$ and the row span of $Z^T$.

Finally, in Step 7 we approximately project $A$ itself to the span of this rank-$k$ approximation by first sampling by the leverage scores of the approximation (the row norms of $Q$) and projecting.

\paragraph{Algorithm 1\label{mainAlgo}}\textbf{PSD Low-Rank Approximation}
\begin{enumerate}
\item Let $k_1 = \lceil ck/\epsilon \rceil$.
For all $i \in [1,..,n]$ compute $\tilde \tau_i^k(\ha)$ and $\tilde \tau_i^{c'k_1}(\ha)$ which are constant  factor approximations to the ridge leverage scores $\tau_i^k(\ha)$ and $\tau_i^{c'k_1}(\ha)$ respectively.
\item Set $\ell^{(1)}_i = \sqrt{\frac{n}{k}} \tilde \tau_i^k(\ha) + \sqrt{\frac{n \epsilon^4}{k_1}} \tilde \tau_i^{c'k_1}(\ha)$ and $\ell^{(2)}_i = \sqrt{\frac{n}{k_1}} \tilde \tau_i^{c'k_1}(\ha)$. Set $p_i^{(1)} = \frac{\ell_i^{(1)}}{\sum_i \ell_i^{(1)}}$  and $p_i^{(2)} = \frac{\ell_i^{(2)}}{\sum_i \ell_i^{(2)}}$.
\item Set  $t_1 = \frac{c_1 \log n}{\epsilon^2} \sum_{i} \ell_i^{(1)}$ and $t_2 = c_2 \log n \sum_{i} \ell_i^{(2)}$. Sample $S_1 \in \R^{n \times t_1}$ whose $j^{th}$ column is set to $\frac{1}{\sqrt{tp^{(1)}_i}} e_i$ with probability $p^{(1)}_i$. Sample $S_2 \in \R^{n \times t_2}$ analogously using $p^{(2)}_i$. 
\item Let $\tilde A = S_2^T A S_1$, and use an input sparsity time algorithm to compute orthonormal $Z \in \R^{t_1 \times k_1}$ satisfying the spectral guarantee $\norm{\tilde A - \tilde A ZZ^T}_2^2 \le 2 \norm{\tilde A - \tilde A_{k_1}}_2^2 + \frac{2}{k_1} \norm{\tilde A - \tilde A_{k_1}}_F^2$.
\item Let $t_3 = c_3 \left (\frac{k\log (k/\epsilon)}{\epsilon} +\frac{k}{\epsilon^2}\right )$, set $p_i^{(3)} = \frac{\norm{z_i}_2^2}{\norm{Z}_F^2}$, and sample $S_3 \in \R^{t_1 \times t_3}$ where the $j^{th}$ column is set to $\frac{1}{\sqrt{t_3 p_i^{(3)}}} e_i$ with probability $p_i^{(3)}$. Compute $V \in \mathbb{R}^{n \times t_3}$ which is an orthogonal basis for the column span of $AS_1S_3$.
\item Let $p_i^{(4)} = \frac{\norm{v_i}_2^2}{\norm{V}_F^2}$ and $t_4 = c_4 \left (\frac{t_3 \log t_3}{\epsilon^2} \right )$. Sample $S_4 \in \mathbb{R}^{n \times t_4}$ where  the $j^{th}$ column is set to $\frac{1}{\sqrt{t_4 p_i^{(4)}}} e_i$ with probability $p_i^{(4)}$. Compute $W \in \mathbb{R}^{t_3 \times t_1}$ satisfying:
\begin{align*}
W = \argmin_{W | \rank(W) = k} \norm{S_4^TAS_1S_3 W Z^T - S_4^TAS_1}_F^2.
\end{align*}
\item Compute an orthogonal basis $Q \in \R^{n \times k}$ for the column span of $AS_1S_3 W$. Let $t_5 = c_5 \left (k\log k + \frac{k}{\epsilon}\right)$, set $p_i^{(5)} = \frac{\norm{q_i}_2^2}{\norm{Q}_F^2}$, where $q_i$ is the $i^{th}$ row of $Q$.
Sample $S_5 \in \R^{n \times t_5}$ where the $j^{th}$ column is set of $\frac{1}{\sqrt{t_5 p_i^{(5)}}} e_i$ with probability $p_i^{(5)}$. Solve:
\begin{align*}
N = \argmin_{N \in \R^{n \times k}} \norm{S_5^TQ N^T - S_5^T A}_F^2.
\end{align*}
\item Return $Q,N \in \R^{n \times k}$.
\end{enumerate}

\begin{theorem}[Sublinear Time Low-Rank Approximation]\label{thm:main} Given any PSD $A \in \R^{n \times n}$, for sufficiently large constants $c, c', c_1,c_2,c_3,c_4,c_5$, for any $\epsilon < 1$, \nameref{mainAlgo} accesses $O(\frac{n \cdot k\log^2 n}{\epsilon^{2.5}} + \sqrt{n} k^{1.5} \cdot  \log^2 n \cdot \poly(1/\epsilon))$ entries of $A$, runs in $\tilde O \left (\frac{nk^{\omega-1}}{\epsilon^{2(\omega-1)}} + \sqrt{n} k^{\omega-.5} \cdot \poly(1/\epsilon)\right)$ time,
 and with probability at least $9/10$ outputs $M,N \in \R^{n \times k}$ with:
$$\norm{A-MN^T}_F^2 \le (1+\epsilon)\norm{A-A_k}_F^2.$$
\end{theorem}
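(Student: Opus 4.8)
\emph{Overall strategy.} The plan is to chain the three sketching guarantees of Sections~\ref{sec:ridge}--\ref{sec:row} with a short sequence of approximate leverage-score regression steps, carrying a $(1\pm O(\epsilon))$ multiplicative error throughout and rescaling $\epsilon$ by a constant at the end. First I would fix the constants so that, with probability $1-o(1)$, all of the following hold at once: (i) $AS_1$ is an $(\epsilon,k)$-column PCP of $A$, via Lemmas~\ref{lem:scoreBound}, \ref{thm:originalSampling}, \ref{lem:sum}, \ref{thm:pcp} and the term $\sqrt{n/k}\,\tilde\tau_i^k(\ha)$ in $\ell_i^{(1)}$; and (ii) $\tilde A = S_2^T A S_1$ is simultaneously a $(\tfrac12,k_1)$-row PCP of $AS_1$ (Lemma~\ref{leveragePCP}) and a $(\tfrac12,k_1)$-spectral row PCP of $AS_1$ (Lemma~\ref{leveragePCPspectral}), which is possible because the remaining terms of $\ell_i^{(1)}$ and $\ell_i^{(2)}$ use the rank-$c'k_1$ ridge scores of $\ha$ as prescribed by those lemmas (the ``sum of scores'' observation recorded after Lemma~\ref{leveragePCP}). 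A union bound over the $O(1)$ failure events here and in Steps~5--7 keeps the overall failure probability below $1/10$.

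\emph{From the spectral bound on $Z$ to a near-optimal span.} The heart of the argument is to show the span $Z$ from Step~4 satisfies the strong bound $\|AS_1 - AS_1 Z Z^T\|_2^2 = O(\tfrac{\epsilon}{k})\,\|A-A_k\|_F^2$. Starting from the guarantee $\|\tilde A - \tilde A Z Z^T\|_2^2 \le 2\|\tilde A - \tilde A_{k_1}\|_2^2 + \tfrac{2}{k_1}\|\tilde A - \tilde A_{k_1}\|_F^2$, I would bound the right-hand side by evaluating the two row-PCP inequalities at the projection $P^*$ onto the top $k_1$ right singular directions of $AS_1$: this yields $\|\tilde A - \tilde A_{k_1}\|_2^2 \le \tfrac32\sigma_{k_1+1}(AS_1)^2 + \tfrac{1}{2k_1}\|A-A_k\|_F^2$ and $\|\tilde A - \tilde A_{k_1}\|_F^2 \le \tfrac32\|AS_1-(AS_1)_{k_1}\|_F^2 + |\Delta|$, after which I use $\sigma_{k_1+1}(AS_1)^2 \le \tfrac{1}{k_1-k}\|AS_1-(AS_1)_k\|_F^2$, $|\Delta| = O(\|A-A_k\|_F^2)$, the column-PCP bound $\|AS_1-(AS_1)_k\|_F^2 \le (1+\epsilon)\|A-A_k\|_F^2$, and $k_1-k = \Theta(k/\epsilon)$, so that every term is $O(\tfrac{\epsilon}{k})\|A-A_k\|_F^2$. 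Converting back from $\tilde A$ to $AS_1$ with the spectral row PCP (whose additive error is again $O(\tfrac{\epsilon}{k})\|A-A_k\|_F^2$) gives the claimed bound. A standard spectral-to-Frobenius conversion then upgrades this, using $\|A-A_k\|_F^2 \le \tfrac{1}{1-\epsilon}\|AS_1-(AS_1)_k\|_F^2$, to $\|AS_1 - (AS_1 Z Z^T)_k\|_F^2 \le (1+O(\epsilon))\|AS_1-(AS_1)_k\|_F^2$; equivalently, the best rank-$k$ approximation of $AS_1$ whose row span lies in $\mathrm{colspan}(Z)$ is near optimal.

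\emph{Two approximate regressions and the final projection.} Steps~5--6 find such an approximation without reading all of $AS_1$: sampling columns of $AS_1$ by the leverage scores of $Z$ ($S_3$) preserves, up to $(1+\epsilon)$, the best rank-$k$ matrix of the form $(AS_1 S_3) W Z^T$ by a standard leverage-score regression argument, and sampling rows of $AS_1$ by the leverage scores of an orthonormal basis $V$ of $\mathrm{colspan}(AS_1S_3)$ ($S_4$) makes the sketched rank-constrained regression defining $W$ solve the full one up to $(1+\epsilon)$; composing gives $\|AS_1 - AS_1 S_3 W Z^T\|_F^2 \le (1+O(\epsilon))\|A-A_k\|_F^2$. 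Since $AS_1 S_3 W Z^T$ has column span inside $\mathrm{colspan}(AS_1S_3W) = \mathrm{colspan}(Q)$, the projection $QQ^T AS_1$ is at least as good, so $Q$ spans a near-optimal rank-$k$ approximation of $AS_1$; the column-PCP preservation property quoted after \eqref{eqn:pcp} then gives $\|A - QQ^TA\|_F^2 \le (1+O(\epsilon))\|A-A_k\|_F^2$. Finally, sampling rows of $A$ by the leverage scores of $Q$ ($S_5$) and solving $N = \argmin_N \|S_5^TQN^T - S_5^TA\|_F^2$ produces, by the approximate multi-response regression guarantee, $\|A - QN^T\|_F^2 \le (1+\epsilon)\|A-QQ^TA\|_F^2 \le (1+O(\epsilon))\|A-A_k\|_F^2$; rescaling $\epsilon$ by a constant gives the theorem with $M=Q$. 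The access and time bounds follow by adding up the cost of each step: the dominant access cost is reading the $t_1\times t_2 = \tilde O(nk/\epsilon^{2.5})$ entries of $\tilde A$, and the dominant time cost is the input-sparsity low-rank approximation of $\tilde A$ together with the $O(n(k_1\log(k/\delta))^{\omega-1})$ ridge-leverage estimation of Step~1.

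\emph{Main obstacle.} The crux is the second paragraph: turning a \emph{spectral}-norm guarantee on the doubly-sketched matrix $\tilde A$ into a $(1+\epsilon)$-relative-error \emph{Frobenius} guarantee for $A$, while correctly absorbing the fixed additive $\Delta$ of the Frobenius row PCP and the $\tfrac{\epsilon}{k}\|A-A_k\|_F^2$ additive error of the spectral row PCP; these are harmless only because $k_1=\Theta(k/\epsilon)$ forces every such term down to $O(\epsilon)\|A-A_k\|_F^2$. The remaining work is the routine but lengthy bookkeeping: choosing $t_3,t_4,t_5$ large enough for the leverage-score subspace-embedding and regression guarantees, and checking that all the sampling events succeed simultaneously.
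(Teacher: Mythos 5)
Your proposal follows the paper's proof essentially step for step. The chain you set up --- (i) $AS_1$ is an $(\epsilon,k)$-column PCP and $\tilde A$ is simultaneously a $(\tfrac12,k_1)$-Frobenius and $(\tfrac12,k_1)$-spectral row PCP of $AS_1$; (ii) the spectral guarantee on $Z$ from Step~4 gives $\|AS_1 - AS_1ZZ^T\|_2^2 = O(\tfrac{\epsilon}{k})\|A-A_k\|_F^2$ by evaluating both row-PCP inequalities at the top-$k_1$ projection of $AS_1$ and absorbing $\Delta$ and the additive spectral error as $O(\epsilon)\|A-A_k\|_F^2$; (iii) the spectral-to-Frobenius conversion $\|(AS_1)_k(I-ZZ^T)\|_F^2 \le k\|AS_1(I-ZZ^T)\|_2^2$ shows the best rank-$k$ approximation with row span in $\mathrm{colspan}(Z)$ is near optimal; (iv) two leverage-score approximate regressions ($S_3$ and $S_4$) produce $W$ and hence $Q$; and (v) a final leverage-score regression ($S_5$) projects $A$ onto $\mathrm{colspan}(Q)$ --- is exactly the structure of Lemmas~\ref{AS1spectral} and \ref{AS1frob} together with the closing argument in the paper. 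One small bookkeeping slip: the $n$-linear piece of the runtime is actually dominated by computing the orthonormal basis $V$ of $AS_1S_3$ in Step~5 (giving $\tilde O(nk^{\omega-1}/\epsilon^{2(\omega-1)})$), not by Step~1 or the input-sparsity SVD of $\tilde A$ as you suggest; and the "standard leverage-score regression argument" in Step~5 conceals the one genuinely delicate observation you would need to spell out, namely that the closed-form minimizer $\tilde Y = AS_1S_3(N^*Z^TS_3)^+$ of the $S_3$-sketched problem automatically lies in $\mathrm{colspan}(AS_1S_3)$, which is what certifies that a good rank-$k$ solution of the constrained form $AS_1S_3WZ^T$ exists.
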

For simplicity, we will show $\norm{A-MN^T}_F^2 \le (1+O(\epsilon)) \norm{A-A_k}_F^2$. By scaling up the constants $c,c',c_1,c_2,c_3,c_4,c_5$ we can then scale down $\epsilon$, achieving the claimed result. 
We first show:
\begin{lemma}\label{AS1spectral} For sufficiently large constants $c,c',c_1,c_2$, with probability $98/100$, Steps 1-4 of \nameref{mainAlgo} produce $Z \in \mathbb{R}^{t_1 \times k_1}$ satisfying:
\begin{align*}
\norm{AS_1 - AS_1ZZ^T}_2^2 \le \frac{\epsilon}{k} \norm{A - A_k}_F^2.
\end{align*}
\end{lemma}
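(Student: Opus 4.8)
The plan is to build a short chain of inequalities linking $\norm{AS_1 - AS_1 ZZ^T}_2^2$ to $\norm{A-A_k}_F^2$ through three ``for-free'' consequences of Steps~1--3 plus the Step-4 guarantee. First I would record (all with $c,c',c_1,c_2$ sufficiently large, up to constant rescalings absorbed into them): (a) $AS_1$ is an $(\epsilon,k)$-column PCP of $A$, by Lemma~\ref{thm:pcp}, since $\ell_i^{(1)} \ge \sqrt{n/k}\,\tilde\tau_i^k(\ha) \ge \tfrac12\tau_i^k(A)$ by Lemma~\ref{lem:scoreBound} (so $2\ell_i^{(1)}$ over-estimates $\tau_i^k(A)$) and $t_1$ is large enough; (b) $\tilde A = S_2^T A S_1$ is a $(\tfrac12,k_1)$-row PCP of $AS_1$ in Frobenius norm, by Lemma~\ref{leveragePCP} at error $\tfrac12$ and rank parameter $k_1$; (c) $\tilde A$ is a $(\tfrac12,k_1)$-spectral PCP of $AS_1$, by Lemma~\ref{leveragePCPspectral}, likewise at error $\tfrac12$ and rank $k_1$. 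For (b) and (c) one uses $\tilde\tau_i^{c'k_1}(\ha)$ as the rank-$k'$ leverage-score over-estimate those lemmas require, which is valid because $c'k_1$ exceeds the rank $k'$ they need at error $\tfrac12$ (for $c'$ a large multiple of $c$) and ridge leverage scores only grow with the rank parameter; that sampling $S_1$ once from the sum $\ell^{(1)}$ simultaneously serves Lemmas~\ref{thm:pcp}, \ref{leveragePCP}, and \ref{leveragePCPspectral} is exactly the content of the remarks after Lemmas~\ref{leveragePCP}--\ref{leveragePCPspectral}, while $S_2$ is sampled directly by the rank-$c'k_1$ scores. A union bound keeps the total failure probability below $2/100$.

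Next I would bound the two terms on the right of the Step-4 guarantee $\norm{\tilde A - \tilde A ZZ^T}_2^2 \le 2\norm{\tilde A - \tilde A_{k_1}}_2^2 + \tfrac{2}{k_1}\norm{\tilde A - \tilde A_{k_1}}_F^2$. By (a), $\sum_{i>k}\sigma_i(AS_1)^2 = \norm{AS_1 - (AS_1)_k}_F^2 \le (1+\epsilon)\norm{A-A_k}_F^2 \le 2\norm{A-A_k}_F^2$, so by monotonicity of singular values $\norm{AS_1 - (AS_1)_{k_1}}_2^2 = \sigma_{k_1+1}(AS_1)^2 \le \tfrac{2}{k_1-k}\norm{A-A_k}_F^2 = O(1/c)\cdot\tfrac{\epsilon}{k}\norm{A-A_k}_F^2$, using $k_1 - k \ge (c-1)k/\epsilon$. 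Letting $P^\ast$ be the projection onto the top-$k_1$ right singular subspace of $AS_1$, fact (c) with $P=P^\ast$ gives $\norm{\tilde A - \tilde A_{k_1}}_2^2 \le \norm{\tilde A(I-P^\ast)}_2^2 \le \tfrac32\sigma_{k_1+1}(AS_1)^2 + \tfrac{1}{2k_1}\norm{A-A_{k_1}}_F^2 = O(1/c)\cdot\tfrac{\epsilon}{k}\norm{A-A_k}_F^2$, where the spectral-PCP additive term at rank $k_1$ is $\tfrac{1}{2k_1}\norm{A-A_{k_1}}_F^2 \le \tfrac{1}{2k_1}\norm{A-A_k}_F^2$. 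For the Frobenius term, let $P^\ast_F$ be the best rank-$k_1$ projection for $AS_1$; fact (b) with $P=P^\ast_F$ gives $\norm{\tilde A - \tilde A_{k_1}}_F^2 \le \norm{\tilde A(I-P^\ast_F)}_F^2 \le \tfrac32\norm{AS_1 - (AS_1)_{k_1}}_F^2 + |\Delta|$, and since $\norm{AS_1-(AS_1)_{k_1}}_F^2 \le \norm{AS_1-(AS_1)_k}_F^2 \le 2\norm{A-A_k}_F^2$ and $|\Delta| \le 600\norm{A-A_{k_1}}_F^2 \le 600\norm{A-A_k}_F^2$, we get $\norm{\tilde A - \tilde A_{k_1}}_F^2 = O(1)\cdot\norm{A-A_k}_F^2$, hence $\tfrac{2}{k_1}\norm{\tilde A - \tilde A_{k_1}}_F^2 = O(1/c)\cdot\tfrac{\epsilon}{k}\norm{A-A_k}_F^2$.

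Substituting both bounds into the Step-4 guarantee yields $\norm{\tilde A - \tilde A ZZ^T}_2^2 = O(1/c)\cdot\tfrac{\epsilon}{k}\norm{A-A_k}_F^2$. Finally, since $ZZ^T$ is a rank-$k_1$ orthogonal projection, fact (c) with $P=ZZ^T$ gives $\tfrac12\norm{AS_1 - AS_1ZZ^T}_2^2 - \tfrac{1}{2k_1}\norm{A-A_k}_F^2 \le \norm{\tilde A - \tilde A ZZ^T}_2^2$, so $\norm{AS_1 - AS_1ZZ^T}_2^2 \le 2\norm{\tilde A - \tilde A ZZ^T}_2^2 + \tfrac{1}{k_1}\norm{A-A_k}_F^2 = O(1/c)\cdot\tfrac{\epsilon}{k}\norm{A-A_k}_F^2 \le \tfrac{\epsilon}{k}\norm{A-A_k}_F^2$ once $c$ is a large enough absolute constant, completing the proof. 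The main obstacle is not any individual inequality but the bookkeeping: (i) verifying that Steps~1--3 genuinely instantiate all three PCP lemmas with the stated constants and sample sizes, in particular that the single sample $S_1$ drawn from the sum $\ell^{(1)}$ suffices for all of them; and (ii) keeping the $\poly(\epsilon),\poly(k)$ factors straight through the cascade, where the crucial observation is that the fixed Frobenius row-PCP error $\Delta$ --- bounded only by $O(\norm{A-A_k}_F^2)$ rather than by $O(\tfrac{\epsilon}{k}\norm{A-A_k}_F^2)$ --- is nonetheless harmless, since it enters only through $\norm{\tilde A - \tilde A_{k_1}}_F^2$, which the Step-4 guarantee multiplies by $\tfrac{2}{k_1} = O(\epsilon/k)$.
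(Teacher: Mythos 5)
Your proposal is correct and follows essentially the same approach as the paper's proof: you establish the same three PCP facts (column PCP for $AS_1$, Frobenius and spectral row PCPs for $\tilde A$ at rank $k_1$ and error $1/2$), bound $\norm{\tilde A - \tilde A_{k_1}}_F^2$ and $\norm{\tilde A - \tilde A_{k_1}}_2^2$ exactly as the paper does, and then convert the Step-4 guarantee back via the spectral row PCP applied with $P = ZZ^T$. Your closing observation that the fixed $\Delta$ from the Frobenius row PCP is harmless because the Step-4 guarantee multiplies $\norm{\tilde A - \tilde A_{k_1}}_F^2$ by $2/k_1 = O(\epsilon/k)$ is precisely the mechanism the paper exploits, and your careful tracking of the $O(1/c)$ suppression is just a more explicit version of the paper's ``adjust constants on $\epsilon$'' step.
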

\begin{proof}
In Step 3, $S_1$ is sampled using probabilities proportional to the scores $\ell_i^{(1)} = \sqrt{\frac{n}{k}} \tilde \tau_i^k(\ha) + \sqrt{\frac{n \epsilon^4}{k_1}} \tilde \tau_i^{c'k_1}(\ha)$. The first part of this score ensures that by Lemma \ref{thm:pcp}, with high probability $AS_1$ is an $(\epsilon,k)$-column PCP of $A$. The second half, in combination with the construction of $S_2$ ensures via Lemmas \ref{leveragePCP} and \ref{leveragePCPspectral} that if $c'$ is large enough, $\tilde A = S_2^T A S_1$ is both a $(1/2,k_1)$-row PCP of $AS_1$ for Frobenius norm error with probability $99/100$ and a $(1/2,k_1)$-spectral PCP of $AS_1$ with high probability. Note that in $\ell_i^{(1)}$ the scores corresponding to $\tilde \tau_i^{c'k_1}(\ha)$ are scaled down by  an $\epsilon^2$ factor from what is required to apply Lemmas \ref{leveragePCP} and \ref{leveragePCPspectral} with rank $k_1$ and error $1/2$. However, $t_1$ is oversampled by a $1/\epsilon^2$ factor, balancing this scaling. By a union bound, all three PCP properties hold simultaneously with probability $98/100$.

By the Frobenius error PCP property of Lemma \ref{leveragePCP}, letting $ U_{k_1}$ contain the top $k_1$ row singular vectors of $AS_1$ we have:
\begin{align}\label{tildeAFrob}
\norm{\tilde A - \tilde A_{k_1}}_F^2 &\le \norm{\tilde A - \tilde A U_{k_1} U_{k_1}^T}_F^2\nonumber\\
&\le \frac{3}{2} \norm{AS_1 - (AS_1)_{k_1}}_F^2-\Delta \le 603 \norm{A-A_k}_F^2
\end{align}
by the fact that $|\Delta| \le 600 \norm{A-A_k}_F^2$ and $\norm{AS_1 - (AS_1)_{k_1}}_F^2 \le \norm{AS_1 - (AS_1)_{k}}_F^2 \le (1+\epsilon)\norm{A-A_k}_F^2 \le 2 \norm{A-A_k}_F^2$ since $AS_1$ is an $(\epsilon,k)$-column PCP of $A$.
Further, via Lemma \ref{leveragePCPspectral}, for any rank-$k_1$ projection $P$:
\begin{align}\label{spectralFull}
\frac{1}{2} \norm{AS_1(I-P)}_2^2 - \frac{1}{2k_1}\norm{A-A_{k_1}}_F^2 \le \norm{\tilde A(I-P)}_2^2 \le \frac{3}{2} \norm{AS_1(I-P)}_2^2 + \frac{1}{2k_1}\norm{A-A_{k_1}}_F^2.
\end{align}
By this bound, we have
\begin{align}\label{tildeASpectral}
\norm{\tilde A - \tilde A_{k_1}}_2^2 \le \norm{\tilde A - \tilde A U_{k_1} U_{k_1}^T}_2^2 \le \frac{3}{2}\norm{AS_1-(AS_1)_{k_1}}_2^2 + \frac{1}{2k_1} \norm{A-A_{k_1}} \le \frac{4\epsilon}{k} \norm{A-A_k}_F^2
\end{align}
where the final bound follows because if we set $c \ge 2$, $k_1 \ge \lceil 2k/\epsilon \rceil$ so $\norm{AS_1-(AS_1)_{k_1}}_2^2 \le \frac{\epsilon}{k} \norm{AS_1-(AS_1)_k}_F^2 \le \frac{2\epsilon}{k} \norm{A-A_k}_F^2$.
With \eqref{tildeAFrob} and \eqref{tildeASpectral} in place, applying \eqref{spectralFull} again,
if we compute $Z$ satisfying the guarantee in Step 4 we have:
\begin{align*}
\norm{AS_1 - AS_1 ZZ^T}_2^2 &\le 2\norm{\tilde A - \tilde AZZ^T}_2^2 + \frac{1}{k_1} \norm{A-A_{k_1}}_F^2\\
&\le 4 \norm{\tilde A - \tilde A_{k_1}}_2^2 + \frac{4}{k_1} \norm{\tilde A - \tilde A_{k_1}}_F^2 + \frac{1}{k_1} \norm{A-A_{k_1}}_F^2\tag{By the guarantee on $Z$ in Step 4}\\
&\le \frac{(4+4\cdot 603 + 1)\epsilon}{k} \cdot \norm{A-A_k}_F^2
\end{align*}
which gives the lemma after adjusting constants on $\epsilon$ by making $c$, $c'$, $c_1$ and $c_2$ sufficiently large.
\end{proof}

Lemma \ref{AS1spectral} ensures that the rank-$k$ matrix $W$ computed in Step 6 gives a near optimal low-rank approximation of $AS_1$. Specifically we have:
\begin{lemma}\label{AS1frob}
With probability $95/100$, for sufficiently large $c,c',c_1,c_2,c_3,c_4$, Step 5 of \nameref{mainAlgo} produces $W$ such that, letting $Q \in \mathbb{R}^{n\times k}$ be an orthonormal basis for the column span of $AS_1S_3 W$:
\begin{align*}
\norm{AS_1 - QQ^TAS_1}_F^2 \le (1+\epsilon) \norm{A-A_k}_F^2.
\end{align*}
\end{lemma}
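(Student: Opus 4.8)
\textbf{Setup and Step 1 (a good rank-$k$ approximation inside $\mathrm{rowspan}(Z^T)$).}
I would exhibit a $(1+O(\epsilon))$‑optimal rank‑$k$ approximation of $AS_1$ whose row span lies in $\mathrm{rowspan}(Z^T)$ and whose column span is captured by $AS_1S_3$, then argue that the regression of Step~6 approximately recovers it and that projecting $AS_1$ onto $Q$ can only do better. Condition throughout on the $98/100$ event of Lemma~\ref{AS1spectral}, so $\norm{AS_1-AS_1ZZ^T}_2^2\le\frac{\epsilon}{k}\norm{A-A_k}_F^2$, and on $AS_1$ being an $(\epsilon,k)$‑column PCP of $A$. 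Let $U_k\in\R^{n\times k}$ hold the top‑$k$ left singular vectors of $AS_1$ and take $Y=(AS_1ZZ^T)_k$, which has row span inside $\mathrm{rowspan}(Z^T)$. Using the variational identity $\sum_{j\le k}\sigma_j^2(M)=\max_{U^TU=I_k}\norm{U^TM}_F^2$ together with Pythagoras for the complementary right‑projectors $ZZ^T$ and $I-ZZ^T$,
\begin{align*}
\norm{AS_1-Y}_F^2 &= \norm{AS_1}_F^2-\textstyle\sum_{j\le k}\sigma_j^2(AS_1ZZ^T)\ \le\ \norm{AS_1}_F^2-\norm{U_k^TAS_1ZZ^T}_F^2\\
&= \norm{AS_1-(AS_1)_k}_F^2+\norm{U_k^TAS_1(I-ZZ^T)}_F^2.
\end{align*}
Since $U_k^TAS_1(I-ZZ^T)$ has rank at most $k$, $\norm{U_k^TAS_1(I-ZZ^T)}_F^2\le k\norm{AS_1(I-ZZ^T)}_2^2\le\epsilon\norm{A-A_k}_F^2$; this is exactly where $k_1=\Theta(k/\epsilon)$ and the $\tfrac{\epsilon}{k}$ spectral error of $Z$ are used. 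With the column‑PCP bound $\norm{AS_1-(AS_1)_k}_F^2\le(1+\epsilon)\norm{A-A_k}_F^2$ this yields $\min_{\rank(X)=k}\norm{XZ^T-AS_1}_F^2\le\norm{AS_1-Y}_F^2\le(1+2\epsilon)\norm{A-A_k}_F^2$.

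\textbf{Step 2 (the sampled columns span such an approximation).}
Write the SVD $AS_1Z=U'\Sigma'(V')^T$; then $\mathrm{colspan}(Y)=\mathrm{colspan}(AS_1G)$ for $G=ZV'_k$, which is orthonormal with columns in $\mathrm{colspan}(Z)$, so its leverage scores satisfy $\norm{g_i}_2^2=\norm{z_i^TV'_k}_2^2\le\norm{z_i}_2^2\propto p_i^{(3)}$. Hence $S_3$ samples columns of $AS_1$ with probabilities over‑estimating the leverage scores of the $k$‑dimensional space $\mathrm{colspan}(G)$, using a score vector summing to $\norm{Z}_F^2=k_1=\Theta(k/\epsilon)$; therefore $t_3=\tilde O(k/\epsilon+k/\epsilon^2)$ samples make $S_3$ a subspace (hence affine) embedding for this space. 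By now‑standard sketching‑for‑low‑rank‑approximation arguments, with probability $\ge 99/100$ the column span of $AS_1S_3$ contains a rank‑$k$ matrix with row span in $\mathrm{rowspan}(Z^T)$ within $(1+\epsilon)$ of the Step~1 optimum, i.e.\ $\min_{\rank(W)=k}\norm{AS_1S_3WZ^T-AS_1}_F^2\le(1+O(\epsilon))\norm{A-A_k}_F^2$.

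\textbf{Step 3 (approximate regression and projection).}
Reparametrizing through an orthonormal basis $V$ of $\mathrm{colspan}(AS_1S_3)$, Step~6 is a rank‑constrained multiple‑response regression with left design $V$ (dimension $t_3$) and fixed right factor $Z^T$; sampling $t_4=\tilde O(t_3/\epsilon^2)$ rows by the leverage scores of $V$ gives an affine embedding for $\mathrm{colspan}(V)$, so the computed $W$ satisfies $\norm{AS_1S_3WZ^T-AS_1}_F^2\le(1+\epsilon)\min_{\rank(W')=k}\norm{AS_1S_3W'Z^T-AS_1}_F^2\le(1+O(\epsilon))\norm{A-A_k}_F^2$ with probability $\ge 99/100$. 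Since $Q$ is an orthonormal basis for $\mathrm{colspan}(AS_1S_3W)$ and $AS_1S_3WZ^T$ has column span inside it, the projection $QQ^TAS_1$ is the best Frobenius approximation of $AS_1$ with that column span, so $\norm{AS_1-QQ^TAS_1}_F^2\le\norm{AS_1-AS_1S_3WZ^T}_F^2\le(1+O(\epsilon))\norm{A-A_k}_F^2$. A union bound over the failure events (including the $2/100$ from Lemma~\ref{AS1spectral}) gives success probability $\ge 95/100$, and rescaling $\epsilon$ gives the stated bound.

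\textbf{Main obstacle.}
The conceptual heart is Step~1: a purely \emph{spectral}‑norm guarantee for the rank‑$k_1$ span $Z$ at error $\tfrac{\epsilon}{k}\norm{A-A_k}_F^2$ must be upgraded to a \emph{Frobenius}‑norm $(1+\epsilon)$ rank‑$k$ guarantee inside $\mathrm{rowspan}(Z^T)$, which hinges on the elementary but essential bound $\norm{M}_F^2\le k\norm{M}_2^2$ for rank‑$k$ $M$ and on carrying the $(\epsilon,k)$‑column PCP property of $AS_1$ through. Everything afterward is an application of standard leverage‑score column selection and rank‑constrained approximate regression; the only delicate bookkeeping there is matching the sample sizes $t_3,t_4$ to the (over‑)sampling probabilities so that the embedding guarantees hold with the claimed constant probability.
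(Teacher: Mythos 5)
Your proof is correct and follows essentially the same three-stage structure as the paper's: (i) exhibit a good rank-$k$ candidate with rows in $\mathrm{rowspan}(Z^T)$ via the spectral bound of Lemma~\ref{AS1spectral} and the inequality $\|M\|_F^2 \le k\|M\|_2^2$ for rank-$k$ $M$, (ii) argue that sampling by the row norms of $Z$ ensures $\mathrm{colspan}(AS_1S_3)$ captures such an approximation, and (iii) use leverage-score sampling for the rank-constrained regression and observe that projecting onto $Q$ can only help. The only real differences are cosmetic: in stage (i) you take the candidate $(AS_1ZZ^T)_k$ and a variational $\sum_{j\le k}\sigma_j^2$ argument while the paper takes $(AS_1)_kZZ^T$ and a direct Pythagorean split (both give $(1+2\epsilon)\|A-A_k\|_F^2$); in stage (ii) you parametrize the optimal right factor through the SVD of $AS_1Z$ while the paper writes $M^*=Y^*N^*$ with $N^*$ orthonormal (same leverage-score domination argument); and in stage (iii) you invoke the affine-embedding property for leverage-score sampling as a black box, while the paper re-derives it explicitly via a Pythagorean decomposition plus approximate matrix multiplication for the cross term. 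These are interchangeable, so the argument stands.
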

\begin{proof}
We first consider the optimization problem:
\begin{align*}
M^* = \argmin_{M | \rank(M) = k} \norm{M Z^T - AS_1}_F^2.
\end{align*}
We have:
\begin{align}\label{mstar}
\norm{M^*Z^T - AS_1}_F^2 &\le \norm{(AS_1)_k ZZ^T - AS_1}_F^2 \nonumber\\
&= \norm{[AS_1 - (AS_1)_k]  + (AS_1)_k(I-Z Z^T)}_F^2\nonumber\\
& =  \norm{AS_1 - (AS_1)_k}_F^2  + \norm{(AS_1)_k(I-Z Z^T)}_F^2\nonumber\\
&\le \norm{AS_1 - (AS_1)_k}_F^2 + k \cdot \norm{AS_1(I-Z Z^T)}_2^2\nonumber\\
&\le \norm{AS_1 - (AS_1)_k}_F^2 + \epsilon \norm{A-A_k}_F^2\nonumber\\
&\le (1+2\epsilon) \norm{A-A_k}_F^2
\end{align}
where the second to last step uses that $\norm{AS_1(I-Z Z^T)}_2^2 \le \frac{\epsilon}{k} \norm{A-A_k}_F^2$ by Lemma \ref{AS1spectral} and the last step uses that $\norm{AS_1 - (AS_1)_k}_F^2 \le (1+\epsilon) \norm{A-A_k}_F^2$ since $AS_1$ is an $(\epsilon,k)$-column PCP of $A$.

Note that since it is rank-$k$ we can write $M^* = Y^* N^*$ where $Y^* \in \mathbb{R}^{n \times k}$ and $N^* \in \mathbb{R}^{k \times k_1}$ has orthonormal rows. $Y^*$ is the solution to the unconstrained low-rank approximation problem:
\begin{align*}
Y^* = \argmin_{Y \in \mathbb{R}^{n\times k}} \norm{YN^* Z^T - AS_1}_F^2.
\end{align*}
$N^* Z^T$ has orthonormal rows, so its column norms are its leverage scores. $S_3$ is sampled using the column norms of $Z^T$, which upper bound those of $N^* Z^T$. Since $\norm{Z}_F^2 = \lceil ck/\epsilon \rceil$,  $t_3 = \Theta \left ( \norm{Z}_F^2 \log(\norm{Z}_F^2) + \norm{Z}_F^2/\epsilon \right)$. It is thus well known (see e.g. Theorem 38 of \cite{clarkson2013low}, and \cite{drineas2008relative} for earlier work) that if we set
\begin{align*}
\tilde Y = \argmin_{Y \in \mathbb{R}^{n\times k}} \norm{YN^* Z^T S_3 - AS_1 S_3}_F^2
\end{align*}
we have with probability $99/100$:
\begin{align}\label{ytilde}
\norm{\tilde YN^* Z^T- AS_1}_F^2 \le (1+\epsilon)  \norm{Y^*N^* Z^T - AS_1}_F^2.
\end{align}
$\tilde Y$ can be computed in closed form as $\tilde Y = AS_1 S_3(N^* Z^TS_3)^+$, which is in the column span of $AS_1S_3$. Thus 
\eqref{ytilde} demonstrates that there is some rank-$k$ $M$ in this span satisfying $\norm{MZ^T - AS_1}_F^2 \le (1+\epsilon) \norm{M^* Z^T - AS_1}_F^2 \le (1+4\epsilon) \norm{A-A_k}_F^2$ by \eqref{mstar}. Thus if we compute
\begin{align*}
W^* = \argmin_{W | \rank(W) = k} \norm{AS_1S_3 W Z^T - AS_1}_F^2.
\end{align*}
we will have 
\begin{align}\label{wstarbound}
\norm{AS_1 S_3 W^* Z^T - AS_1}_F^2 \le (1+4\epsilon) \norm{A-A_k}_F^2.
\end{align}
We can compute $W^*$ approximately by sampling the rows of $AS_1S_3$ by their leverage scores. These are given by the row norms of the orthogonal basis $V$ as computed in Step 5 and sampled with to obtain $S_4$ in Step 6.
For any $W$, by the Pythagorean theorem, since $V$ spans $AS_1S_3$,
\begin{align}\label{affineBreakdown}
\norm{AS_1S_3 W Z^T - AS_1}_F^2 &= \norm{AS_1S_3 W Z^T - VV^TAS_1}_F^2+ \norm{(I-VV^T)AS_1}_F^2.
\end{align}
Similarly
\begin{align}\label{affineSketchBreakdown}
\norm{S_4^TAS_1S_3 W Z^T - S_4^TAS_1}_F^2 &= \norm{S_4^TAS_1S_3 W Z^T - S_4^TVV^TAS_1}_F^2+ \norm{S_4^T(I-VV^T)AS_1}_F^2 \nonumber\\&
+ 2 \tr \left ( (ZW^T S_3^TS_1^TA^T - S_1^T A^TVV^T)S_4S_4^T(I-VV^T)AS_1 \right ).
\end{align}

For the first term, since $S_4$ is sampled via the leverage scores of $AS_1S_3$ and $t_4 = c_4 \left (\frac{t_3 \log t_3}{\epsilon^2} \right )$, $S_4$ gives a subspace embedding for the column span of $AS_1S_3$ with high probability and:
\begin{align}\label{affine1}
\norm{S_4^TAS_1S_3 W Z^T - S_4^TVV^TAS_1}_F^2 \in (1 \pm \epsilon) \norm{AS_1S_3 W Z^T - VV^TAS_1}_F^2
\end{align}
For the cross term, again since $V$ spans the columns of $AS_1S_3$ we have:
\begin{align*}
 \tr \big ( (ZW^T S_3^TS_1^TA^T - S_1^T A^TVV^T)&S_4S_4^T(I-VV^T)AS_1 \big )\\
 &=  \tr \left ( (ZW^T S_3^TS_1^TA^T - S_1^T A^T)VV^TS_4S_4^T(I-VV^T)AS_1 \right )\\
&\le \norm{AS_1S_3 W Z^T - AS_1}_F \cdot \norm{VV^T S_4S_4^T(I-VV^T)AS_1}_F\\
&\le \norm{AS_1S_3 W Z^T - AS_1}_F \cdot \epsilon \norm{(I-VV^T)AS_1}_F
\end{align*}
where the last bound follows with probability $99/100$ from a standard approximate matrix multiplication result \cite{drineas2006fast} since $S_4$ is sampled using the row norms of $V$.
$\norm{(I-VV^T)AS_1}_F \le  \norm{AS_1S_3 W^* Z^T - AS_1}_F \le \norm{AS_1S_3 W Z^T - AS_1}_F$ and so overall: 
\begin{align}\label{affine3}
\tr \big ( (ZW^T S_3^TS_1^TA^T - S_1^T A^TVV^T)&S_4S_4^T(I-VV^T)AS_1 \big ) \le \epsilon \norm{AS_1S_3 W Z^T - AS_1}_F^2.
\end{align}
Combining \eqref{affineBreakdown}, \eqref{affineSketchBreakdown}, \eqref{affine1}, and \eqref{affine3} for \emph{any} $W$,
\begin{align*}
\norm{S_4^TAS_1S_3 W Z^T - S_4^TAS_1}_F^2 \in (1\pm 3\epsilon) \norm{AS_1S_3 W Z^T - AS_1}_F^2 + \Delta
\end{align*}
where $\Delta = \norm{S_4^T(I-VV^T)AS_1}_F^2 - \norm{(I-VV^T)AS_1}_F^2$ is fixed independent of $W$. Thus, if we set:
\begin{align*}
\tilde W = \argmin_{W | \rank(W) = k} \norm{S_4^TAS_1S_3 W Z^T - S_4^TAS_1}_F^2.
\end{align*}
with probability $95/100$ union bounding over all the probabilistic events above, and applying \eqref{wstarbound}:
\begin{align*}
\norm{AS_1S_3 \tilde W Z^T - AS_1}_F^2 \le \frac{(1+3\epsilon)}{(1-3\epsilon)} \norm{AS_1S_3 W^* Z^T - AS_1}_F^2 \le \frac{(1+3\epsilon)(1+4\epsilon)}{(1-3\epsilon)} \norm{A-A_k}_F^2
\end{align*}
which gives the lemma after adjusting constants on $\epsilon$ by making $c,c',c_1,c_2,c_3,c_4$ large enough.
\end{proof}

\begin{proof}[Proof of Theorem \ref{thm:main}]
By Lemma \ref{AS1frob} and since $AS_1$ is an $(\epsilon,k)$-column PCP of $A$, we have $\norm{A-QQ^TA}_F^2 \le \frac{1+\epsilon}{1-\epsilon} \norm{A-A_k}_F^2.$ In Step 7
we sample $c_5(k\log k +k/\epsilon)$ rows of Q by their standard leverage scores (their row norms) and so have with probability 99/100:
\begin{align*}
\norm{QN^T - A}_F^2 \le (1+\epsilon) \min_{Y \in \R^{n\times k}} \norm {QY^T - A}_F^2 = (1+\epsilon) \norm{A-QQ^T A}_F^2 \le \frac{(1+\epsilon)^2}{1-\epsilon} \norm{A-A_k}_F^2.
\end{align*}
Union bounding over failure probabilities and adjusting constants on $\epsilon$ yields the final claim.

It just remains to discuss \nameref{mainAlgo}'s runtime and query complexity. 
In Step 3, forming $\tilde A = S_2^T A S_1$ requires reading $t_1 \cdot t_2 = O\left (\frac{nk \log^2 n}{\epsilon^{2.5}}\right)$ entries of $A$. This dominates the access cost of Step 1, which requires $O(n k_1 \log k_1) = O\left (\frac{nk \log (k/\epsilon)}{\epsilon} \right)$ accesses by Lemma \ref{thm:originalSampling}, forming $AS_1S_3$ in Step 5 which requires $O \left (\frac{nk\log (k/\epsilon)}{\epsilon} +\frac{nk}{\epsilon^2} \right)$ accesses, and forming $S_5^T A$ in Step 7 which requires $O \left (nk\log k + \frac{nk}{\epsilon} \right )$ accesses. Forming $S_4^T A S_1$ in Step 6 requires $t_1 \cdot t_4 = O \left (\frac{\sqrt{n} k^{1.5} \log(k/\epsilon) \log n}{\epsilon^6}  \right )$ accesses, which when $n$ is large compared to $k$ and $1/\epsilon$ will be dominated by our linear in $n$ terms.

For time complexity, Step 1 requires $O(n (k_1 \log k_1)^{\omega-1}) = \tilde O \left (\frac{nk^{\omega-1}}{\epsilon^{\omega-1}}\right)$ time. Computing $Z$ in Step 4 can be done using an input sparsity time algorithm for spectral norm error with rank $k_1$ and error parameter $\epsilon' = \Theta(1)$. 
By Theorem 27 of \cite{cohen2015dimensionality} or using input sparsity time ridge leverage score sampling \cite{cohen2015ridge} in conjunction with the spectral norm PCP result of Lemma \ref{thm:spectralpcp}, the total runtime required is $O (\nnz(\tilde A) )  + \tilde O (\frac{\sqrt{n} k^{\omega-1}}{\epsilon^{\omega-1}}  )$. 

 We can compute $V$ in Step 5 in $O(nt_3^{\omega-1}) = \tilde O \left (\frac{nk^{\omega-1}}{\epsilon^{2(\omega-1)}}\right)$. In Step 6, we can compute $W$ by first multiplying $S_4^T A S_1$ by $Z$ and then multiplying by $(S_4^T A S_1 S_3)^+$ and taking the best rank-$k$ approximation of the result. The total runtime is $\tilde O \left (t_1 \cdot k^{\omega-1} +k^\omega \right) \cdot\poly(1/\epsilon) = \tilde O(\sqrt{n} k^{\omega-.5}) \cdot \poly(1/\epsilon)$. Finally, we can compute $Q$ in Step 7 by first computing a $t_3 \times k$ span of the column space of $W$, multiplying this by $AS_1S_3$, and then taking an orthogonal basis for the result, requiring total time $\tilde O(k^\omega \cdot \poly(1/\epsilon)) + O\left (\frac{nk^{\omega-1} \log(k/\epsilon)}{\epsilon^2} + nk^{\omega-1} \right)$. The final regression problem can be solved by forming the pseudoinverse of $S_5^T Q$ in $O(k^\omega (\log k + 1/\epsilon))$ time and applying it to $S_5^T A$ in $O(nk^{\omega-1} (\log k + 1/\epsilon))$ time. So overall our runtime cost with linear dependence on $n$ is dominated by the cost of Step 5. We additionally must add the $\tilde O(\sqrt{n} k^{\omega-.5}) \cdot \poly(1/\epsilon)$ term from Step 6, which only dominates if $n$ is relatively small.
\end{proof}

In many applications it is desirable that the low-rank approximation to $A$ is also symmetric and positive semidefinite.
\iffinal
We show in Appendix C of the full paper
\else
We show in Appendix \ref{psd_appendix}
\fi that a modification to \nameref{mainAlgo} can satisfy this constraint also in $\tilde O(n \poly(k/\epsilon))$ time. The upshot is:
\begin{theorem}[Sublinear Time Low-Rank Approximation -- PSD Output]\label{thm:psdOutput}
There is an algorithm that given any PSD $A \in \R^{n \times n}$, accesses $\tilde O \left (\frac{nk^2}{\epsilon^2} + \frac{nk}{\epsilon^3} \right)$ entries of $A$, runs in $\tilde O \left (\frac{nk^\omega}{\epsilon^\omega} + \frac{n k^{\omega-1}}{\epsilon^{3(\omega-1)}} \right)$ time and with probability at least $9/10$ outputs $M \in \R^{n \times k}$ with:
$$\norm{A-MM^T}_F^2 \le (1+\epsilon)\norm{A-A_k}_F^2.$$
\end{theorem}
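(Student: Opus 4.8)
The plan is to modify \nameref{mainAlgo} so that, in addition to the rank-$k$ span $Q$, we compute a \emph{larger} span $Q' \in \R^{n \times r}$ with $r = \Theta(k/\epsilon)$ (up to $\poly(1/\epsilon)$ factors), in factored form via sampling, together with the \emph{strong spectral guarantee} $\norm{A - Q'Q'^TA}_2^2 \le \frac{\epsilon}{k}\norm{A-A_k}_F^2$, and then return $M \in \R^{n\times k}$ so that $MM^T$ is (an approximation to) the best rank-$k$ PSD matrix whose column span lies in $\mathrm{span}(Q')$. Producing such a $Q'$ in $\tilde O(n\poly(k/\epsilon))$ time is exactly what the spectral-norm machinery already delivers: running Steps 1--4 with target rank $\Theta(k/\epsilon)$ and constant accuracy yields, via Lemmas~\ref{leveragePCPspectral} and~\ref{AS1spectral} (using that $AS_1$ is a column PCP), a $Z$ with $\norm{AS_1 - AS_1ZZ^T}_2^2 \le \frac{\epsilon}{k}\norm{A-A_k}_F^2$, so an orthonormal basis $Q'$ for $\mathrm{span}(AS_1Z)$ inherits the spectral bound against $A$ itself (Theorem~\ref{thm:spectral} and Lemma~\ref{thm:spectralpcp} package this transfer); we may also fold $\mathrm{span}(Q)$ into $\mathrm{span}(Q')$ so that $Q'$ is simultaneously Frobenius-competitive, which is convenient for the approximate regression step.

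The correctness core is to show that the best rank-$k$ PSD approximation of $A$ supported on $\mathrm{span}(Q')$ is $(1+\epsilon)$-near-optimal. Writing $M = Q'R$ and $B = Q'^TAQ' \succeq 0$, a short calculation gives $\min_{\rank(R)=k}\norm{A - Q'RR^TQ'^T}_F^2 = \norm{A}_F^2 - \norm{B_k}_F^2$, where $B_k$ is the best rank-$k$ (necessarily PSD) approximation of $B$; since the unconstrained rank-$k$ PSD optimum is $\norm{A}_F^2 - \norm{A_k}_F^2$, the loss is exactly $\sum_{i\le k}\big(\sigma_i(A)^2 - \sigma_i(B)^2\big) \ge 0$. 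I would bound this by noting that the spectral guarantee forces each top eigenvector $v_i$ of $A$ to be well captured: $\norm{(I-Q'Q'^T)v_i}_2^2 \le \norm{(I-Q'Q'^T)A}_2^2 / \sigma_i(A)^2 \le \frac{\epsilon\norm{A-A_k}_F^2}{k\,\sigma_i(A)^2}$, and then running a Weyl / min--max perturbation argument with the test subspace $Q'Q'^TV_k$ (for $V_k$ the top-$k$ eigenvectors of $A$) to obtain $\sigma_i(A)^2 - \sigma_i(B)^2 = O\!\big(\sigma_i(A)^2\norm{(I-Q'Q'^T)v_i}_2^2\big) = O\big(\tfrac{\epsilon}{k}\norm{A-A_k}_F^2\big)$, so that the $\sigma_i(A)^2$ cancels and the sum over $i\le k$ is $O(\epsilon)\norm{A-A_k}_F^2$. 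I expect \textbf{this perturbation step to be the main obstacle}: eigenvalues $\sigma_i(A)$ near the truncation threshold need careful handling, since there $v_i$ need not be captured at all, but then $\sigma_i(A)^2 \le \frac{\epsilon}{k}\norm{A-A_k}_F^2$ is itself small and $\sigma_i(B)$ is pinched between $\sigma_i(A)$ and a nearby tail eigenvalue by Cauchy interlacing, so the term is again negligible; making this dichotomy rigorous while controlling the off-diagonal cross terms $\langle (I-Q'Q'^T)v_i,(I-Q'Q'^T)v_j\rangle$ appearing in $V_k^TQ'Q'^TV_k$ is the delicate part, and may force taking the spectral parameter somewhat below $\epsilon/k$ (hence the $\poly(1/\epsilon)$ slack in $r$).

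Finally, the PSD-constrained regression must be carried out in sublinear time. We cannot form $B = Q'^TAQ'$ exactly, since we may not read all of $A$ and $Q'$ is dense, so I would follow the template of Steps 5--7 of \nameref{mainAlgo}: sample a small submatrix of $A$ using the leverage scores of the factored $Q'$ (its row norms), reduce to a fully readable instance, compute \emph{there} the top-$k$ PSD approximation via an exact eigendecomposition, and lift the result back through $Q'$; the output $M$ then has $MM^T$ PSD because we only ever diagonalize PSD compressions. The errors introduced by sampling are absorbed by subspace-embedding and approximate-matrix-multiplication arguments exactly as in Lemmas~\ref{AS1spectral} and~\ref{AS1frob}. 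The cost over \nameref{mainAlgo} is dominated by manipulating the rank-$\Theta(k/\epsilon)$ span $Q'$: an input-sparsity-time spectral low-rank approximation on the sampled matrix plus a handful of $\tilde O(n)$-sized matrix products of inner dimension $\poly(k/\epsilon)$, giving $\tilde O(nk^2/\epsilon^2 + nk/\epsilon^3)$ accesses and $\tilde O(nk^\omega/\epsilon^\omega + nk^{\omega-1}/\epsilon^{3(\omega-1)})$ time, as claimed.
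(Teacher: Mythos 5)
Your high-level plan is the same as the paper's: obtain a rank-$\Theta(k/\epsilon)$ orthonormal span $Q'$ with $\|A-Q'Q'^TA\|_2^2\le\frac{\epsilon}{k}\|A-A_k\|_F^2$ by running the spectral machinery (Theorem~\ref{thm:spectral} with rank $\Theta(k/\epsilon)$ and constant $\epsilon'$), then find the best rank-$k$ PSD approximation of $A$ supported in $\mathrm{span}(Q')$ via leverage-score sampling and approximate regression. Where you diverge is in the one load-bearing structural fact. The paper does \emph{not} prove that the best rank-$k$ PSD $X$ with $\|A-Q'XQ'^T\|_F^2\le(1+O(\epsilon))\|A-A_k\|_F^2$ exists; it cites this directly as Lemma~10 of \cite{CWPSD}. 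You instead try to derive it from scratch via the reduction to $\sum_{i\le k}\bigl(\sigma_i(A)^2-\sigma_i(Q'^TAQ')^2\bigr)=O(\epsilon)\|A-A_k\|_F^2$ and an eigenvalue-perturbation argument. Your reduction to that sum is correct (the identity $\min_{X\succeq0,\ \rank X=k}\|A-Q'XQ'^T\|_F^2=\|A\|_F^2-\|(Q'^TAQ')_k\|_F^2$ is right), but the perturbation bound $\sigma_i(A)^2-\sigma_i(Q'^TAQ')^2=O\bigl(\sigma_i(A)^2\|(I-Q'Q'^T)v_i\|_2^2\bigr)$ that you need is asserted, not proven, and the dichotomy/cross-term issues you yourself flag are precisely the hard part. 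A naive Weyl bound gives only $\sigma_i(A)-\sigma_i(B)\le\|A-Q'Q'^TAQ'Q'^T\|_2$, which when multiplied by $\sigma_i(A)+\sigma_i(B)$ does not cancel the large $\sigma_i(A)$ factor. So this step is a genuine gap in your write-up, even though your target inequality is the right one and the min--max route could in principle be pushed through.

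A second, smaller gap: after sampling, the compressed instance you solve is \emph{not} PSD (you form something like $S_1^TAQ'$, not the PSD matrix $Q'^TAQ'$), so ``diagonalize a PSD compression'' does not literally apply. The paper handles this carefully: it writes the sketched objective in the form $\|V_z^T\Sigma_z^{-1}U_z^TS_1^TAQ'-X\|_F^2$ (up to a controlled distortion from the subspace embedding) and takes $X=(B/2+B^T/2)_{k,+}$, i.e., the symmetrization's projection onto its top $k$ \emph{positive} eigenvalues, then outputs $M=Q'X^{1/2}$. Without that symmetrize-then-clip step, the output of the sketched problem has no reason to be PSD, and the claimed PSD-ness of $MM^T$ would fail. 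You should either cite the $\|A-Q'Q'^TA\|_2$-to-PSD-approximation lemma as a black box (as the paper does) and spell out the $(B/2+B^T/2)_{k,+}$ step, or commit to proving the perturbation lemma in full, which is doable but is the nontrivial part of the argument rather than a step to be deferred.
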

\section{Query Lower Bound}\label{sec:lower}

We now present our lower bound on the number of accesses to $A$ required to compute a near-optimal low-rank approximation, matching the query complexity of Algorithm \ref{mainAlgo} up to a $\tilde O(1/\epsilon^{1.5})$ factor. 
\begin{theorem}\label{thm:lbMain}
Assume that $k,\epsilon$ are such that $nk/\epsilon = o(n^2)$. Consider any (possibly randomized) algorithm $\mathcal{A}$ that, given any PSD $A \in \R^{n\times n}$, outputs a $(1+\epsilon)$-approximate rank-$k$ approximation to $A$ (in the Frobenius norm) with probability at least $2/3$. Then there must be some input ${A}$ of which $\mathcal{A}$ reads at least $\Omega(nk/\epsilon)$ positions, possibly adaptively, in expectation. 
\end{theorem}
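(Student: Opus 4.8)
The plan is to prove this as an information-theoretic query lower bound via Yao's minimax principle. First, the ``in expectation'' phrasing reduces, by a standard truncation argument, to a distributional lower bound: run a putatively good randomized algorithm but abort (outputting garbage) the instant it would exceed $q := c\, nk/\epsilon$ reads; by Markov, if the algorithm's expected read count were $\ll q$ it would read $<q$ entries with probability $>2/3$, so the aborted algorithm would read $\le q$ entries and still succeed with probability bounded well away from $0$ --- contradicting the core claim, which is: there is a distribution $\mathcal D$ over $n\times n$ PSD matrices on which every \emph{deterministic} algorithm reading at most $q$ entries (adaptively) outputs a rank-$k$ matrix satisfying \eqref{eqn:guarantee} with probability only $o(1)$.

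The distribution $\mathcal D$ will hide a random low-rank PSD ``signal'' $\Sigma$ inside a PSD ``noise'' matrix $N$, with $A = N + \Sigma$, designed so that: (i) $A \succeq 0$ (automatic since $N,\Sigma \succeq 0$); (ii) the diagonal of $A$ carries essentially no information about $\Sigma$ --- since positive semidefiniteness forces the diagonal to be readable and this is exactly what breaks the $\nnz(A)$ lower bound for general matrices, one arranges $\Sigma$'s diagonal to be a fixed data-independent value and $N$ to have a concentrated (or exactly deterministic) diagonal; (iii) $\Sigma$ sits at the \emph{top} of the spectrum of $A$ (its nonzero eigenvalues exceed $\|N\|_2$), so that a near-optimal rank-$k$ approximation is forced to align with $\mathrm{range}(\Sigma)$; and (iv) $N$ is taken to be a \emph{low-rank} Wishart-type matrix scaled so that $\|N\|_F^2$ is $\Theta(\epsilon^{-1})$ times the per-direction signal energy, which makes $\|A - A_k\|_F^2 = \Theta(\|N\|_F^2)$ small enough that the $(1+\epsilon)$ slack forces near-exact recovery of $\mathrm{range}(\Sigma)$, yet makes the per-entry noise magnitude large enough that a single read $A_{i,i'}$ has signal-to-noise ratio only $\Theta(\sqrt{\epsilon})$ (more precisely $\Theta(\sqrt{\epsilon/k})$ after the modification described below), i.e.\ carries only $O(\epsilon)$ (resp.\ $O(\epsilon/k)$) bits about $\Sigma$.

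Two things then remain. \emph{Forcing:} using (iii)--(iv), the cost of a rank-$k$ projection $P_B$ exceeds the optimum by, roughly, the squared signal eigenvalue times the number of dimensions of $\mathrm{range}(\Sigma)$ that $P_B$ misses; after calibrating the signal so that its eigenvalue to the fourth power is $\approx 2\epsilon\,\|A-A_k\|_F^2$, any $B$ meeting \eqref{eqn:guarantee} must have $\mathrm{colspan}(B)$ be $\Omega(1)$-aligned with each planted direction --- a routine PCP / eigenvalue-gap computation. \emph{Query hardness (the crux):} recovering the planted directions to that accuracy must cost $\Omega(nk/\epsilon)$ reads. I would prove this with Fano's inequality / a sign-vector packing argument: conditioning on the adaptive transcript of queried cells and answers, $q$ reads reveal only $O(q\epsilon/k)$ nats about any fixed planted direction (by (iv) and the chain rule), whereas pinning that direction down to constant correlation needs $\Theta(n)$ nats --- forcing $q\epsilon/k = \Omega(n)$, i.e.\ $q = \Omega(nk/\epsilon)$, after which Fano's inequality gives success probability $o(1)$.

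The main obstacle is producing a construction for which forcing and query-hardness hold \emph{simultaneously with the factor $k$ present}. The $k=1$ case already requires care --- one must reconcile a large signal (so the $(1+\epsilon)$ guarantee forces recovery) with a small per-read SNR (so recovery is slow); the resolution is the low-rank Wishart noise of (iv), which has modest Frobenius norm but a per-entry magnitude large enough to push the SNR down to $\Theta(\sqrt\epsilon)$, giving the $\Omega(n/\epsilon)$ bound. Obtaining the extra factor of $k$ for general $k$ requires, in addition, that each of the $k$ planted directions be essentially $n$-dimensional and that the reads not let the algorithm ``batch'' the recovery across directions --- so that the per-read information about any one of them is only $\Theta(\epsilon/k)$ rather than $\Theta(\epsilon)$ --- while the approximation guarantee still forces recovering all $k$. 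Engineering such a ``non-batchable'' ensemble (e.g.\ by making the $k$ signal directions interfere with one another as noise at precisely the $\Theta(\sqrt{\epsilon/k})$ level, or via a block structure on the rows), and verifying that no shortcut query pattern --- exploiting the diagonal, the rank-$k$ freedom of the output, or algebraic structure in an under-noised instance --- beats the bound, is where the real work lies; covering the full parameter range allowed by $k/\epsilon = o(n)$ is additional bookkeeping.
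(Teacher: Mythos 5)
Your high-level template — reduce to deterministic algorithms on a hard input distribution via Yao's minimax principle, plant a low-rank signal inside a PSD matrix whose diagonal is fixed and hence uninformative, and show that each adaptive read conveys only $O(\epsilon/(nk))$-worth of information about the planted structure — is the right outline and shares the core intuition with the paper. But what you have written is a plan with the hard parts explicitly left blank. The construction that is supposed to make everything work (a planted low-rank signal plus a ``low-rank Wishart'' $N=GG^T$) is never pinned down, and the claim that such an $N$ simultaneously has ``modest Frobenius norm'' and ``per-entry magnitude large enough to push the SNR to $\Theta(\sqrt{\epsilon/k})$'' is not verified and is suspect: for rank-$m$ Wishart $GG^T$ with $n\times m$ Gaussian $G$ one has $\|N\|_F^2 \approx n^2\cdot(\text{typical }|N_{ij}|)^2$, the same scaling as any dense random matrix, so there is no obvious separation. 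You also need both $\Sigma$ and $N$ to have (essentially) deterministic diagonals to kill the diagonal leakage, which conditions the Gaussian and complicates the independence structure your chain-rule argument relies on. Most importantly, you flag the factor-$k$ step — preventing a single read from yielding information about many planted directions at once — as ``where the real work lies'' and then stop; but that step \emph{is} the proof. As written there is nothing concrete to check, and no indication that the Fano/packing route can actually be closed.

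The paper takes a genuinely different and much simpler route that resolves exactly the two difficulties you stumble on. The hard instance is combinatorial, not Gaussian: $A$ is the identity plus a union of disjoint all-ones cliques. The index set $[n]$ is partitioned into $k$ disjoint groups of size $n/(2k)$ (plus $n/2$ spare indices), and independently inside each group, with probability $1/2$, a random $\Theta(\sqrt{\epsilon n/k})\times\Theta(\sqrt{\epsilon n/k})$ all-ones block is planted. The diagonal is identically $1$ and thus carries no information, the matrix is trivially PSD, and $\|A-A_k\|_F^2 = \Theta(n)$ while each clique carries $\Theta(\epsilon n/k)$ mass, so a $(1+\Theta(\epsilon))$-approximate rank-$k$ $B$ must ``recover'' a constant fraction of the cliques (this is Lemmas~\ref{lem:struct} and~\ref{lem:flowercup}, the ``primitive approximation'' reduction). \emph{Disjointness of the $k$ groups makes your batching concern vanish by construction}: a cell read inside group $\ell$ says nothing about group $\ell'\ne\ell$, so by symmetry the algorithm must spend $\Omega(n/\epsilon)$ reads per group. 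The distinguishing cost per group is then not a Fano computation but a one-line total-variation bound (Claim~\ref{clm:flowercup2}): conditioned on everything outside the group, the only cells whose values depend on the random clique are the $\Theta(\epsilon n/k)$ cells inside it, out of $\Theta(n^2)$ possible reads, so $r$ reads give transcript TV distance at most $O(\epsilon r/(nk))$. Requiring $\Omega(1)$ TV distance to distinguish (Corollary~\ref{cor:dtv}) gives $r=\Omega(nk/\epsilon)$. None of this needs Wishart tails, SNR calibration, or information-per-read chain rules. If you want to salvage your write-up, the essential missing ingredient is the disjoint block decomposition: it is what turns the $\Omega(n/\epsilon)$ single-signal bound into $\Omega(nk/\epsilon)$ without any delicate ``non-batchability'' argument.
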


\subsection{Lower Bound Approach}\label{sec:sublinearLBApproach}
We prove Theorem \ref{thm:lbMain} via Yao's minimax principle \cite{yao1977probabilistic}, proving a lower bound for randomized algorithms on worst-case inputs via a lower bound for deterministic algorithms on a hard input distribution.
Specifically, we will
draw the input $A$ from a distribution over binary matrices. $A$ has all $1$'s on its diagonal, along with $k$ randomly positioned (non-contiguous) blocks of all $1$'s, each of size $\sqrt{2\epsilon n/k} \times \sqrt{2\epsilon n/k}$. In other words, $A$ is the adjacency matrix (plus identity) of a graph with $k$ cliques of size $ \sqrt{2\epsilon n/k}$, placed on random subsets of the vertices, with all other vertices isolated.

It is easy to see that every $A$ chosen from the above distribution is PSD since applying a permutation yields a block diagonal matrix, each of whose blocks is a PSD matrix (either a single $1$ entry or a rank-$1$ all $1$'s block). Additionally, for every ${A}$ chosen from the distribution, the optimal rank-$k$ approximation to $A$ projects off each of the $k$ blocks, achieving Frobenius norm error $\norm{A-A_k}_F^2 = n - k\sqrt{2\epsilon n/k} \approx n$. To match this up to a $1+\epsilon$ factor, any near-optimal rank-$k$ approximation must at least capture a constant fraction of the Frobenius norm mass in the blocks since this mass is $k \cdot 2\epsilon n/k  = 2\epsilon n$. 

Doing so requires identifying at least a constant fraction of the blocks. However, since block positions are chosen uniformly at random, and since the diagonal entries of $A$ are identical and so convey no information about the positions, to identify a single block, any algorithm essentially must read arbitrary off-diagonal entries until it finds a $1$. There are $\approx n^2$ off-diagonal entries with just $2\epsilon n$ of them $1$'s, so identifying a first block requires $\Omega(n/\epsilon)$ queries to $A$ in expectation (over the random choice of ${A}$). Since the vast majority of vertices are isolated and not contained within a block, finding this first block does little to make finding future blocks easier. So overall, the algorithm must make $\Omega(nk/\epsilon)$ queries in expectation to find a constant fraction of the $k$ blocks and output a near-optimal low-rank approximation with good probability.

While the above intuition is the key idea behind the lower bound,
a rigorous proof requires a number of additional steps and modifications,
detailed in the remainder of this section. In Section \ref{sec:primitive} we introduce the notion of a \emph{primitive approximation} to a matrix, which we will employ in our lower bound for low-rank approximation. In Section \ref{sec:lowrankLB} we show that any deterministic low-rank approximation algorithm that succeeds with good probability on the input distribution described above can be used to give an algorithm that computes a primitive approximation with good probability on a matrix drawn from a related input distribution (Lemma \ref{lem:flowercup}). This reduction yields a lower bound for deterministic low-rank approximation algorithms (Theorem \ref{thm:almostlbMain}), which gives Theorem \ref{thm:lbMain} after an application of Yao's principle.

\subsection{Primitive Approximation}\label{sec:primitive}

We first define the notion of an \emph{$\epsilon$-primitive approximation} to a matrix and establish some basic properties of these approximations. 

\begin{definition}[$\epsilon$-primitive Approximation]\label{def:primitive}
A matrix $A' \in \R^{m\times m}$ is said to be $\epsilon$-{\it primitive} for $A \in \R^{m\times m}$ if the squared Frobenius norm of $A-A'$, restricted to its off-diagonal entries is $\le \epsilon m$. Note that $A'$ is allowed to have any rank. 
\end{definition}

We also define a distribution on matrices with a randomly placed block of all one entries that will will later use in our hard input distribution construction.

\begin{definition}[Random Block Matrix]\label{def:randomBlock} For any $m,\epsilon$ with $m/\epsilon \le m^2$,
let $\mu(m,\epsilon)$ be the distribution on ${A} \in \R^{m \times m}$ defined as follows. We choose a uniformly random subset $S$ of $[m]$ where $|S| = \sqrt{16 \epsilon m}$, where we assume for simplicity that $|S|$ is an integer. We generate a random matrix $A$ by setting for each $i \neq j \in S$, $A_{i,j} = 1$. We then set $A_{i,i} = 1$ for all $i$ and set all remaining entries of $A$ to equal $0$.
\end{definition}

Note that we associate a random subset $S$ with the sampling of a matrix $A$ according to $\mu(m,\epsilon)$.
It is clear that any $A$ in the support of $\mu(m,\epsilon)$ is PSD. This is since any ${A}$ in the support of $\mu(m,\epsilon)$, after a permutation, is composed of an $|S| \times |S|$ all ones block and an $(m -|S|) \times (m-|S|)$ identity.

If ${A}'$ is $\epsilon$-{\it primitive} for ${A}$ in the support of $\mu(m,\epsilon)$, it approximates ${A}$ up to error $\epsilon m$ on its off-diagonal entries. Let $R$ denote the set of off-diagonal entries of ${A}$ restricted to the intersection of the rows and columns indexed by $S$. The error on the entries of $R$ is at most $\epsilon m$. Further, restricted to these entries, ${A}$ is an all ones matrix. Thus, on the entries of $R$, both  ${A}$ and ${A}'$ look far from an identity  matrix. Formally, we show:

\begin{lemma}\label{lem:struct}
If $A' \in \R^{m\times m}$ is $\epsilon$-primitive for an $A \in \R^{m\times m}$ in the support of $\mu(m,\epsilon)$, then $A'$ is not $\epsilon$-primitive for $I$. 
\end{lemma}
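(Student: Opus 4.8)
The plan is to show that any matrix $A'$ that is simultaneously $\epsilon$-primitive for some $A$ in the support of $\mu(m,\epsilon)$ and $\epsilon$-primitive for the identity $I$ would have to approximate, on the entries of $R$ (the off-diagonal entries indexed by $S \times S$), both the all-ones matrix and the all-zeros matrix well — which is impossible because those two targets are far apart on $R$. Concretely, $R$ contains $|S|^2 - |S| = 16\epsilon m - \sqrt{16\epsilon m}$ entries, and on each of them $A$ has value $1$ while $I$ has value $0$. So $\|A - I\|$ restricted to $R$ has squared Frobenius norm exactly $|R| = 16\epsilon m - \sqrt{16\epsilon m}$, which for $m/\epsilon \le m^2$ (so $\epsilon m \ge 1$, hence $\sqrt{16\epsilon m} \le 4\epsilon m$) is at least $12\epsilon m > 4\epsilon m$.

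The key step is a triangle-inequality argument on the entries of $R$. Write $\|M\|_{R}$ for the Frobenius norm of $M$ restricted to the coordinate set $R$. Since every entry of $R$ is off-diagonal, the $\epsilon$-primitivity of $A'$ for $A$ gives $\|A - A'\|_R^2 \le \|A-A'\|_{\text{off-diag}}^2 \le \epsilon m$, i.e. $\|A-A'\|_R \le \sqrt{\epsilon m}$; and if $A'$ were also $\epsilon$-primitive for $I$, then likewise $\|I - A'\|_R \le \sqrt{\epsilon m}$. By the triangle inequality for the (restricted) Frobenius norm,
\[
\|A - I\|_R \le \|A - A'\|_R + \|A' - I\|_R \le 2\sqrt{\epsilon m}.
\]
Squaring gives $\|A-I\|_R^2 \le 4\epsilon m$. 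But we computed $\|A - I\|_R^2 = |R| = 16\epsilon m - \sqrt{16\epsilon m} > 4\epsilon m$, a contradiction. Hence $A'$ cannot be $\epsilon$-primitive for $I$.

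The only thing requiring a moment's care — and the closest thing to an "obstacle," though it is routine — is verifying the numerical inequality $|R| > 4\epsilon m$, i.e. that $16\epsilon m - \sqrt{16\epsilon m} > 4\epsilon m$, which reduces to $12\epsilon m > \sqrt{16\epsilon m} = 4\sqrt{\epsilon m}$, i.e. $3\sqrt{\epsilon m} > 1$, which holds whenever $\epsilon m \ge 1$ — guaranteed by the standing assumption $m/\epsilon \le m^2$. One should also double-check that $|S| = \sqrt{16\epsilon m} \le m$ so that the construction makes sense and $|R|$ is as claimed, which again follows from $m/\epsilon \le m^2$. Everything else is the definition of $\epsilon$-primitivity (Definition \ref{def:primitive}) and the structure of matrices in the support of $\mu(m,\epsilon)$ (Definition \ref{def:randomBlock}).
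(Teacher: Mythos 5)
Your proof is correct, and it takes a genuinely different (and arguably cleaner) route than the paper. The paper argues by a counting/pigeonhole step: since $\|A-A'\|_R^2 \le \epsilon m$ and $|R| \ge 12\epsilon m$, on at least $8\epsilon m$ entries of $R$ the matrix $A'$ must have value $\ge 1/2$, and on those entries alone the squared error of $A'$ against the all-zeros identity is already $\ge (1/2)^2 \cdot 8\epsilon m = 2\epsilon m > \epsilon m$. You instead observe that $\epsilon$-primitivity for both $A$ and $I$ would force $\|A-I\|_R \le \|A-A'\|_R + \|A'-I\|_R \le 2\sqrt{\epsilon m}$, hence $\|A-I\|_R^2 \le 4\epsilon m$, contradicting $\|A-I\|_R^2 = |R| > 4\epsilon m$. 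Both use the same ingredients ($|R| = |S|^2 - |S|$ and the standing assumption $\epsilon m \ge 1$ from $m/\epsilon \le m^2$), but your triangle-inequality version avoids the threshold-and-count bookkeeping entirely and yields the contradiction in one line; the paper's version gives slightly more slack ($2\epsilon m$ vs.\ $\epsilon m$) but is no shorter. Your arithmetic check ($|R| > 4\epsilon m \Leftrightarrow 3\sqrt{\epsilon m} > 1$, which holds since $\epsilon m \ge 1$) is correct.
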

\begin{proof}
By definition, any $\epsilon$-primitive matrix ${A}'$ for ${A}$ has squared Frobenius norm restricted to ${A}$'s off-diagonal entries of $\le \epsilon m$. Let $R$ denote the set of off-diagonal entries in the intersection of the rows and columns of ${A}$ indexed by $S$. Assuming $|S| \geq 4$, which follows from the assumption in Definition \ref{def:randomBlock} that $m/\epsilon \le m^2 $, $|R| = |S|^2 - S \ge 12\epsilon m$.

Restricted to the entries in $R$, ${A}$ is an all ones matrix. Thus, on at least $8\epsilon m$ of these entries ${A}'$ must have value at least $1/2$. Otherwise it would have greater than $|R| - 4\epsilon m \ge 4\epsilon m$ entries with value $\le 1/2$ and so squared Frobenius norm error on ${A}$'s off-diagonal entries greater than $\frac{1}{2^2} \cdot 4\epsilon m \ge \epsilon m$, contradicting the assumption that it is $\epsilon$-primitive for ${A}$.

Restricted to the entries in $R$, the identity matrix is all zero. Thus, ${A}'$ has Frobenius norm error on these entries at least $\frac{1}{2^2} \cdot 8\epsilon m \ge 2\epsilon m$. Thus ${A}'$ is not $\epsilon$-primitive for ${I}$, giving the lemma.
\end{proof}

Consider ${A}$ drawn from $\mu(m,\epsilon)$ with probability $1/2$ and set to ${I}$ with probability $1/2$. By Lemma \ref{lem:struct}, any (possibly  randomized) algorithm $\mathcal{A}$ that returns an $\epsilon$-primitive approximation to ${A}$ with good probability
 can be used to distinguish with good probability whether ${A}$ is in the support of $\mu(m,\epsilon)$ or ${A} = {I}$. This is because, by Lemma \ref{lem:struct}, the output of such an algorithm can only  be correct either for some ${A}$ in the support of $\mu(m,\epsilon)$ or for the identity.
We first define the distribution over matrices to which this result applies:
\begin{definition}\label{def:gamma1}
For any $m,\epsilon$ with $m/\epsilon \le m^2$,
let $\gamma(m,\epsilon)$ be the distribution on ${A} \in \R^{m \times m}$ defined as follows. With probability $1/2$ draw ${A}$ from $\mu(m,\epsilon)$ (Definition \ref{def:randomBlock}). Otherwise, draw ${A}$ from $\nu(m)$, which is the distribution whose support is only the $m \times m$ identity matrix.
\end{definition}
Note that, as with $\mu(m,\epsilon)$, we associate a random subset $S$ with $|S| = \sqrt{16\epsilon m}$ with the sampling of a matrix ${A}$ according to $\gamma(m,\epsilon)$. $S$ is not used in the case that $A$ is drawn from $\nu(m)$.

Formally, since $\mathcal{A}$ can distinguish whether ${A}$ is in the support of $\mu(m,\epsilon)$ or $\nu(m)$ (i.e., ${A}={I}$) with good probability we can prove that the distribution of $\mathcal{A}$'s access pattern (over randomness in the input and possible randomization in the algorithm) is significantly different when it is given input ${A}$ drawn from $\mu(m,\epsilon)$ than when it is given ${A}$ drawn from $\nu(m)$.
Recall for distributions $\alpha$ and $\beta$ supported on elements $s$ of a finite set $S$, that the total variation distance $D_{TV}(\alpha, \beta) = \sum_{s \in S} |\alpha(s) - \beta(s)|$, where
$\alpha(s)$ is the probability of $s$ in distribution $\alpha$.  
\begin{corollary}\label{cor:dtv}
Suppose that a (possibly randomized) algorithm $\mathcal{A}$,
with probability at least $7/12$ over its random coin flips and random input $A \in \R^{n\times  n}$ drawn from $\gamma(m,\epsilon)$, outputs an $\epsilon$-primitive matrix for $A$. Further, suppose that $\mathcal{A}$ reads at most $r$ positions of $A$, possibly adaptively.\footnote{That is, for any input ${A}$, in any random execution, $\mathcal{A}$ reads at most $r$ entries of ${A}$.} Let $S$ be a random variable indicating the list of positions read and their corresponding values.\footnote{$S$ is determined by the random input ${A}$ and the random choices of $\mathcal{A}$. Since $\mathcal{A}$ reads at most $r$ positions of any input, we always have $|S| \le r$.} Let $L(\mu)$ denote the distribution of $S$ conditioned on $A \sim \mu(m,\epsilon)$, and let $L(\nu)$ denote the distribution of $S$ conditioned on ${A} \sim \nu(m)$.\footnote{Here and throughout, we let $A \sim \mu(m,\epsilon)$ denote the event that, when ${A}$ is drawn from the distribution $\gamma(m,\epsilon)$ (Definition \ref{def:gamma1}), which is a mixture of the distributions $\mu(m,\epsilon)$ and $\nu(m)$, that ${A}$ is drawn from $\mu(m,\epsilon)$. ${A} \sim \nu(m)$ denotes the analogous event for $\nu(m)$.} Then
$$D_{TV}(L(\mu), L(\nu)) \geq 1/6.$$ 
\end{corollary}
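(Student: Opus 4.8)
\emph{Approach.} The plan is to use a successful $\epsilon$-primitive approximation algorithm as a distinguisher between $\mu(m,\epsilon)$ and $\nu(m)$, but to run the distinguishing argument \emph{directly at the level of the access pattern $S$}, so that it lower bounds $D_{TV}(L(\mu),L(\nu))$. The one genuinely delicate point is that $\mathcal{A}$ is randomized, and its output is correlated with its query choices through shared internal coins, so one cannot simply declare that $\mathcal{A}$'s decision is a function of $S$. To get around this I would, without loss of generality, have $\mathcal{A}$ draw all of its random bits $\rho$ at the outset, so that for each fixed $\rho$ the algorithm $\mathcal{A}_\rho$ is deterministic; every probability below is then obtained by averaging over $\rho$, which is independent of $A$, at the very end. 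All probabilities $\Pr[\,\cdot \mid A\sim\mu\,]$ and $\Pr[\,\cdot\mid A\sim\nu\,]$ are taken over both $\rho$ and the remaining randomness of the input, as in the statement.

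\emph{The relevant event.} Let $E$ be the event, measurable with respect to the transcript $S$, that every recorded pair (position, value) is consistent with the $m\times m$ identity matrix, i.e.\ the recorded value is $1$ on the diagonal and $0$ off it. A draw $A\sim\mu(m,\epsilon)$ agrees with $I$ everywhere except on the off-diagonal entries both of whose coordinates lie in the planted set of Definition \ref{def:randomBlock}, so under $A\sim\mu(m,\epsilon)$ the event $E$ holds exactly when $\mathcal{A}$ never reads such an entry; whereas under $A\sim\nu(m)$ we have $A=I$, so $E$ holds with probability $1$. Hence $L(\nu)(E)=1$ and $L(\mu)(E)=q$ for some $q\in[0,1]$, and, using the paper's $\ell_1$ normalization of $D_{TV}$, $D_{TV}(L(\mu),L(\nu))\ge 2\,|L(\mu)(E)-L(\nu)(E)| = 2(1-q)$. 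So it suffices to prove $q\le 5/6$.

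\emph{Coupling and Lemma \ref{lem:struct}.} Fix $\rho$ and consider running $\mathcal{A}_\rho$ on an arbitrary $A$ in the support of $\mu(m,\epsilon)$ and, separately, on $I$. A routine induction on the query steps shows that as long as $\mathcal{A}_\rho$ has observed only values agreeing with $I$, the two executions query the same positions and observe the same values; in particular, on the event $E$ the $A$-execution coincides with the $I$-execution, and both therefore output the \emph{same} matrix $A'_\rho$, namely $\mathcal{A}_\rho$'s output on input $I$. If moreover $\mathcal{A}$ succeeds on $A$, then $A'_\rho$ is $\epsilon$-primitive for $A\in\mathrm{supp}(\mu(m,\epsilon))$, so by Lemma \ref{lem:struct} $A'_\rho$ is \emph{not} $\epsilon$-primitive for $I$ --- equivalently, $\mathcal{A}_\rho$ fails on input $I$, an event depending only on $\rho$. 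Averaging over $\rho$ gives $\Pr[\,\text{succeed}\wedge E\mid A\sim\mu\,]\le \Pr[\,\text{fail}\mid A\sim\nu\,] = 1-\Pr[\,\text{succeed}\mid A\sim\nu\,]$, where ``succeed'' means $\mathcal{A}$'s output is $\epsilon$-primitive for $A$.

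\emph{Conclusion.} Combining with the trivial bound $\Pr[\,\text{succeed}\wedge \overline E\mid A\sim\mu\,]\le \Pr[\,\overline E\mid A\sim\mu\,]=1-q$ yields $\Pr[\,\text{succeed}\mid A\sim\mu\,] + \Pr[\,\text{succeed}\mid A\sim\nu\,]\le 2-q$. Since $\gamma(m,\epsilon)$ is the balanced mixture of $\mu(m,\epsilon)$ and $\nu(m)$, the overall success probability is $\frac{1}{2}\big(\Pr[\,\text{succeed}\mid A\sim\mu\,]+\Pr[\,\text{succeed}\mid A\sim\nu\,]\big)\le 1-q/2$; the hypothesis that this is at least $7/12$ forces $q\le 5/6$, and hence $D_{TV}(L(\mu),L(\nu))\ge 2(1-q)\ge 1/3\ge 1/6$. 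The only real obstacle is the one flagged at the start --- handling $\mathcal{A}$'s randomness, which is correlated across its queries and its output --- and the device that resolves it is choosing $E$ to be ``the transcript is consistent with $I$'': this is an $L(\nu)$-almost-sure event, and it is exactly the regime where the coupling with the $I$-run and Lemma \ref{lem:struct} apply.
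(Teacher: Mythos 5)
Your proof is correct, and it takes a genuinely different route from the paper's. The paper's proof is essentially a one-liner: by Lemma~\ref{lem:struct}, a successful output distinguishes $\mu$ from $\nu$, and then it invokes a standard black-box fact (citing Proposition~2.58 of~\cite{by02}) that the success probability of any distinguisher is at most $1/2 + D_{TV}(L(\mu), L(\nu))/2$, from which $D_{TV} \geq 1/6$ follows. You instead make the distinguisher explicit: you exhibit a concrete event $E$ (the transcript is consistent with $I$), observe $L(\nu)(E) = 1$ so that $D_{TV} \geq 2(1 - L(\mu)(E))$ under the paper's $\ell_1$-normalization, and then use a coupling argument to upper-bound the success probability by $1 - L(\mu)(E)/2$. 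This is more self-contained (no external citation), and it actually yields the stronger bound $D_{TV} \geq 1/3$. The most valuable part of your write-up is the explicit treatment of the algorithm's internal randomness $\rho$: since the output depends on $\rho$ as well as $S$, and $\rho$ is correlated with $S$, the output is \emph{not} a function of $S$ alone; the paper leaves this implicit inside the cited black box, whereas you resolve it cleanly by fixing $\rho$ at the outset so that, for each fixed $\rho$, the $\mu$-run and the $\nu$-run coincide exactly on $E$ and produce the same deterministic output. This is a legitimate refinement and the argument goes through as stated.
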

\begin{proof}
By Lemma \ref{lem:struct}, if algorithm $\mathcal{A}$ succeeds then its output can be used to decide if $A \sim \mu(m,\epsilon)$ or if $A \sim \nu(m)$. The success probability of any such algorithm is well-known (see, e.g., Proposition 2.58 of \cite{by02}) to be at most $1/2 + D_{TV}(L(\mu), L(\nu))/2$. Making this quantity at least $7/12$ and solving for $D_{TV}(L(\mu), L(\nu))$ proves the corollary. 
\end{proof}

\subsection{Lower Bound for Low-Rank Approximation}\label{sec:lowrankLB}

We now give a reduction, showing that any deterministic relative error $k$-rank approximation algorithm that succeeds with good probability on the hard input distribution described in Section \ref{sec:sublinearLBApproach} can be used to compute an $\epsilon$-primitive approximation to ${A}$ that is drawn from $\gamma(n/(2k),\epsilon)$ with good probability. We first formally define this input distribution.
\begin{definition}[Hard Input Distribution -- Low-Rank Approximation]\label{def:hardInput2}
Suppose $2nk/\epsilon \le n^2$.
Let $\gamma_b$ be the distribution on ${A} \in \R^{n \times n}$ determined as follows. We draw a uniformly random subset $S$ of $[n]$ where $|S| = n/2$, where we assume for simplicity that $|S|$ is an integer. We further partition $S$ into $k$ subsets $S^1,S^2,...,S^k$ chosen uniformly at random. For all $\ell \in [k]$, $|S^\ell| = n/(2k)$, which we also assume to be an integer. 

Letting ${A}^\ell$ denote the entries of ${A}$ restricted to the intersection of the rows and columns indexed by $S^\ell$, we independently draw each ${A}^\ell$ from $\gamma(n/(2k),\epsilon)$ (Definition \ref{def:gamma1}).\footnote{By the assumption that $2nk/\epsilon \le n^2$ we have $\frac{n}{2k\epsilon}\le \left( \frac{n}{2k}\right)^2$ and so this is a valid setting of the parameters for $\gamma(m,\epsilon)$.} We then set ${A}_{i,i} = 1$ for all $i$ and set all remaining entries of $A$ to equal $0$.
\end{definition}

Our reduction from $\epsilon$-primitive approximation to low-rank approximation is as follows:
\begin{lemma}[Reduction from Primitive Approximation to Low-Rank Approximation]\label{lem:flowercup}
Suppose that $nk/\epsilon = o(n^2)$ and that $\mathcal{A}$ is a deterministic algorithm that, with probability $\ge 2/3$ on random input ${A} \in \R^{n \times n}$ drawn from $\gamma_b$, outputs a $(1+\epsilon/26)$-approximate rank-$k$ approximation to ${A}$. 
Further suppose $\mathcal{A}$ reads at most $r$ positions of ${A}$, possibly adaptively. 

Then there is a randomized algorithm $\mathcal{B}$ that, with probability $\ge 7/12$ over its random coin flips and random input ${B}$ drawn from $\gamma(n/(2k),\epsilon)$, outputs an $\epsilon$-primitive matrix for ${B}$. Further, letting $L(\mu), L(\nu)$ be as defined in Corollary \ref{cor:dtv} for $\mathcal{B}$,
\begin{align*}
D_{TV}(L(\mu), L(\nu)) \le \frac{2\epsilon r}{nk}.
\end{align*}
\end{lemma}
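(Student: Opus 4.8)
The plan is to turn $\mathcal{A}$ into a tester $\mathcal{B}$ for primitivity on $m\times m$ matrices, $m=n/(2k)$, by planting the input as one of the $k$ diagonal blocks of a $\gamma_b$ instance. Concretely, on input $B$ algorithm $\mathcal{B}$ draws a uniform size‑$n/2$ subset $S\subseteq[n]$, a uniform partition $S=S^1\cup\cdots\cup S^k$ into blocks of size $m$, and a uniform $\ell\in[k]$; it places $B$ on the coordinates indexed by $S^\ell$, fills the other $k-1$ blocks with independent draws from $\gamma(m,\epsilon)$, puts $1$'s on the diagonal and $0$'s everywhere else, and runs $\mathcal{A}$ on this matrix $A$ (answering a query to a position in $S^\ell\times S^\ell$ by reading the corresponding entry of $B$, and every other query from the information it generated itself). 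Writing $\tilde A$ for $\mathcal{A}$'s rank-$k$ output, $\mathcal{B}$ returns the $m\times m$ submatrix $\tilde A^\ell$ of $\tilde A$ indexed by $S^\ell$. Since $B$ and the other blocks are i.i.d.\ $\gamma(m,\epsilon)$, the matrix $A$ is distributed exactly as $\gamma_b$, so $\mathcal{A}$ succeeds with probability $\ge 2/3$; moreover, conditioned on $A$ the blocks are exchangeable, so $\ell$ is uniform on $[k]$ and independent of whether $\mathcal{A}$ succeeds.

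The core is to show that whenever $\mathcal{A}$ succeeds, only a small absolute-constant fraction of the $k$ blocks $\ell'$ fail to have $\tilde A^{\ell'}$ be $\epsilon$-primitive for $A^{\ell'}$. I would first compute the spectrum of a $\gamma_b$ matrix: after permuting it is block-diagonal with one all-ones block of side $t\eqdef\sqrt{16\epsilon m}=\sqrt{8\epsilon n/k}$ for each of the (at most $k$) active sub-blocks and the identity elsewhere, so its eigenvalues are $k'$ copies of $t$, about $n$ copies of $1$, and the rest $0$; hence $\norm{A-A_k}_F^2=n-k't-k+k'\le n$, and $t\to\infty$ by the hypothesis $k/\epsilon=o(n)$. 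Let $\Pi$ be the orthogonal projection onto the column span of $\tilde A$ (rank $\le k$) and let $v^{\ell'}\propto\mathbf{1}_{T^{\ell'}}$ be the unit top eigenvector of an active block $\ell'$. Expanding $\norm{A-\Pi A}_F^2=\norm{A}_F^2-\norm{\Pi A}_F^2$ in the eigenbasis and optimizing over how $\Pi$ allocates its ``$\tr(\Pi)\le k$'' budget between block directions and identity directions yields
\[\norm{A-\tilde A}_F^2\ \ge\ \norm{A-\Pi A}_F^2\ \ge\ \norm{A-A_k}_F^2+(t^2-1)\Big(k'-\textstyle\sum_{\ell'\text{ active}}\norm{\Pi v^{\ell'}}^2\Big),\]
so the success hypothesis forces $k'-\sum_{\ell'}\norm{\Pi v^{\ell'}}^2\le\frac{\epsilon\norm{A-A_k}_F^2}{26(t^2-1)}=O(k)$, i.e.\ $\Pi$ captures almost all of all but $O(k)$ of the top block directions. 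Next, the Pythagorean identity $\norm{A-\tilde A}_F^2=\norm{A-\Pi A}_F^2+\norm{\Pi A-\tilde A}_F^2$ gives $\norm{\Pi A-\tilde A}_F^2\le\frac{\epsilon}{26}\norm{A-A_k}_F^2$. For a block $\ell'$, bounding the off-diagonal part of $(A-\Pi A)$ inside $S^{\ell'}\times S^{\ell'}$ column by column — the $T^{\ell'}$-columns by $t^2(1-\norm{\Pi v^{\ell'}}^2)$ and the remaining columns by $\sum_j\norm{\Pi e_j}^2$, whose total over all blocks is $\le\tr(\Pi)\le k=o(\epsilon n)$ — together with a triangle inequality against $\norm{(\Pi A-\tilde A)|_{S^{\ell'}\times S^{\ell'}}}_F$, shows that for all but a small constant fraction of $\ell'$ the off-diagonal error of $\tilde A$ on $S^{\ell'}\times S^{\ell'}$ is $\le\epsilon m$ (the inactive blocks, where $A^{\ell'}=I_m$, are handled by the same computation). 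Choosing constants so this fraction is at most $1/8$ and using that $\ell$ is uniform and independent of $A$ gives $\Pr[\mathcal{B}\text{ succeeds}]\ge\tfrac23(1-\tfrac18)>\tfrac7{12}$.

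For the total-variation bound I would couple the execution of $\mathcal{B}$ on $B\sim\mu(m,\epsilon)$ with its execution on $B=I_m$, sharing all of $\mathcal{B}$'s internal randomness and a common uniformly random planted set $T\subseteq[m]$ of size $\sqrt{16\epsilon m}$. The two big matrices differ only on the $\le 16\epsilon m$ positions of $S^\ell\times S^\ell$ corresponding to the off-diagonal of $T\times T$; since $\mathcal{A}$ is deterministic, the two executions make identical queries and get identical answers until the first query lands in this ``signal'' set $\mathcal{D}$, and if no query ever lands in $\mathcal{D}$ then $\mathcal{B}$'s recorded list of positions-and-values read from $B$ is identical in the two runs. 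Hence $D_{TV}(L(\mu),L(\nu))\le\Pr[\mathcal{A}\text{ queries }\mathcal{D}]$. Because $S$, the partition, and the embedding are all uniform, $\mathcal{D}$ sits at $T'\times T'$ for a set $T'$ that is essentially a uniformly random size-$\sqrt{16\epsilon m}$ subset of $[n]$ (the other blocks exclude at most $n/2$ coordinates, costing only an absolute constant), so each of $\mathcal{A}$'s $\le r$ queries lands in $\mathcal{D}$ with probability $\le\frac{16\epsilon m}{n^2}\cdot O(1)=O\!\big(\frac{\epsilon}{nk}\big)$; a union bound over the queries — charging each to the $B=I_m$ run, whose query set is independent of $T$, to handle adaptivity — gives $D_{TV}(L(\mu),L(\nu))\le\frac{2\epsilon r}{nk}$ after absorbing lower-order factors into the constant.

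The step I expect to be the real work is the middle one: converting a global $(1+\epsilon/26)$-approximation guarantee into a per-block primitivity guarantee. Naive averaging is hopeless, since the total error budget $\approx n$ dwarfs $\sum_{\ell'}\epsilon m=\epsilon n/2$, so Markov's inequality alone allows essentially every block to be far from primitive; one must genuinely use the rank-$k$ constraint through the block eigenvectors to see that $\Pi$ captures all but $O(k)$ of the top block directions, and separately control the projection mass that $\Pi$ can scatter onto the sparse, identity-valued ``inactive'' coordinates — the only other channel through which off-diagonal error can enter a block. Making all the constants (the $\epsilon/26$ in the hypothesis, the $\le 1/8$ fraction of bad blocks, and the factor $2$ in the variation-distance bound) fit together is the delicate part; everything else is routine.
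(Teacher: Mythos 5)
Your proof of the core primitivity step (Claim~4.9 in the paper) takes a genuinely different route. The paper's argument is a short accounting argument: since $A_{ii}=1$ for every $i$, it asserts that any rank-$k$ approximation incurs squared Frobenius cost at least $n-k$ \emph{on the diagonal entries alone}; subtracting this from the success budget $\|A-A'\|_F^2\le n+\epsilon n/26$ leaves only $k+\epsilon n/26$ for the in-block off-diagonal entries $\sum_\ell c_\ell$, and a direct Markov bound over the $k$ blocks then gives a $\ge 11/12$ fraction with $c_\ell<\epsilon m$. You dismiss this direction as ``naive averaging'' and instead run a spectral argument through the column-span projection $\Pi$ of $\tilde A$: the approximation guarantee forces $(t^2-1)\bigl(k'-\sum_{\ell'}\|\Pi v^{\ell'}\|^2\bigr)\le\tfrac{\epsilon}{26}\|A-A_k\|_F^2$ so $\Pi$ captures almost every block eigendirection, Pythagoras gives $\|\Pi A-\tilde A\|_F^2\le\tfrac{\epsilon}{26}\|A-A_k\|_F^2$, and a per-column bound controls the in-block off-diagonal mass of $(I-\Pi)A$ via $t^2(k'-\alpha)$ on the $T^{\ell'}$-columns and $\tr(\Pi)\le k$ on the rest, before a final triangle inequality and Markov. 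The total-variation part of your argument (couple the $B\sim\mu$ and $B=I$ runs, condition on $\mathcal B$'s internal randomness, union-bound the $r$ adaptive queries against the uniformly placed $T'\times T'$ inside the $\ge n/2$ unused coordinates) is essentially the same as the paper's Claim~4.10, up to the same constant-factor slack in $|T'|^2/|R|^2$.

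Two concrete issues with your primitivity argument. First, your characterization of the paper's approach as ``Markov's inequality alone'' misses its key point: the paper does not apply Markov to the full budget $\approx n$ but first argues that a fixed $n-k$ of the budget must be spent off the in-block off-diagonals, leaving only $O(k+\epsilon n)$ for $\sum_\ell c_\ell$. (I will note, though, that the paper's claim that the \emph{diagonal} cost of any rank-$k$ matrix is $\ge n-k$ looks suspect as written: taking $A'=A_k$ with all $k$ blocks active gives $(A_k)_{ii}=1$ only on the $kt$ planted coordinates and $0$ elsewhere, so its diagonal cost is $n-kt<n-k$ once $t>1$ --- so your more structural route may actually be the more defensible one.) Second, and more substantively, your constants do not close with the lemma's fixed $(1+\epsilon/26)$ hypothesis. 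Tracking them: $t^2(k'-\alpha)\lesssim\epsilon n/26$, plus $\tr(\Pi)\le k$, gives in-block off-diagonal mass of $(I-\Pi)A$ of order $\epsilon n/26$; adding the $\|\Pi A-\tilde A\|_F^2\le\epsilon n/26$ piece through a triangle inequality and squaring gives $\sum_\ell c_\ell\approx 4\epsilon n/26$, hence by Markov a bad-block fraction around $4/13$. That is nowhere near the $\le 1/8$ you need for $\tfrac23\cdot\tfrac78\ge\tfrac7{12}$. Your approach would require replacing $26$ in the hypothesis by a substantially larger constant (propagating through Definition~4.6, Lemma~4.8 and Theorems~4.5/4.1), whereas the paper fixes $26$ precisely so its tighter budget accounting lands at a $1/12$ bad fraction.
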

\begin{proof}
Consider a randomized algorithm $\mathcal{B}$ that, given  ${B}$ drawn from $\gamma(n/(2k),\epsilon)$
generates a random matrix ${A}^{n \times n}$ drawn from $\gamma_b$ as follows. Choose a uniformly random subset $S$ of $[n]$ with $|S| = n/2$. Partition $S$ into $k$ subsets $S^1,S^2,...,S^k$ chosen uniformly at random. Note that for $\ell \in [k]$ $|S^\ell| = n/(2k)$. Letting ${A}^\ell$ denote the entries of ${A}$ restricted to the intersection of the rows and columns indexed by $S^\ell$, set ${A}^1 = {B}$, and for $\ell = 2,...,k$ independently draw ${A}^\ell$ from $\gamma(n/(2k),\epsilon)$. Set ${A}_{i,i} = 1$ for all $i$ and set all remaining entries of $A$ equal to $0$. 

After generating ${A}$, $\mathcal{B}$ then applies $\mathcal{A}$ to ${A}$ to compute a rank-k approximation ${A}'$. $\mathcal{B}$ then outputs $({A}')^1$, the $n/(2k) \times n/(2k)$ submatrix of ${A}'$  corresponding to the intersection of the rows and columns indexed by $S^1$. We have the following:
\begin{claim}\label{clm:flowercup1}
With probability $\ge 7/12$ over the random choices of $\mathcal{B}$ and over the random input ${B}$ drawn from $\gamma(n/(2k),\epsilon)$, $({A}')^1$ is $\epsilon$-primitive for ${B}$.
\end{claim}
\begin{proof}
It is clear that ${A}$ generated by $\mathcal{B}$ is distributed according to $\gamma_b$ (Definition \ref{def:hardInput2}). By construction, for any $A \sim \gamma_b$ there is a rank-$k$ approximation of cost at most $n$; indeed this follows by choosing the best rank-$1$ solution for each ${A}^\ell$. Consequently if $\mathcal{A}$ succeeds on ${A}$, then its output $A'$ satisfies $\|A-A'\|_F^2 \leq n + \epsilon n/26$. 

Note that ${A}$ has ${A}_{i,i} = 1$ for all $i \in [n]$. So the squared Frobenius norm cost of any rank-$k$ approximation restricted to the diagonal is at least $n-k$. Let $c_\ell$ be the squared Frobenius norm cost of ${A}'$ restricted to the off diagonal entries in ${A}^\ell$. Note that $c_\ell$ is a random variable. Then,
\begin{eqnarray*}
n-k + \sum_{\ell=1}^k c_\ell  \leq \|A'-A\|_F^2 \leq n + \epsilon n/26.
\end{eqnarray*}
By averaging for at least a $11/12$ fraction of the blocks $i$, 
$$c_i \leq 12 + 12 \epsilon n/(26k) \leq 13 \epsilon n/(26k) < \epsilon n/(2k),$$
assuming $\epsilon n / (26k) \geq 8$, which holds if $\epsilon n /k = \omega(1)$, which follows from our assumption that $nk/\epsilon = o(n^2)$. 

It follows by symmetry of $\gamma_b$ with respect to the $k$ blocks that with probability at least $11/12$, if $\mathcal{A}$ succeeds on ${A}$, $c_1 \le \epsilon n/(2k)$. This gives that $({A}')^1$ is $\epsilon$-primitive (Definition \ref{def:primitive}) for ${A}^1 = {B}$.  This yields the claim after applying a union bound, since $\mathcal{A}$ succeeds with probability at least $2/3$.
\end{proof}
It remains to bound the total variation distance between $\mathcal{B}$'s access pattern when ${B} \sim \mu(n/(2k),\epsilon)$ and when ${B} \sim \nu(n/(2k))$. We have:
\begin{claim}\label{clm:flowercup2} For the algorithm $\mathcal{B}$ defined above
\begin{align*}
D_{TV}(L(\mu), L(\nu)) \le \frac{2\epsilon r}{nk}.
\end{align*}
\end{claim}
\begin{proof}
Let $\mathcal{W}$ denote the random variable that encompasses $\mathcal{B}$'s random choices in choosing the indices in $S^2,...,S^k$ and in setting the entries in ${A}^2,...,{A}^k$. Let $\Omega$ denote the set of all possible values of $\mathcal{W}$. Let $S$ denote the random subset with $|S| = \sqrt{n/(2k) \cdot \epsilon}$ associated with the drawing of ${B}$ from $\gamma(n/(2k),\epsilon)$ (see Definition \ref{def:gamma1}). Let $\bar{S}$ denote the set of off-diagonal entries of ${A}$ in the intersection of the rows and columns indexed by $S$.

If ${B} \sim \mu(n/(2k),\epsilon)$ then all entries of $\bar{S}$ are $1$. If ${B} \sim\nu(n/(2k))$ then ${B} = {I}$ and so all entries of $\bar{S}$ are $0$. Outside of the entries in $\bar{S}$, the entries of ${B}$ are identical in the two cases that ${B} \sim \mu(n/(2k),\epsilon)$ and ${B} \sim \nu(n/(2k))$ (in particular, they are all $0$ off the diagonal and $1$ on the diagonal). 

So, for any $w \in \Omega$, conditioned on $\mathcal{W} = w$, all entries outside $\bar S$ are fixed. Further conditioned on ${B} \sim \nu(n/(2k))$, all entries in $\bar S$ are $0$. So all entries of ${A}$ are fixed and $\mathcal{A}$ always reads the same sequence of entries $(i_{w,1}, j_{w,1}), ..., (i_{w,r}, j_{w,r})$.
Further, for any $w \in \Omega$, conditioned on $\mathcal{W} = w$, $S$ is a uniform random subset of $R = [n]\setminus (S^2\cup...\cup S^k)$. $|R| = n - (k-1) \cdot n/(2k) \ge n/2$. So, for any $\ell \in [r]$ we have:
\begin{align*}
\Pr[(i_{w,\ell}, j_{w,\ell}) \in \bar{S} | \mathcal{W} = w] \le \frac{|\bar S|}{|R|^2} \le \frac{|S|^2}{n^2/4} \le \frac{2\epsilon}{nk}.
\end{align*}
By a union bound, letting $\mathcal{E}$ be the event that $\mathcal{A}$ reads $(i_{w,1}, j_{w,1}), ..., (i_{w,r}, j_{w,r})$ and does not read any entry of $\bar S$ in its $r$ accesses to ${A}$ we have for any $w \in \Omega$:
\begin{align*}
\Pr[\mathcal{E} | \mathcal{W} = w] \ge 1- \frac{2\epsilon r}{nk}.
\end{align*}
Thus, for any $w \in \Omega$, regardless of whether ${B} \sim \nu(n/(2k))$ or ${B} \sim \mu(n/(2k),\epsilon)$, $\mathcal{A}$ has access pattern  $(i_{w,1}, j_{w,1}), ..., (i_{w,r}, j_{w,r})$ to ${A}$ with probability  $\ge 1- \frac{2\epsilon r}{nk}$. Correspondingly, $\mathcal{B}$ has a fixed access pattern to ${B}$ with probability $\ge 1- \frac{2\epsilon r}{nk}$. Thus,
\begin{align*}
D_{TV}(L(\mu), L(\nu)) \le \frac{2\epsilon r}{nk}
\end{align*}
yielding the claim.
\end{proof}
In combination Claims \ref{clm:flowercup1} and \ref{clm:flowercup2} give Lemma \ref{lem:flowercup}.
\end{proof}

We can now use Lemma  \ref{lem:flowercup} to argue that if a deterministic low-rank approximation algorithm succeeding with good probability  on a random input drawn from $\gamma_b$ accesses too few entries, then it can be used to give a primitive approximation algorithm violating Corollary \ref{cor:dtv}.
\begin{theorem}[Lower Bound for Deterministic Algorithms]\label{thm:almostlbMain}
Assume that $n,k,\epsilon$ are such that $nk/\epsilon = o(n^2)$.  Consider any deterministic algorithm $\mathcal{A}$ that, given random input ${A}$ drawn from $\gamma_b$, outputs a $(1+\epsilon)$-approximate rank-$k$ approximation to $A$ (in the Frobenius norm) with probability at least $2/3$. Further, suppose $\mathcal{A}$ reads at most $r$ positions of $A$, possibly adaptively. Then $r = \Omega(nk/\epsilon)$. 
\end{theorem}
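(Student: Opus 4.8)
The plan is to obtain Theorem \ref{thm:almostlbMain} by playing the reduction of Lemma \ref{lem:flowercup} against the total-variation bound of Corollary \ref{cor:dtv}. Let $\mathcal{A}$ be a deterministic algorithm that outputs a $(1+\epsilon)$-approximate rank-$k$ approximation with probability at least $2/3$ on input drawn from $\gamma_b$, and that reads at most $r$ entries of $A$ in every execution. First I would plug $\mathcal{A}$ into the randomized algorithm $\mathcal{B}$ produced by Lemma \ref{lem:flowercup}: $\mathcal{B}$ embeds its input $B \sim \gamma(n/(2k),\epsilon)$ as the block $A^1$ of a random $A \sim \gamma_b$, fills the remaining $k-1$ blocks with fresh independent draws from $\gamma(n/(2k),\epsilon)$, runs $\mathcal{A}$, and returns the $S^1 \times S^1$ submatrix $(A')^1$ of $\mathcal{A}$'s output. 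Since $\mathcal{B}$ queries $B$ only when $\mathcal{A}$ queries a position inside block $S^1$, it reads at most $r$ positions of $B$; and by Lemma \ref{lem:flowercup} (whose proof combines Claim \ref{clm:flowercup1} and Claim \ref{clm:flowercup2}) the output is $\epsilon$-primitive for $B$ with probability at least $7/12$, while the induced access patterns satisfy $D_{TV}(L(\mu),L(\nu)) \le 2\epsilon r/(nk)$.

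Second, I would invoke Corollary \ref{cor:dtv} with $\mathcal{B}$ in the role of the $\epsilon$-primitive approximation algorithm: its $7/12$ success probability is exactly the hypothesis required, and the corollary — which rests on Lemma \ref{lem:struct}, stating that a matrix cannot simultaneously be $\epsilon$-primitive for a block matrix in the support of $\mu(n/(2k),\epsilon)$ and for the identity — forces $D_{TV}(L(\mu),L(\nu)) \ge 1/6$. Combining the two bounds gives $2\epsilon r/(nk) \ge 1/6$, i.e. $r \ge nk/(12\epsilon) = \Omega(nk/\epsilon)$, which is the claim. The only bookkeeping point is that Lemma \ref{lem:flowercup} is stated for algorithms achieving error factor $1+\epsilon/26$ rather than $1+\epsilon$; this is handled by running the whole chain with $\epsilon$ replaced by a suitable constant multiple of itself, so that a $(1+\epsilon)$-approximation serves as the required $(1+\epsilon/26)$-approximation for the rescaled parameter, which only changes the hidden constant in $\Omega(\cdot)$. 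The condition $nk/\epsilon = o(n^2)$ is preserved under such a rescaling and is precisely what makes $\gamma_b$ and $\gamma(n/(2k),\epsilon)$ well-defined (it guarantees $m/\epsilon \le m^2$ for $m = n/(2k)$, so the block side length $\sqrt{16\epsilon m}$ is a meaningful size, and it ensures the various averaging arguments in Claim \ref{clm:flowercup1}, e.g. $\epsilon n/(26k) \ge 8$, go through).

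I do not expect a genuine obstacle in this step: all of the difficulty is already absorbed into Lemma \ref{lem:flowercup} (the embedding, and the accounting showing a query is unlikely to land in the distinguishing set $\bar S$) and into the structural Lemma \ref{lem:struct}. The one thing to be careful about is tracking the success-probability thresholds — the $2/3$ for $\mathcal{A}$ degrades to $7/12$ for $\mathcal{B}$ after the $1/12$ averaging loss and union bound, and $7/12$ is exactly what Corollary \ref{cor:dtv} consumes — and the quantifier on $r$ (``reads at most $r$ positions in every execution''), so that this deterministic bound feeds cleanly into Yao's principle to yield the randomized, in-expectation lower bound of Theorem \ref{thm:lbMain}.
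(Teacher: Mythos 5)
Your proof is correct and follows essentially the same route as the paper's: feed the deterministic low-rank algorithm through the reduction of Lemma~\ref{lem:flowercup} to obtain a primitive-approximation algorithm $\mathcal{B}$ with success probability $\ge 7/12$ and access-pattern variation distance $\le 2\epsilon r/(nk)$, then collide this with the $D_{TV} \ge 1/6$ bound of Corollary~\ref{cor:dtv}. The only difference is stylistic — you argue directly to $r \ge nk/(12\epsilon)$ where the paper argues by contradiction from $r = o(nk/\epsilon)$ — and you are right to flag the $\epsilon$ vs. $\epsilon/26$ rescaling that Lemma~\ref{lem:flowercup} requires, which the paper's statement glosses over but which indeed only affects the hidden constant.
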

\begin{proof}
Assume towards a contradiction that $r = o(nk/\epsilon)$. Then applying 
Lemma \ref{lem:flowercup}, $\mathcal{A}$ can be used to give a randomized algorithm $\mathcal{B}$ that with probability $\ge 7/12$ over its random coin flips and random input ${B} \in n/(2k) \times n/(2k)$ drawn from $\gamma(n/(2k),\epsilon)$, outputs an $\epsilon$-primitive matrix for ${B}$. Further, letting $L(\mu), L(\nu)$ be as defined in Corollary \ref{cor:dtv} for $\mathcal{B}$, by  Lemma \ref{lem:flowercup}, $D_{TV}(L(\mu), L(\nu)) \le \frac{2\epsilon r}{nk}.$ For $r = o(nk/\epsilon)$, $\frac{2\epsilon r}{nk}= o(1)$, contradicting Corollary \ref{cor:dtv}, and giving the theorem.
\end{proof}

Theorem \ref{thm:lbMain} follows directly from applying Yao's minimax principle (\cite{yao1977probabilistic}, Theorem 3) to Theorem \ref{thm:almostlbMain}.

\section{Spectral Norm Error  Bounds}\label{sec:spectral}

We conclude by showing how to modify \nameref{mainAlgo} to output a low-rank approximation $B$ achieving the spectral norm guarantee:
\begin{align}\label{eq:spectralGuarantee}
\norm{A-B}_2^2 \le (1+\epsilon) \norm{A-A_k}_2^2 + \frac{\epsilon}{k} \norm{A-A_k}_F^2.
\end{align}
This can be significantly stronger than the Frobenius guarantee \eqref{eqn:guarantee} when $\norm{A-A_k}_F^2$ is large, and, for example is critical in our application to ridge regression, discussed in Section \ref{sec:regression}.

It is not hard to see that since additive error in the Frobenius norm upper bounds additive error in the spectral norm (see e.g. Theorem 3.4 of \cite{guNewPaper}) that for $B$ satisfying the Frobenius norm guarantee $\norm{A-B}_F^2 \le (1+\epsilon) \norm{A-A_k}_F^2$, we immediately have the spectral bound $\norm{A -B}_2^2 \le \norm{A-A_k}_2^2 + \epsilon \norm{A-A_k}_F^2$. Thus, we can achieve 
\ref{eq:spectralGuarantee} simply by running \nameref{mainAlgo} with error parameter $\epsilon/k$. However, this approach is suboptimal. Applying Theorem \ref{thm:main}, our query complexity would be $\Theta \left ( \frac{n k^{3.5} \log^2 n}{\epsilon^{2.5}} \right )$. We improve this $k$ dependence significantly in \nameref{spectralAlgo}. Since \eqref{eq:spectralGuarantee} is often applied (see for example Section \ref{sec:regression}) with $k' = k/\epsilon$ and $\epsilon = \Theta(1)$ to give $\norm{A-B}_2^2 \le O \left (\frac{\epsilon}{k} \norm{A-A_k}_F^2 \right)$, optimizing $k$ dependence is especially important.

We first give an extension of Lemma \ref{thm:pcp} to the spectral norm case. This lemma provides the column sampling analog to Lemma \ref{leveragePCPspectral}. 

\begin{lemma}[Spectral Norm PCP]\label{thm:spectralpcp} For any $A \in \R^{n \times d}$, for $i \in \{1,\ldots,d\}$, let $\tilde \tau_i^k \ge \tau_i^k(A)$ be an overestimate for the $i^{th}$ rank-$k$ ridge leverage score. 
Let $p_i = \frac{\tilde \tau^k_i}{\sum_i \tilde \tau^k_i}$ and $t = \frac{c\log(k/\delta)}{\epsilon^2} \sum_i \tilde \tau^k_i$ for any $\epsilon < 1$ and sufficiently large constant $c$. Construct $C$ by sampling $t$ columns of $A$, each set to $\frac{1}{\sqrt{tp_i}}a_i$ with probability $p_i$.
With probability $1-\delta$, for any orthogonal projection $P \in \R^{n \times n}$,
\begin{align*}
(1-\epsilon)\|A - PA\|^2_2 - \frac{\epsilon}{k} \norm{A-A_k}_F^2 \leq \|C - PC\|^2_2\leq (1+\epsilon)\|A - PA\|^2_2 + \frac{\epsilon}{k} \norm{A-A_k}_F^2.
\end{align*}
We refer to $C$ as an $(\epsilon,k)$-spectral PCP of $A$.
\end{lemma}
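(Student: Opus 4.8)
The plan is to reduce the statement to a single Loewner-order approximation of the \emph{regularized} Gram matrix and then read off the spectral projection-cost bound by conjugating with $I-P$. Set $\lambda \eqdef \frac{\norm{A-A_k}_F^2}{k}$, so that $\tau_i^k(A) = a_i^T(AA^T+\lambda I)^+ a_i$ and $\sum_i \tau_i^k(A) = \tr\big((AA^T+\lambda I)^+ AA^T\big) \le 2k$ by Lemma~\ref{lem:sum}. The proof of Lemma~\ref{thm:pcp} (Theorem~6 of \cite{cohen2015ridge}) actually establishes, en route, the stronger two-sided spectral bound: with the stated choice of $t$, with probability $1-\delta$,
\[
(1-\epsilon)(AA^T + \lambda I) \;\preceq\; CC^T + \lambda I \;\preceq\; (1+\epsilon)(AA^T + \lambda I),
\]
from which the Frobenius PCP is derived. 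I will invoke this intermediate bound directly (equivalently, it follows in one paragraph from matrix Bernstein applied to the rescaled rank-one samples $(AA^T+\lambda I)^{-1/2}\tfrac{1}{tp_i}a_ja_j^T(AA^T+\lambda I)^{-1/2}$, using that each has the right mean $\tfrac{1}{t}$ and spectral norm $O(\epsilon^2/\log(k/\delta))$ since $\tfrac{\tilde\tau_i^k}{tp_i}=\tfrac{\sum_j\tilde\tau_j^k}{t}$, and that the intrinsic dimension is at most $2k$). Rearranging yields $CC^T \preceq (1+\epsilon)AA^T + \epsilon\lambda I$ and $CC^T \succeq (1-\epsilon)AA^T - \epsilon\lambda I$.

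Next I would pass from this Gram approximation to the projection bound. Fix an orthogonal projection $P$ and write $M \eqdef I-P$, which is symmetric, idempotent, and satisfies $0 \preceq M \preceq I$. Then $\norm{C-PC}_2^2 = \norm{MC}_2^2 = \lambda_{\max}(MCC^TM)$. Conjugating the upper Gram bound by $M$ and using $M^2 = M \preceq I$ gives $MCC^TM \preceq (1+\epsilon)MAA^TM + \epsilon\lambda I$; since $\lambda_{\max}$ is monotone under the Loewner order and shifts additively under $+\epsilon\lambda I$,
\[
\norm{C-PC}_2^2 \;\le\; (1+\epsilon)\lambda_{\max}(MAA^TM) + \epsilon\lambda \;=\; (1+\epsilon)\norm{A-PA}_2^2 + \frac{\epsilon}{k}\norm{A-A_k}_F^2.
\]
The lower bound is entirely symmetric: conjugating $CC^T \succeq (1-\epsilon)AA^T - \epsilon\lambda I$ by $M$ and taking $\lambda_{\max}$ gives $\norm{C-PC}_2^2 \ge (1-\epsilon)\norm{A-PA}_2^2 - \frac{\epsilon}{k}\norm{A-A_k}_F^2$. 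Since the event of the first paragraph is a single event, independent of $P$, both inequalities hold simultaneously for \emph{all} orthogonal projections $P$ with total failure probability $\delta$ — which is the point, as we do not restrict to rank $k$.

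The only substantive step is the first one: the two-sided Loewner approximation with merely $t = O(k\log(k/\delta)/\epsilon^2)$ samples. This is precisely the heart of the ridge-leverage-score analysis of \cite{cohen2015ridge}, so nothing genuinely new is needed; the one point to check is that sampling proportional to overestimates $\tilde\tau_i^k \ge \tau_i^k$ rather than the exact scores is harmless — it leaves the sample mean $\E\big[\tfrac{1}{tp_i}a_ja_j^T\big] = \tfrac1t AA^T$ unchanged and can only decrease the per-sample spectral norm, so the matrix Bernstein bound goes through with $\sum_i \tilde\tau_i^k$ (exactly the quantity appearing in the definition of $t$) in place of $\sum_i \tau_i^k$. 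Everything after that is routine linear algebra, and the additive $\frac{\epsilon}{k}\norm{A-A_k}_F^2$ term is exactly the $\epsilon\lambda$ that the ridge regularization contributes.
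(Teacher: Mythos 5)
Your proof is correct and follows essentially the same route as the paper's: invoke a ridge-leverage-score matrix Chernoff bound (the paper's Corollary~\ref{cor:chernoff}, itself a restatement of the $[A,\sqrt{\lambda}I]$ instantiation of Lemma~\ref{lem:chernoff}) to obtain the Loewner approximation $CC^T \approx AA^T \pm \epsilon\lambda I$, then conjugate by $I-P$ and use $(I-P)^2 = I-P \preceq I$ to absorb the $\epsilon\lambda$ shift. The paper phrases the final step via the variational characterization $\norm{\cdot}_2^2 = \max_{\norm{x}_2=1} x^T(\cdot)x$ rather than $\lambda_{\max}$-monotonicity under the Loewner order, and states the intermediate two-sided bound with $AA^T$ sandwiched rather than $CC^T$ as you have it, but both are the same argument up to the constant-factor adjustment on $\epsilon$ already built into the proof.
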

\begin{proof}
This follows from Corollary \ref{cor:chernoff}. With probability $\ge 1-\delta$, sampling by the rank-$k$ ridge leverage scores gives $C$ satisfying:
\begin{align}
(1-\epsilon) CC^T - \frac{\epsilon}{k} \norm{A-A_k}_F^2 I \preceq AA^T \preceq (1+\epsilon) CC^T + \frac{\epsilon}{k} \norm{A-A_k}_F^2 I.\label{spectralNormBoundForSpectral}
\end{align}

We can write for any $M$, $\norm{M}_2^2 = \max_{x:\norm{x}_2^2 = 1} x^T M x$. When $\norm{x}_2^2 = 1$, $\norm{(I-P)x}_2^2 \le 1$ so by \eqref{spectralNormBoundForSpectral} we have for any unit norm $x$:
\begin{align*}
x^T (I-P) CC^T (I-P) x  &\le x^T (I-P) AA^T (I-P) x + \frac{\epsilon}{k} \norm{A-A_k}_F^2\\
&\le  \norm{A-PA}_2^2 + \frac{\epsilon}{k} \norm{A-A_k}_F^2
\end{align*}
which gives $\norm{C-PC}_2^2 \le \frac{1}{1-\epsilon}\norm{A-PA}_2^2 + \frac{\epsilon}{(1-\epsilon) k} \norm{A-A_k}_F^2$.
Similarly we have:
\begin{align*}
x^T (I-P) AA^T (I-P) x &\le (1+\epsilon) x^T (I-P) CC^T (I-P) x +  \frac{\epsilon}{k} \norm{A-A_k}_F^2 \\ &\le (1+\epsilon) \norm{C-PC}_2^2 + \frac{\epsilon}{k} \norm{A-A_k}_F^2
\end{align*}
which gives $\frac{1}{1+\epsilon}\norm{A-PA}_2^2 - \frac{\epsilon}{(1+\epsilon) k} \norm{A-A_k}_F^2 \le \norm{C-PC}_2^2$. The lemma follows from combining these upper and lower bounds after adjusting constant factors on $\epsilon$ (by making $c$ large enough).
\end{proof}

\subsection{Spectral Norm Low-Rank Approximation Algorithm}

We now use Lemma \ref{thm:spectralpcp}, along with its row sampling counterpart, Lemma \ref{leveragePCPspectral}, to give an algorithm (\nameref{spectralAlgo}) for computing a near optimal spectral norm low-rank approximation to $A$.

In Steps 1-3 we sample both rows and columns of $A$ via the rank $\Theta(k/\epsilon)$ ridge leverage scores of $\ha$, ensuring with high probability that $AS_1$ is an $(\epsilon,k)$-spectral PCP of $A$ and $\tilde A$ is in turn an $(\epsilon,k)$-spectral row PCP of $AS_1$. Thus, if we compute (using an input sparsity time algorithm) a span $Z$ which gives a near optimal spectral norm low-rank approximation to $\tilde A$ (Step 3), this span will also be nearly optimal for $AS_1$. Since we cannot afford to fully read $AS_1$, we approximately project it to $Z$ by further sampling its columns using $Z$'s leverage scores (Step 4). We use leverage score sampling again in Step 5 to approximately project $A$ to the span of the result. This yields our final approximation, using the fact that $AS_1$ is a spectral PCP for $A$.


\paragraph{Algorithm 2\label{spectralAlgo}}\textbf{PSD Low-Rank Approximation -- Spectral Error}
\begin{enumerate}
\item Let $k_1 = \lceil ck/\epsilon^2 \rceil$.
For all $i \in [1,..,n]$ compute $\tilde \tau_i^{k_1}(\ha)$ which is a constant factor approximation to the ridge leverage score $\tau_i^{k_1}(\ha)$.
\item Set $\ell^{(1)}_i = 4 \epsilon \sqrt{\frac{n}{k}} \tilde \tau_i^{k_1}(\ha)$. Set $p_i^{(1)} = \frac{\ell_i^{(1)}}{\sum_i \ell_i^{(1)}}$ and  $t_1 = \frac{c_1 \log n}{\epsilon^2} \sum_{i} \ell_i^{(1)}$. Sample $S_1,S_2 \in \R^{n \times t_1}$ each whose $j^{th}$ column is set to $\frac{1}{\sqrt{tp^{(1)}_i}} e_i$ with probability $p^{(1)}_i$.
\item Let $\tilde A = S_2^T A S_1$, and use an input sparsity time algorithm to compute orthonormal $Z \in \R^{t_1 \times k}$ satisfying the Frobenius guarantee $\norm{\tilde A-\tilde AZZ^T}_F^2 \le 2\norm{\tilde A - \tilde A_k}_F^2$ along with the spectral guarantee $\norm{\tilde A - \tilde A ZZ^T}_2^2 \le (1+\epsilon) \norm{\tilde A - \tilde A_{k}}_2^2 + \frac{\epsilon}{k} \norm{\tilde A - \tilde A_{k}}_F^2$.
\item Let $t_3 = c_3 \left (k\log k +\frac{k^2}{\epsilon}\right )$, set $p_i^{(3)} = \frac{\norm{z_i}_2^2}{\norm{Z}_F^2}$, and sample $S_3 \in \R^{t_1 \times t_3}$ where the $j^{th}$ column is set to $\frac{1}{\sqrt{t_3 p_i^{(3)}}} e_i$ with probability $p_i^{(3)}$. Solve:
\begin{align*}
M = \argmin_{M \in \mathbb{R}^{n\times k}} \norm{AS_1S_3 - MZ^T S_3}_F^2
\end{align*}
\item Compute an orthogonal basis $Q \in \R^{n \times k}$ for the column span of $M$. Let $t_4 = c_4 \left (k\log k + \frac{k^2}{\epsilon}\right)$, set $p_i^{(4)} = \frac{\norm{q_i}_2^2}{\norm{Q}_F^2}$, and sample $S_4 \in \R^{n \times t_4}$ where the $j^{th}$ column is set to $\frac{1}{\sqrt{t_4 p_i^{(4)}}} e_i$ with probability $p_i^{(4)}$. Solve:
\begin{align*}
N = \argmin_{N \in \R^{n \times k}} \norm{S_4^TQ N^T - S_4^T A}_F^2.
\end{align*}
\item Return $Q,N \in \R^{n \times k}$.
\end{enumerate}

\begin{theorem}[Sublinear Time Low-Rank Approximation --Spectral Norm Error]\label{thm:spectral} Given any PSD $A \in \R^{n \times n}$, for sufficiently large constants $c,c_1,c_2,c_3,c_4$, \nameref{spectralAlgo} accesses $O(\frac{n \cdot k\log^2 n}{\epsilon^6} + \frac{nk^2}{\epsilon})$ entries of $A$, runs in $\tilde O \left (\frac{nk^\omega}{\epsilon} + \frac{nk}{\epsilon^6} + (\sqrt{n} k^{\omega-1} + k^{\omega+1}) \cdot \poly(1/\epsilon) \right)$ time and with probability at least $9/10$ outputs $M,N \in \R^{n \times k}$ with:
$$\norm{A-MN^T}_2^2 \le (1+\epsilon)\norm{A-A_k}_2^2 + \frac{\epsilon}{k} \norm{A-A_k}_F^2.$$
\end{theorem}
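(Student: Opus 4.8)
The plan is to push a near-optimal rank-$k$ subspace through the four ``compression'' stages of \nameref{spectralAlgo} --- the column sample $AS_1$, the row sample $\tilde A = S_2^T A S_1$, and the two leverage-score regressions that build $M$ (hence $Q$) and then $N$ --- maintaining at every stage a spectral-norm guarantee carrying an additive slack of order $\frac{\epsilon}{k}\norm{A-A_k}_F^2$, and rescaling $\epsilon$ at the end. First I would establish the two sketching guarantees produced by Steps 1--3: that $AS_1$ is an $(\epsilon,k)$-spectral PCP of $A$ and that $\tilde A$ is an $(\epsilon,k)$-spectral PCP of $AS_1$. For the latter, $k_1 = \Theta(k/\epsilon^2)$ and the sampling weights $\ell^{(1)}_i \propto \epsilon\sqrt{n/k}\,\tilde\tau_i^{k_1}(\ha)$ are exactly what Lemma~\ref{leveragePCPspectral} requires (with $S_1,S_2$ playing symmetric roles). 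For the former I would use that the rank-$k'$ ridge leverage scores of $\ha$ are monotone nondecreasing in $k'$ (the regularizer $\frac{1}{k'}\norm{\cdot - (\cdot)_{k'}}_F^2$ only shrinks), so $\tilde\tau_i^{k_1}(\ha) \ge \tau_i^{k}(\ha)$; combining with Lemma~\ref{lem:scoreBound}, the quantities $\ell^{(1)}_i/(2\epsilon)$ are overestimates of $\tau_i^k(A)$, and then Lemma~\ref{thm:spectralpcp} applies (with the $t_1 = \tilde O(\sqrt{nk}/\epsilon^3)$ samples providing the required accuracy, up to the usual constant/exponent juggling). I would also record, via Lemma~\ref{leveragePCP} run at rank $k_1$ with constant error, the crude bound $\norm{\tilde A - \tilde A_{k}}_F^2 = O(\norm{A-A_k}_F^2)$, and, by feeding $P =$ (projection onto $A$'s top-$k$ left singular space) through the two PCPs, $\norm{\tilde A - \tilde A_k}_2^2 \le (1+\epsilon)\norm{A-A_k}_2^2 + \frac{\epsilon}{k}\norm{A-A_k}_F^2$. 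A union bound keeps all of this simultaneously valid.

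Next I would transfer the span computed in Step~3. The input-sparsity-time routine returns orthonormal $Z \in \R^{t_1 \times k}$ with $\norm{\tilde A - \tilde A ZZ^T}_2^2 \le (1+\epsilon)\norm{\tilde A - \tilde A_k}_2^2 + \frac{\epsilon}{k}\norm{\tilde A - \tilde A_k}_F^2$ and a constant-factor Frobenius bound. Applying the spectral row PCP to $\norm{\tilde A(I-ZZ^T)}_2^2$ in one direction and the two bounds above in the other yields
\begin{align*}
\norm{AS_1 - AS_1 ZZ^T}_2^2 \;\le\; (1+O(\epsilon))\,\norm{AS_1 - (AS_1)_k}_2^2 \;+\; O\!\left(\tfrac{\epsilon}{k}\right)\norm{A-A_k}_F^2,
\end{align*}
and then the column spectral PCP gives $\norm{AS_1 - (AS_1)_k}_2^2 \le (1+\epsilon)\norm{A-A_k}_2^2 + \frac{\epsilon}{k}\norm{A-A_k}_F^2$, so $\mathrm{span}(Z)$ captures $AS_1$ nearly optimally in spectral norm. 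Now the regressions: $\min_M \norm{AS_1 - MZ^T}_F^2$ is solved exactly by $M^\star = AS_1 Z$ with residual $E := AS_1(I - ZZ^T)$, and the computed $M = AS_1 S_3 (Z^T S_3)^+$ satisfies $AS_1 - MZ^T = E\bigl(I - S_3(Z^T S_3)^+ Z^T\bigr)$; since $S_3$ samples by the leverage scores of $Z$ (its row norms) with $t_3 = \tilde O(k^2/\epsilon)$ columns --- giving a subspace embedding for the row space of $Z^T$ together with an intrinsic-dimension matrix-Chernoff bound, for which the extra factor of $k$ in $t_3$ pays, so that $E S_3 S_3^T E^T$ concentrates around $EE^T$ up to additive $O(\tfrac{\epsilon}{k})\norm{A-A_k}_F^2 \cdot I$ --- one gets $\norm{AS_1 - MZ^T}_2^2 \le (1+O(\epsilon))\norm{E}_2^2 + O(\tfrac{\epsilon}{k})\norm{A-A_k}_F^2$. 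Taking $Q$ to be an orthonormal basis for the columns of $M$ and using $(I-QQ^T)MZ^T = 0$ gives $\norm{AS_1 - QQ^T AS_1}_2 \le \norm{AS_1 - MZ^T}_2$, and one more pass through the column PCP yields $\norm{A - QQ^T A}_2^2 \le (1+O(\epsilon))\norm{A-A_k}_2^2 + O(\tfrac{\epsilon}{k})\norm{A-A_k}_F^2$. Step~5 is the mirror move: $\min_N \norm{Q N^T - A}_F^2$ is solved exactly by $N^\star = A^T Q$ with residual $(I-QQ^T)A$, and sampling $t_4 = \tilde O(k^2/\epsilon)$ rows of $A$ by $Q$'s leverage scores makes the sketched solution $N$ satisfy $\norm{Q N^T - A}_2^2 \le (1+O(\epsilon))\norm{A - QQ^T A}_2^2 + O(\tfrac{\epsilon}{k})\norm{A-A_k}_F^2$. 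Chaining the $(1+O(\epsilon))$ factors and the $O(\tfrac{\epsilon}{k})\norm{A-A_k}_F^2$ additive terms and rescaling $\epsilon$ by enlarging $c,c_1,\dots,c_4$ gives $\norm{A - QN^T}_2^2 \le (1+\epsilon)\norm{A-A_k}_2^2 + \frac{\epsilon}{k}\norm{A-A_k}_F^2$.

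For the resource bounds I would simply tally the stages. Step~1 computes the rank-$k_1$ ridge scores of $\ha$ in $\tilde O\!\left(n(k/\epsilon^2)^{\omega-1}\right)$ time using $\tilde O(nk/\epsilon^2)$ entries by Lemma~\ref{thm:originalSampling}; forming $\tilde A = S_2^T A S_1$ reads $t_1^2 = \tilde O(nk/\epsilon^6)$ entries; Step~3 runs an input-sparsity-time spectral low-rank approximation on $\tilde A$ in $O(\nnz(\tilde A)) + \tilde O\!\left(\sqrt n\,k^{\omega-1}\poly(1/\epsilon)\right)$ time (e.g.\ via Lemma~\ref{thm:spectralpcp} and input-sparsity ridge leverage score sampling); Steps~4--5 read $AS_1 S_3$ and $S_4^T A$, i.e.\ $\tilde O(nk^2/\epsilon)$ entries, and their linear algebra (pseudoinverses and rank-$k$ truncations of matrices of size $t_1 \times \poly(k/\epsilon)$ or $\poly(k/\epsilon) \times \poly(k/\epsilon)$) costs $\tilde O\!\left(\tfrac{nk^{\omega-1}}{\epsilon} + (\sqrt n\,k^{\omega-1} + k^{\omega+1})\poly(1/\epsilon)\right)$. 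Summing, and absorbing the $\tilde O(nk/\epsilon^6)$ read cost into the time bound, reproduces the stated query and runtime complexities; the overall success probability is a union bound over finitely many high-probability and a few $99/100$-type events, giving $\ge 9/10$.

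The step I expect to be the main obstacle is the spectral-norm control in the regression stages (Steps~4--5 of the algorithm): obtaining a \emph{genuine} spectral-norm guarantee --- not merely the Frobenius guarantee that leverage-score regression ordinarily delivers --- for the approximate projection of $AS_1$ onto $\mathrm{span}(Z)$ with only $\poly(k/\epsilon)$ column (and then row) samples, while keeping the additive $\frac{\epsilon}{k}\norm{A-A_k}_F^2$ errors from blowing up. These additive terms cannot be folded into relative error, since $\norm{A-A_k}_F^2$ can be arbitrarily larger than $\norm{A-A_k}_2^2$, so each of the four compression stages must be shown to contribute at most one such term of the right size --- which is exactly what forces the choices $k_1 = \Theta(k/\epsilon^2)$ and the $\tilde O(k^2/\epsilon)$ (rather than $\tilde O(k/\epsilon)$) sample sizes for $S_3$ and $S_4$, the extra factor of $k$ being the intrinsic-dimension overhead needed for operator-norm matrix concentration of the residual.
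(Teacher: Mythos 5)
Your overall plan is the right one and mirrors the paper closely: establish $(\epsilon,k)$-spectral PCP for both $AS_1$ and $\tilde A$, chain them through the Step~3 guarantee on $Z$ to show $\mathrm{span}(Z)$ captures $AS_1$ near-optimally in spectral norm, and then carry the spectral-norm guarantee through the two approximate regressions. Your monotonicity observation (rank-$k'$ ridge scores nondecreasing in $k'$, so the rank-$k_1$ scores upper bound the rank-$k$ ones) is correct and is the same device the paper uses to get the column spectral PCP, and your resource accounting matches the paper's.

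However, the mechanism you propose for the regression stages has a genuine gap, and it is precisely at the step you flagged as the main obstacle. You claim that intrinsic-dimension matrix concentration shows $ES_3 S_3^T E^T \approx EE^T$ up to additive slack, and that this yields $\norm{AS_1-MZ^T}_2^2 \le (1+O(\epsilon))\norm{E}_2^2 + O(\tfrac{\epsilon}{k})\norm{A-A_k}_F^2$. But concentration of $ES_3S_3^TE^T$ around $EE^T$ only controls $\norm{ES_3}_2^2$ up to $\norm{E}_2^2 + O(\tfrac{\epsilon}{k})\norm{A-A_k}_F^2$, and writing the residual as $E - EP''$ with $P'' = S_3(Z^TS_3)^+Z^T$, one gets (using orthogonality of the two pieces and the subspace embedding for $Z$)
\begin{align*}
\norm{E(I-P'')}_2^2 = \norm{E}_2^2 + \norm{EP''}_2^2 \le \norm{E}_2^2 + \tfrac{1}{1-\epsilon}\norm{ES_3}_2^2,
\end{align*}
which is $\approx 2\norm{E}_2^2 + O(\tfrac{\epsilon}{k})\norm{A-A_k}_F^2$ --- a constant-factor, not $(1+\epsilon)$-multiplicative, loss. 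The concentration you invoked is simply the wrong quantity: what needs to be small is $\norm{EP''}_2$, and the reason it is small is \emph{not} that $ES_3$ has about the same operator norm as $E$, but that $ES_3S_3^TZ$ concentrates around $EZ = 0$. The paper gets this cleanly from the \emph{Frobenius}-norm approximate regression guarantee with error parameter $\epsilon/k$ (which is exactly why $t_3 = \Theta(k\log k + k^2/\epsilon)$): it gives $\norm{AS_1-MZ^T}_F^2 \le (1+\tfrac{\epsilon}{k})\norm{AS_1(I-ZZ^T)}_F^2$, hence by the Pythagorean decomposition $\norm{EP''}_F^2 = \norm{AS_1ZZ^T - MZ^T}_F^2 \le \tfrac{\epsilon}{k}\norm{E}_F^2$, and then one upper bounds the spectral norm of the error piece by its Frobenius norm and adds squared spectral norms across the orthogonal decomposition:
\begin{align*}
\norm{AS_1-MZ^T}_2^2 \le \norm{E}_2^2 + \norm{EP''}_F^2 \le \norm{AS_1(I-ZZ^T)}_2^2 + \tfrac{\epsilon}{k}\norm{AS_1(I-ZZ^T)}_F^2.
\end{align*}
This preserves the $1$ in front of $\norm{E}_2^2$ and pushes all slack into the additive $\tfrac{\epsilon}{k}\norm{A-A_k}_F^2$ term, which is what the final bound requires. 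The identical argument handles Step~5. So your diagnosis of where the difficulty lies is exactly right; the fix is to route the spectral control through the fine-grained Frobenius regression bound rather than through an operator-norm concentration statement about $ES_3S_3^TE^T$.
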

\begin{proof}
$\ell_i^{(1)} = 4\epsilon \sqrt{ \frac{n}{k}} \tilde \tau_i^{k_1}(\ha)$ which by Lemma \ref{lem:scoreBound} is within a constant factor of upper bounding the rank $k_1 = \lceil ck/\epsilon^2 \rceil$ ridge leverage scores of $A$. As long as $c \ge 1$, these scores upper bound the rank-$k$ ridge leverage scores. So by Lemma \ref{thm:spectralpcp}, with high probability $AS_1$ an $(\epsilon,k)$-spectral PCP of $A$ as long as $c_1$ is set large enough. Additionally, by Lemma \ref{leveragePCPspectral}, with high probability $\tilde A$ is an $(\epsilon,k)$-spectral row PCP of $AS_1$. 

Frobenius norm PCP bounds also hold. By Lemma \ref{thm:pcp}, $AS_1$ is an $(\epsilon,k_1)$-column PCP of $A$ with high probability and $\tilde A$ is an $(\epsilon, k_1)$-row PCP for $AS_1$ with probability $99/100$. These bounds trivially give that $AS_1$ and $\tilde A$ are $(\epsilon, k)$ PCPs for $A$ and $AS_1$ respectively, which gives:
\begin{align}\label{looseFrob}
\norm{\tilde A - \tilde A_k}_F^2 \le c_5 \norm{A-A_k}_F^2
\end{align}
for some constant $c_5$. 
By \eqref{looseFrob} and the fact that $\tilde A$ is an $(\epsilon,k)$-spectral PCP of $AS_1$, for $Z$ computed in Step 3 of \nameref{spectralAlgo} we have:
\begin{align}\label{Zbound}
\norm{AS_1 - AS_1 ZZ^T}_2^2 &\le \frac{(1+\epsilon)}{(1-\epsilon)} \left ( \norm{\tilde A - \tilde A_k}_2^2 + \frac{\epsilon}{k} \norm{\tilde A - \tilde A_k}_F^2 \right ) + \frac{\epsilon}{k(1-\epsilon)} \norm{A-A_k}_F^2\nonumber\\
&\le (1+3\epsilon) \norm{\tilde A - \tilde A_k}_2^2 + \frac{(3c_5 + 2) \epsilon}{k} \norm{A-A_k}_F^2\nonumber\\
&\le (1+3\epsilon)(1+\epsilon) \norm{AS_1 - (AS_1)_k}_2^2 + \frac{(3c_5 + 2 + 1) \epsilon}{k} \norm{A-A_k}_F^2
\end{align}
where we assume without loss of generality that $\epsilon < 1/2$ as this can be achieved by scaling our constants sufficiently.

Now, by standard approximate regression, since $S_3$ is sampled by the leverage scores of $Z$, and since $t_3 = \Theta \left ( k\log k + k/\epsilon' \right )$ for $\epsilon' = \epsilon/k$, for $M$ computed in Step 4, with probability $99/100$:
\begin{align}\label{cam12}
\norm{AS_1 - MZ^T}_F^2 \le \left (1+\frac{\epsilon}{k} \right ) \norm{AS_1-AS_1ZZ^T}_F^2.
\end{align}
This Frobenius norm bound also implies a spectral norm bound. Specifically, we can write:
\begin{align*}
\norm{AS_1 - MZ^T}_F^2  = \norm{AS_1ZZ^T - MZ^T}_F^2 +  \norm{AS_1(I-ZZ^T)}_F^2
\end{align*}
so by \eqref{cam12} we must have $\norm{AS_1ZZ^T - MZ^T}_F^2 \le \frac{\epsilon}{k} \norm{AS_1(I-ZZ^T)}_F^2$. This gives:
\begin{align*}
\norm{AS_1 - MZ^T}_2^2 &\le \norm{AS_1ZZ^T - MZ^T}_2^2 + \norm{AS_1(I-ZZ^T)}_2^2\\
&\le \norm{AS_1(I-ZZ^T)}_2^2 + \norm{AS_1ZZ^T - MZ^T}_F^2\\
&\le \norm{AS_1(I-ZZ^T)}_2^2 + \frac{\epsilon}{k} \norm{AS_1(I-ZZ^T)}_F^2.
\end{align*}
Note that $ \norm{AS_1(I-ZZ^T)}_F^2 = O(\norm{A - A_k}_F^2)$ by the Frobenius norm guarantee required in Step 3, and the fact that $\tilde A$ and $AS_1$ are $(\epsilon,k)$ PCPs for $AS_1$ and $A$ respectively.
So combining with \eqref{looseFrob} the above implies that for $Q$ spanning the columns of $M$, 
$$\norm{AS_1 - QQ^T AS_1}_2^2 \le \norm{AS_1 - MZ^T}_2^2 \le (1+O(\epsilon)) \norm{AS_1 - (AS_1)_k}_2^2 + O\left (\frac{\epsilon}{k} \right ) \norm{A-A_k}_F^2.$$
Since $AS_1$ is an $(\epsilon,k)$-spectral PCP for $A$ we also have $\norm{A - QQ^T A}_2^2 \le (1+O(\epsilon)) \norm{AS_1 - (AS_1)_k}_2^2 + O\left (\frac{\epsilon}{k} \right ) \norm{A-A_k}_F^2.$
The theorem follows by applying an identical approximate regression argument for $N$ computed in Step 5 to show that
$$\norm{A - QN^T}_2^2 \le (1+\epsilon) \norm{A - QQ^T A}_2^2 + \frac{\epsilon}{k} \norm{A-QQ^TA}_F^2 = (1+O(\epsilon)) \norm{A - A_k}_2^2 + O\left (\frac{\epsilon}{k} \right ) \norm{A-A_k}_F^2$$
with probability $99/100$.
Adjusting constants on $\epsilon$ and union bounding over failure probabilities yields the final bound.

It just remains to discuss runtime and sample complexity. Constructing $S_2^T A S_1$ in Step 3 requires reading $t_1^2 = O \left (\frac{nk\log^2 n}{\epsilon^6} \right )$ entries of $A$. Constructing $AS_1S_3$ in Step 4 and $AS_4$ in Step 5 both require reading $O \left (n k \log k + \frac{nk^2}{\epsilon} \right )$ entries. 

For runtime, computing $Z$ in Step 3 requires $O (\nnz(\tilde A)) + \tilde O ( \sqrt{n} k^{\omega-1} \cdot \poly(1/\epsilon) )$ time using an input sparsity time algorithm (e.g. by Theorem 27 of \cite{cohen2015dimensionality} or using input sparsity time ridge leverage score sampling \cite{cohen2015ridge} in conjunction with the spectral norm PCP result of Lemma \ref{thm:spectralpcp}). Computing $M$ in Step 4 and $N$ in Step 5 both require computing the pseudoinverse of a $k \times O \left(k\log k + \frac{k^2}{\epsilon}\right)$ matrix in $\tilde O(k^{\omega+1} \poly(1/\epsilon))$ time, and then applying this to an $n \times O \left(k\log k + \frac{k^2}{\epsilon}\right)$ matrix requiring $\tilde O \left (\frac{nk^{\omega}}{\epsilon} \right )$ time.
\end{proof}

\subsection{Sublinear Time Ridge Regression}\label{sec:regression}

We now demonstrate how Theorem \ref{thm:spectral} can be leveraged to give a sublinear time, relative error algorithm for approximately solving the ridge regression problem:
\begin{align}\label{def:ridge}
\min_{x \in \mathbb{R}^n} \norm{Ax - y}_2^2 + \lambda\norm{x}_2^2
\end{align}
for PSD $A$. We begin with a lemma showing that any approximation to $A$ with small spectral norm error can be used to approximately solve \eqref{def:ridge} up to relative error.

\begin{lemma}[Ridge Regression via Spectral Norm Low-Rank Approximation]\label{ridgeLem}
For any PSD $A \in \mathbb{R}^{n \times n}$, $y \in \mathbb{R}^n$, regularization parameter $\lambda \ge 0$ and $B$ with $\norm{A-B}_2^2 \le \epsilon^2 \lambda$, let $\tilde x \in \mathbb{R}^n$ be any vector satisfying:
$$\norm{B\tilde x-y}_2^2 + \lambda \norm{\tilde x}_2^2 \le (1+\alpha) \cdot \min_{x \in \mathbb{R}^n} \norm{Bx - y}_2^2 + \lambda \norm{x}_2^2$$
we have:
\begin{align*}
\norm{A\tilde x -y}_2^2 + \lambda\norm{\tilde x}_2^2 \le (1+\alpha)(1+5\epsilon) \min_{x \in \mathbb{R}^n} \norm{Ax - y}_2^2 + \lambda \norm{x}_2^2.
\end{align*}
\end{lemma}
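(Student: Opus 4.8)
The plan is to reduce the regularized problem to an ordinary overdetermined least-squares problem by the standard row-stacking trick, so that $\norm{A-B}_2 \le \epsilon\sqrt{\lambda}$ becomes a \emph{small multiplicative perturbation of the design matrix}, after which the statement follows from a short perturbation argument. Concretely, for any $M\in\R^{n\times n}$ write $\bar M = \begin{bmatrix} M\\ \sqrt{\lambda}\, I_n\end{bmatrix}\in\R^{2n\times n}$ and $\bar y=\begin{bmatrix} y\\ 0\end{bmatrix}$, so that for every $x$ we have $\norm{\bar M x-\bar y}_2^2 = \norm{Mx-y}_2^2+\lambda\norm{x}_2^2$, and moreover $\norm{\bar M x-\bar y}_2^2 \ge \lambda\norm{x}_2^2$ since the bottom $n$ coordinates of $\bar M x-\bar y$ equal $\sqrt{\lambda}\, x$. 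When $\lambda=0$ the hypothesis $\norm{A-B}_2^2\le\epsilon^2\lambda$ forces $B=A$ and the claim is trivial, so we may assume $\lambda>0$; then each objective is strongly convex and its minimizer exists. Nothing about $A$ beyond its dimensions (PSD-ness included) is used.

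The key estimate is that appending zero rows leaves spectral norm unchanged, $\norm{\bar A-\bar B}_2=\norm{A-B}_2\le\epsilon\sqrt{\lambda}$, so for \emph{every} $x$, using $\sqrt{\lambda}\norm{x}_2\le\norm{\bar M x-\bar y}_2$ for $M\in\{A,B\}$,
\begin{align*}
\bigl|\,\norm{\bar A x-\bar y}_2-\norm{\bar B x-\bar y}_2\,\bigr|\le\norm{(\bar A-\bar B)x}_2\le\epsilon\sqrt{\lambda}\,\norm{x}_2\le\epsilon\min\bigl\{\norm{\bar A x-\bar y}_2,\ \norm{\bar B x-\bar y}_2\bigr\}.
\end{align*}
In particular, $\norm{\bar A x-\bar y}_2\le(1+\epsilon)\norm{\bar B x-\bar y}_2$ and $\norm{\bar B x-\bar y}_2\le(1+\epsilon)\norm{\bar A x-\bar y}_2$ hold simultaneously for all $x$.

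With this two-sided bound the proof is a short chain. Let $x^\ast=\argmin_x\norm{\bar A x-\bar y}_2^2$. Applying the first inequality at $x=\tilde x$, then the hypothesis on $\tilde x$ (which in stacked form reads $\norm{\bar B\tilde x-\bar y}_2^2\le(1+\alpha)\min_x\norm{\bar Bx-\bar y}_2^2$), then replacing the optimum of the $\bar B$-objective by the suboptimal point $x^\ast$, then the second inequality at $x=x^\ast$, gives
\begin{align*}
\norm{\bar A\tilde x-\bar y}_2\ \le\ (1+\epsilon)\norm{\bar B\tilde x-\bar y}_2\ \le\ (1+\epsilon)\sqrt{1+\alpha}\,\norm{\bar B x^\ast-\bar y}_2\ \le\ (1+\epsilon)^2\sqrt{1+\alpha}\,\norm{\bar A x^\ast-\bar y}_2.
\end{align*}
Squaring and unpacking the stacking notation yields $\norm{A\tilde x-y}_2^2+\lambda\norm{\tilde x}_2^2\le(1+\alpha)(1+\epsilon)^4\min_x\bigl(\norm{Ax-y}_2^2+\lambda\norm{x}_2^2\bigr)$, and since $(1+\epsilon)^4\le 1+5\epsilon$ whenever $\epsilon$ is below a fixed absolute constant (which we may assume throughout, $\epsilon$ being an accuracy parameter), this is the claimed bound. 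I do not expect a genuine obstacle; the only point requiring care is that each transfer between $\bar A$ and $\bar B$ costs a factor $1+\epsilon$ and there are two such transfers, so the honest bound is $(1+\epsilon)^4$ after squaring, which one must check collapses to $1+5\epsilon$ in the small-$\epsilon$ regime.
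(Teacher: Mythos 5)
Your proof is correct, and it takes a genuinely different route from the paper. The paper expands the squared ridge objective directly: it writes $\norm{Bx-y}_2^2 + \lambda\norm{x}_2^2 = \norm{Ax-y}_2^2 + \norm{(B-A)x}_2^2 + 2x^T(B-A)^T(Ax-y) + \lambda\norm{x}_2^2$, bounds the quadratic term by $\epsilon^2\lambda\norm{x}_2^2$, bounds the cross term via Cauchy--Schwarz followed by AM--GM as $|2x^T(B-A)^T(Ax-y)| \le 2\epsilon\sqrt{\lambda}\norm{x}_2\norm{Ax-y}_2 \le \epsilon(\norm{Ax-y}_2^2 + \lambda\norm{x}_2^2)$, and concludes that the two squared objectives agree within a $(1\pm 2\epsilon)$ factor for every $x$, from which the transfer of near-optimality is asserted. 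You instead use the row-stacking reformulation $\bar M = [M;\sqrt{\lambda}I]$, $\bar y = [y;0]$, observe that $\sqrt{\lambda}\norm{x}_2 \le \norm{\bar Mx - \bar y}_2$, and then a single reverse triangle inequality yields the two-sided bound $\norm{\bar Ax - \bar y}_2 \in (1\pm\epsilon)\norm{\bar Bx - \bar y}_2$ on unsquared norms; the transfer of optimality is then an explicit three-link chain. Both arguments are valid and use nothing about $A$ beyond the spectral-norm hypothesis. The paper's AM--GM route keeps everything in squared norms and gives a marginally tighter multiplicative constant, while yours is more modular and arguably cleaner, replacing a quadratic expansion and an AM--GM step by one triangle inequality; you also correctly isolate the degenerate case $\lambda = 0$ and are more explicit about the optimality-transfer step, which the paper compresses into a single sentence. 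Both proofs implicitly require $\epsilon$ below a small absolute constant to collapse the final factor to $1+5\epsilon$, which you flag and the paper does not.
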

\begin{proof}
For any $x \in \mathbb{R}^n$ we have:
\begin{align*}
\norm{Bx - y}_2^2 + \lambda \norm{x}_2^2 = \norm{Ax - y}_2^2 + \norm{(B-A)x }_2^2 + 2x^T(B-A)^T (Ax-y) + \lambda \norm{x}_2^2
\end{align*}
Now $\norm{(B-A)x }_2^2 \le \epsilon^2 \lambda \norm{x}_2^2$ and further
\begin{align*}
|2x^T(B-A)^T (Ax-y)| &\le 2 \norm{(A-B)x}_2 \norm{Ax-y}_2\\
&\le 2\epsilon \sqrt{\lambda} \norm{x}_2 \norm{Ax-y}_2\\
&\le \epsilon \left ( \norm{Ax-y}_2^2 + \lambda \norm{x}_2^2 \right ). 
\end{align*}
So for any $x$, $\norm{Bx - y}_2^2 + \lambda \norm{x}_2^2 \in (1 \pm 2\epsilon) \left (\norm{Ax - y}_2^2 + \lambda \norm{x}_2^2  \right )$ which gives the lemma since any nearly optimal $\tilde x$ for the ridge regression problem on $B$ will also be nearly  optimal for $A$.
\end{proof}

Combining Lemma \ref{ridgeLem} with Theorem \ref{thm:spectral} gives the following:
\begin{theorem}[Sublinear Time Ridge Regression]\label{thm:regression} Given any PSD $A \in \mathbb{R}^{n\times n}$, regularization parameter $\lambda \ge 0$, $y \in \mathbb{R}^n$, and upper bound $\tilde s_\lambda$ on the statistical dimension  $s_\lambda \eqdef \tr ((A^2+\lambda I)^{-1} A^2)$, there is an algorithm accessing $\tilde O \left (\frac{n \tilde s_\lambda^2}{\epsilon^4} \right)$ entries of $A$ and running in  $\tilde O \left (\frac{n \tilde s_\lambda^\omega}{\epsilon^{2\omega}}\right)$ time, which outputs $\tilde x$ satisfying:
\begin{align*}
\norm{A\tilde x-y}_2^2 + \lambda \norm{\tilde x}_2^2 \le (1+\epsilon) \cdot \min_{x\in\mathbb{R}^n} \norm{A x-y}_2^2 + \lambda \norm{x}_2^2.
\end{align*}
\end{theorem}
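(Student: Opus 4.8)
\textbf{Proof proposal for Theorem \ref{thm:regression}.}

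The plan is to reduce to Theorem \ref{thm:spectral} via Lemma \ref{ridgeLem}, choosing the rank parameter so that the additive terms in the spectral guarantee are governed by the statistical dimension. First I would establish the purely spectral fact that taking $k = \lceil c\,\tilde s_\lambda/\epsilon^2\rceil$ for a sufficiently large constant $c$ forces both $\norm{A-A_k}_2^2$ and $\frac1k\norm{A-A_k}_F^2$ to be at most $\epsilon^2\lambda/c'$, for any prescribed constant $c'$. Writing the eigenvalues of $A$ as $\lambda_1\ge\cdots\ge\lambda_n\ge 0$, so that $s_\lambda=\sum_i\frac{\lambda_i^2}{\lambda_i^2+\lambda}\le\tilde s_\lambda$, the argument is: (i) if $k\ge 2\tilde s_\lambda$ (which holds for $\epsilon<1$) then $\lambda_{k+1}^2\le\lambda$ — otherwise each of the first $k+1$ summands exceeds $\tfrac12$, giving $s_\lambda>k/2\ge\tilde s_\lambda\ge s_\lambda$; (ii) once $\lambda_{k+1}^2\le\lambda$, every tail summand satisfies $\frac{\lambda_i^2}{\lambda_i^2+\lambda}\ge\frac{\lambda_i^2}{2\lambda}$, so $\norm{A-A_k}_F^2=\sum_{i>k}\lambda_i^2\le 2\lambda s_\lambda\le 2\lambda\tilde s_\lambda$ and likewise $(k+1)\lambda_{k+1}^2\le 2\lambda s_\lambda$, hence $\norm{A-A_k}_2^2=\lambda_{k+1}^2\le 2\lambda\tilde s_\lambda/k$. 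With $k=\Theta(\tilde s_\lambda/\epsilon^2)$ both bounds are $O(\epsilon^2\lambda/c)$.

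Next I would invoke \nameref{spectralAlgo} (Theorem \ref{thm:spectral}) on $A$ with this rank $k$ and a constant error parameter, obtaining $M,N\in\R^{n\times k}$ such that $B:=MN^T$ satisfies
$$\norm{A-B}_2^2\le 2\norm{A-A_k}_2^2+\tfrac1k\norm{A-A_k}_F^2\le (\epsilon/5)^2\lambda,$$
after taking $c$ large enough. By Theorem \ref{thm:spectral} with constant error and $k=\Theta(\tilde s_\lambda/\epsilon^2)$, this step reads $\tilde O(nk^2)=\tilde O(n\tilde s_\lambda^2/\epsilon^4)$ entries of $A$ and runs in $\tilde O(nk^\omega)=\tilde O(n\tilde s_\lambda^\omega/\epsilon^{2\omega})$ time, the remaining terms in that theorem's bound being dominated in the sublinear regime $\tilde s_\lambda=O(n\epsilon^2)$.

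Finally I would solve the ridge regression problem for $B$ exactly and apply Lemma \ref{ridgeLem} with spectral parameter $\epsilon_0=\epsilon/5$ and $\alpha=0$, concluding $\norm{A\tilde x-y}_2^2+\lambda\norm{\tilde x}_2^2\le(1+\epsilon)\min_x(\norm{Ax-y}_2^2+\lambda\norm{x}_2^2)$ after absorbing the constant factor into $\epsilon$ in the usual way. The exact solve is cheap because $B$ is already in rank-$k$ factored form: compute thin orthonormal bases of the column spans of $M$ and $N$ (via thin QR, $\tilde O(nk^{\omega-1})$ time), reducing $B$ to the form $Q_M\tilde M Q_N^T$ with $Q_M,Q_N$ orthonormal and $\tilde M$ of size at most $k\times k$; the normal equations $(B^TB+\lambda I)\tilde x=B^Ty$ then block-diagonalize in the basis $[Q_N\mid Q_N^\perp]$ (with $\lambda I$ on the orthogonal complement and the right-hand side supported on $\mathrm{range}(Q_N)$), so $\tilde x=Q_N(\tilde M^T\tilde M+\lambda I)^{-1}\tilde M^TQ_M^Ty$, computable in $\tilde O(nk^{\omega-1}+k^\omega)$ time using only the already-produced factors and the input $y$ — no further queries to $A$, and within the stated time budget.

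I do not expect a genuine obstacle here: the heavy lifting is done by Theorem \ref{thm:spectral} and Lemma \ref{ridgeLem}, and the two remaining ingredients — the statistical-dimension-to-tail estimate and the fast ridge solve on a low-rank factored matrix — are short and standard. The only place that needs care is constant bookkeeping (ensuring the constant-error call to Theorem \ref{thm:spectral} really yields $\norm{A-B}_2^2\le(\epsilon/5)^2\lambda$, which requires choosing $c$ after fixing that error constant) and flagging the degenerate cases $\lambda=0$ or $\tilde s_\lambda=\omega(n\epsilon^2)$, where the claimed runtime is no longer sublinear and the statement is either vacuous or handled by a direct exact solve.
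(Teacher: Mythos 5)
Your proposal is correct and takes essentially the same route as the paper: choose $k=\Theta(\tilde s_\lambda/\epsilon^2)$ so the statistical-dimension bound forces $\norm{A-A_k}_2^2$ and $\frac1k\norm{A-A_k}_F^2$ to be $O(\epsilon^2\lambda)$, run \nameref{spectralAlgo} with constant error parameter to get $B$ with $\norm{A-B}_2^2\le\epsilon^2\lambda$, solve the rank-$k$ ridge problem exactly, and apply Lemma \ref{ridgeLem}. Your tail estimate is slightly more complete than the paper's (the paper's proof only explicitly bounds $\lambda_{k+1}^2$ via counting eigenvalues above $\epsilon^2\lambda$, leaving the $\frac1k\norm{A-A_k}_F^2$ term implicit), and your explicit block-diagonalized normal-equations solve is a reasonable unpacking of the paper's terse "solve exactly using an SVD in $O(nk^{\omega-1})$ time."
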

When $\tilde s_\lambda \ll n$ as is often the case, the above significantly improves upon state-of-the-art input sparsity time runtimes for general matrices \cite{HCW}.
\begin{proof}
Let $k = \frac{c \tilde s_\lambda}{\epsilon^2}$ for sufficiently large constant $c$. We have:
\begin{align*}
s_\lambda = \sum_{i=1}^n \frac{\lambda_i^2(A)}{\lambda_i^2(A) + \lambda} \ge \sum_{i : \lambda^2_i(A) \ge \epsilon^2 \lambda} \frac{\lambda_i^2(A)}{(1+1/\epsilon^2)\lambda_i^2(A)} \ge \frac{\epsilon^2}{2} \cdot | \{ i :  \lambda_i^2(A) \ge\epsilon^2\lambda \} |.
\end{align*}
So $| \{ i :  \lambda_i^2(A) \ge \epsilon^2\lambda \} | \le \frac{2s_\lambda}{\epsilon^2}$ and for large enough $c$ and $k = \frac{c \tilde s_\lambda}{\epsilon^2} \ge \frac{c s_\lambda}{\epsilon^2}$ we have $\norm{A-A_k}_2^2 \le \frac{\epsilon^2 \lambda}{2} $ and can run \nameref{spectralAlgo} with error parameter $\epsilon = \Theta(1)$ to find $M,N \in \mathbb{R}^{n \times k}$ with $\norm{A-MN^T}_2^2 \le \epsilon^2\lambda$. We can then apply Lemma \ref{ridgeLem} -- solving $\tilde x = \min_{x\in\mathbb{R}^n} \norm{MN^T x-y}_2^2 + \lambda \norm{x}_2^2$ exactly using an SVD in $O(n k^{\omega-1})$ time. $\tilde x$ will be a $(1+O(\epsilon))$ approximate solution for $A$, which, after adjusting constants on $\epsilon$, gives the lemma. The runtime follows from Theorem \ref{thm:spectral} with $k = c\tilde s_\lambda/\epsilon^2$ and $\epsilon' = \Theta(1)$. The $O(n k^{\omega-1})$ regression cost is dominated by the cost of computing the low-rank approximation.
\end{proof}

Note that Theorem \ref{thm:regression} ensures that if the $k \ge \frac{c s_\lambda}{\epsilon^2}$ for some constant $c$, $\tilde x$ is a good approximation to the ridge regression problem. Setting $k$ properly requires some knowledge of an upper bound $\tilde s_\lambda$ on $s_\lambda$. A constant factor approximation to $s_\lambda$ can be computed in $\tilde O(n^{3/2} \cdot \poly(s_\lambda))$ time using for example a column PCP as given by Lemma \ref{thm:pcp} and binary  searching for an appropriate $k$ value.

An interesting open question is if $s_\lambda$ be be approximated more quickly -- specifically with linear dependence on $n$. This question is closely related to if it is possible to estimate the cost $\norm{A-A_k}_F^2$ in $\tilde O(n \cdot \poly(k))$, which surprisingly is also open.

\section*{Acknowledgements}
The authors thank IBM Almaden where part of this work was done. 
David Woodruff also thanks the Simons Institute program on Machine Learning and 
the XDATA program of DARPA for support.

\bibliographystyle{alpha}
\bibliography{sublinearLowRank}
\clearpage

\appendix
\section{Low-Rank Approximation of $A$ via Approximation of $A^{1/2}$}\label{sec:example}
We first observe that a low-rank approximation for $A^{1/2}$ 
does not imply a good low-rank approximation for $A$. Intuitively, if $A$ has a large top singular value, the low-rank approximation for $A$ must capture the corresponding singular direction with significantly more accuracy than a good low-rank approximation for $A^{1/2}$, in which the singular value is relatively much smaller.
\begin{theorem}
For any
$k$, $\epsilon$ there exists a PSD matrix $A$ and a rank $k$ matrix $B$ such that $\norm{A^{1/2}-B}_F^2 \le (1+\epsilon) \norm{A^{1/2}-A_k^{1/2}}_F^2$ but for every matrix $C$ in the rowspan of $B$, 
$$\norm{A-C}_F^2 \ge \left (1+\epsilon \cdot  \frac{(n-k-1) \lambda_1(A)}{\lambda_{k+1}(A)} \right ) \norm{A-A_k}_F^2.$$
\end{theorem}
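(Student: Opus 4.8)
The plan is to exhibit an explicit diagonal counterexample in which the row space of a near‑optimal rank‑$k$ approximation $B$ of $A^{1/2}$ deliberately omits the dominant eigendirection of $A$. Take $A=\mathrm{diag}(\lambda_1,\lambda_{k+1},\lambda_{k+1},\dots,\lambda_{k+1})\in\mathbb{R}^{n\times n}$ with $\lambda_1>\lambda_{k+1}>0$, so that $A$ is PSD, $\lambda_1(A)=\lambda_1$ and $\lambda_{k+1}(A)=\lambda_{k+1}$. Let $\mathcal{V}=\mathrm{span}(e_2,\dots,e_{k+1})$, let $P=\sum_{i=2}^{k+1}e_ie_i^T$ be the orthogonal projection onto $\mathcal{V}$, and set $B=A^{1/2}P$. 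Then $B$ has rank $k$, its row space is exactly $\mathcal{V}$, and $B$ ``spends'' its $k$ dimensions on $e_2,\dots,e_{k+1}$ rather than on the spike $e_1$; intuitively this is nearly free for $A^{1/2}$, whose top singular value $\sqrt{\lambda_1}$ is only moderately larger than the rest, but disastrous for $A$, whose top eigenvalue $\lambda_1$ dominates.

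First I would verify the $A^{1/2}$ bound: since $A^{1/2}=\mathrm{diag}(\sqrt{\lambda_1},\sqrt{\lambda_{k+1}},\dots)$, its best rank‑$k$ approximation keeps $e_1,\dots,e_k$, so $\|A^{1/2}-A^{1/2}_k\|_F^2=(n-k)\lambda_{k+1}$, while $\|A^{1/2}-B\|_F^2=\|A^{1/2}(I-P)\|_F^2=\lambda_1+(n-k-1)\lambda_{k+1}$ (the lost entries are $e_1$ and $e_{k+2},\dots,e_n$). A one‑line comparison shows this is $\le(1+\epsilon)\|A^{1/2}-A^{1/2}_k\|_F^2$ exactly when $\lambda_1\le(1+\epsilon(n-k))\lambda_{k+1}$, so I would fix $\lambda_1$ at (or just below) this value. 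Next, for the $A$ bound: the matrix of minimal Frobenius error among all matrices whose rows lie in $\mathcal{V}=\mathrm{rowspan}(B)$ is $AP$, so every $C$ in the row span of $B$ satisfies $\|A-C\|_F^2\ge\|A(I-P)\|_F^2=\lambda_1^2+(n-k-1)\lambda_{k+1}^2$, whereas $\|A-A_k\|_F^2=(n-k)\lambda_{k+1}^2$. Dividing gives the blow‑up factor $\frac{\lambda_1^2+(n-k-1)\lambda_{k+1}^2}{(n-k)\lambda_{k+1}^2}=1+\frac{\lambda_1^2-\lambda_{k+1}^2}{(n-k)\lambda_{k+1}^2}$, which is unbounded as $\lambda_1/\lambda_{k+1}\to\infty$; bounding $\lambda_1^2-\lambda_{k+1}^2\ge(\lambda_1-\lambda_{k+1})\lambda_1$ and substituting the admissible gap $\lambda_1-\lambda_{k+1}$ then expresses it in the desired form in terms of $\lambda_1(A)/\lambda_{k+1}(A)$, $n$ and $k$.

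The main thing to be careful about is the calibration of the eigenvalue gap against the $A^{1/2}$‑slack. A fully axis‑aligned ``skip'' as above caps $\lambda_1/\lambda_{k+1}$ at $1+\epsilon(n-k)$; to let the ratio be arbitrarily large while still meeting the $A^{1/2}$ guarantee, the fix is to replace $\mathcal{V}$ by $\mathrm{span}\big(\sqrt{1-\beta}\,e_1+\sqrt{\beta}\,e_2,\ e_3,\dots,e_{k+1}\big)$ — i.e.\ recover the spike only up to a $(1-\beta)$ fraction — and take $\beta=\Theta\big(\epsilon(n-k)\lambda_{k+1}/(\lambda_1-\lambda_{k+1})\big)$, which keeps the $A^{1/2}$ excess within the allowed $\epsilon(n-k)\lambda_{k+1}$. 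The one genuine computation is then tracking how $\beta$ enters $\|A(I-P)\|_F^2=\beta(\lambda_1^2-\lambda_{k+1}^2)+(n-k)\lambda_{k+1}^2$ and re‑deriving the ratio; getting the dependence on $n-k$ into exactly the stated shape is where one must be most careful, since a naive substitution yields a blow‑up of order $1+\epsilon\,\lambda_1/\lambda_{k+1}$. Everything else — PSD‑ness of $A$, the identification of $A^{1/2}_k$ and $A_k$ for a diagonal matrix, and the optimality of $AP$ among matrices with prescribed row space — is routine, so the write‑up reduces to two short computations plus this calibration.
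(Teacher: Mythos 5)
Your construction differs from the paper's in a clean way: you take $B = A^{1/2}P$ for a rank-$k$ projection $P$ that rotates slightly off the spike $e_1$, while the paper keeps $B$ axis-aligned on rows $1,\dots,k$ of $A^{1/2}$ and instead plants one extra off-diagonal entry $B_{1,k+2} = \sqrt{\epsilon(n-k-1)}\,\beta$ (with $A = \mathrm{diag}(\alpha^2,\dots,\alpha^2,0,\beta^2,\dots,\beta^2)$ and $\lambda_{k+1}=0$ so the slack in the $A^{1/2}$ budget becomes a pure additive term). Both are instances of the same idea — miss the top eigendirection by a calibrated fraction — and your first two computations are correct.

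The thing you flagged as ``where one must be most careful'' is in fact a real discrepancy, not a gap in your argument. Your dimensional analysis is right: the $A^{1/2}$ slack lets you afford a missed fraction $\beta \approx \epsilon(n-k)\lambda_{k+1}/\lambda_1$ of the spike, and this turns into an excess $\approx \beta\lambda_1^2 \approx \epsilon(n-k)\lambda_{k+1}\lambda_1$ in the $A$ cost, giving a blow-up of $1+\Theta(\epsilon\,\lambda_1/\lambda_{k+1})$, with no extra $(n-k-1)$ factor. The paper's own chain of inequalities produces the same thing: from the displayed lower bound
$(n-k-1)\beta^4 + \bigl(\tfrac{\sqrt{\epsilon(n-k-1)}\,\beta\alpha^3}{\alpha^2+\epsilon(n-k-1)\beta^2}\bigr)^2$,
the step to $(n-k-1)\beta^4\bigl(1+\epsilon(n-k-1)\alpha^2/4\beta^2\bigr)$ would require $\tfrac{\alpha^4}{(\alpha^2+\epsilon(n-k-1)\beta^2)^2}\ge\tfrac{n-k-1}{4}$, which is impossible once $n-k-1>4$ since the left side is at most $1$; what the expression actually yields (when $\alpha^2 \gtrsim \epsilon(n-k-1)\beta^2$) is $1 + \Theta(\epsilon\,\alpha^2/\beta^2)$, matching yours. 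So the theorem's stated factor $(n-k-1)\lambda_1/\lambda_{k+1}$ appears to overstate what either construction gives; the defensible bound is $1+\Theta(\epsilon\,\lambda_1/\lambda_{k+1})$, which is still unbounded as $\lambda_1/\lambda_{k+1}\to\infty$ and hence suffices for the paper's qualitative point. Your write-up should simply state that bound rather than chase the $(n-k-1)$ factor.
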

Notably, if we set $B = \ha P$ for some rank $k$ orthogonal projection $P$, 
$AP$ can be an arbitrarily bad low-rank approximation of $A$.
\begin{proof}
Let $A \in \mathbb{R}^{n \times n}$ be a diagonal matrix with $A_{i,i} =\alpha^2$ for $i=1,...,k$, $A_{k+1,k+1} = 0$ and all other diagonal entries equal to $\beta^2$, where $\alpha > \beta > 0$. Let $B$ be a rank $k$ matrix which has its last $n-k$ rows all zero. For $i = 1,...,k$, let $B_{i,i} = \ha_{i,i}$ and $B_{1,k+2} = \sqrt{\epsilon(n-k-1)} \cdot \beta$. We have: $\norm{A^{1/2}-A^{1/2}_k}_F^2 = (n-k-1)\beta^2$ and $\norm{A^{1/2}-B}_F^2 = (1+\epsilon) (n-k-1)\beta^2$. Note that the first row $b_1$ aligns somewhat well with $a_1$, but as we will see, not well enough to give a good low-rank approximation for $A$ itself.

Let $C$ be the projection of $A$ onto the rowspan of $B$, which gives the optimal low-rank approximation to $A$ within this span. For $i=2,...,k$, $c_i = a_i$, since $A$ and $B$ match exactly on these rows up to a scaling. For $i > k$, $c_i = \vec{0}$. Finally, $c_1$ = $\frac{b_1}{\norm{b_1}_2^2} \cdot \langle b_1, a_1 \rangle = b_1\cdot \left ( \frac{\alpha^3}{\alpha^2 + \epsilon(n-k-1) \beta^2} \right )$. Overall: 
\begin{align*}
\norm{A-C}_F^2 &= (n-k-1)\beta^4 + (A_{1,1} - C_{1,1})^2 + (A_{1,k+2} - C_{1,k+2})^2\\ 
&\ge (n-k-1)\beta^4 + \left (\frac{\sqrt{\epsilon(n-k-1)}\cdot\beta \alpha^3}{\alpha^2 + \epsilon(n-k-1) \beta^2} \right )^2\\ 
&\ge (n-k-1)\beta^4 \cdot (1 + \epsilon (n-k-1)\alpha^2/4\beta^2)\\
&= (1 + \epsilon (n-k-1)\alpha^2/\beta^2) \cdot \norm{A-A_k}_F^2.
\end{align*}
By setting $\alpha \gg \beta$ we can make this approximation arbitrarily bad. Note that $\alpha^2/\beta^2 = \lambda_1(A)/\lambda_{k+1}(A)$. This ratio will be large whenever $A$ is well approximated by a low-rank matrix.
\end{proof}

Despite the above example, we can show that for a projection $P$, if $\ha P$ is a very near optimal low-rank approximation of $\ha$ then  $\ha P \ha$ is a relative error low-rank approximation of $A$:

\begin{theorem}\label{APAbound}
Let $P \in \R^{n \times n}$ be an orthogonal projection matrix such that $\norm{A^{1/2}-A^{1/2}P}_F^2 \le (1+\epsilon/\sqrt{n}) \norm{A^{1/2}-A^{1/2}_k}_F^2$. Then:
\begin{align*}
\norm{A-A^{1/2}PA^{1/2}}_F^2 \le (1+3\epsilon) \norm{A-A_k}_F^2.
\end{align*}
\end{theorem}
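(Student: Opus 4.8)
The plan is to bypass block-matrix bookkeeping and reduce the claim to a one-line estimate on the eigenvalues of a compression of $A$.

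First I would use the identity $A - A^{1/2}PA^{1/2} = A^{1/2}(I-P)A^{1/2}$ together with $(I-P)^2 = I-P$ and cyclicity of the trace to write $\norm{A-A^{1/2}PA^{1/2}}_F^2 = \tr\big((I-P)A(I-P)A\big) = \norm{(I-P)A(I-P)}_F^2$. Assume $\rank(P)=k$ (if $\rank(P)<k$ then $\norm{A^{1/2}(I-P)}_F^2\ge\sum_{i\ge k}\lambda_i$, which with the hypothesis forces $\lambda_k\le\frac{\epsilon}{\sqrt n}\sum_{i>k}\lambda_i$, and the argument below goes through with $k$ replaced by $\rank(P)$ at the cost of an extra $O(\epsilon)$). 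Let $W\in\R^{n\times(n-k)}$ have orthonormal columns spanning $\mathrm{range}(I-P)$, so $I-P=WW^{T}$; then $(I-P)A(I-P)=W(W^{T}AW)W^{T}$ and hence $\norm{(I-P)A(I-P)}_F^2 = \tr\big((W^{T}AW)^2\big) = \sum_{j=1}^{n-k}\mu_j^2$, where $\mu_1\ge\dots\ge\mu_{n-k}\ge 0$ are the eigenvalues of the compression $W^{T}AW$.

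Next I would record two facts about the $\mu_j$, writing $\lambda_1\ge\dots\ge\lambda_n\ge 0$ for the eigenvalues of $A$ and $T:=\sum_{i>k}\lambda_i$. (i) By Cauchy (Poincar\'e) interlacing $\mu_j\ge\lambda_{k+j}$, so $d_j:=\mu_j-\lambda_{k+j}\ge 0$; also $\norm{A-A_k}_F^2=\sum_j\lambda_{k+j}^2$. (ii) Since $A^{1/2}$ is symmetric PSD its squared singular values are the $\lambda_i$, so $\norm{A^{1/2}-A^{1/2}_k}_F^2=T$; combined with the hypothesis and $\sum_j\mu_j=\tr(W^{T}AW)=\tr(A(I-P))=\norm{A^{1/2}(I-P)}_F^2$ this gives $\sum_j\mu_j\le(1+\epsilon/\sqrt n)T$, hence $\sum_j d_j=\sum_j\mu_j-T\le\frac{\epsilon}{\sqrt n}T$. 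Then
\begin{align*}
\norm{A-A^{1/2}PA^{1/2}}_F^2 - \norm{A-A_k}_F^2 &= \sum_j\big(\mu_j^2-\lambda_{k+j}^2\big) = \sum_j d_j\big(d_j+2\lambda_{k+j}\big)\\
&\le \sum_j d_j^2 + 2\lambda_{k+1}\sum_j d_j \le \frac{\epsilon^2 T^2}{n} + \frac{2\epsilon\lambda_{k+1}T}{\sqrt n},
\end{align*}
using $\lambda_{k+j}\le\lambda_{k+1}$ and $\sum_j d_j^2\le(\sum_j d_j)^2$. Finally, Cauchy--Schwarz gives $T\le\sqrt{n-k}\,\norm{A-A_k}_F\le\sqrt n\,\norm{A-A_k}_F$, so $\frac{T^2}{n}\le\norm{A-A_k}_F^2$, and $\lambda_{k+1}\le\norm{A-A_k}_F$, so $\frac{\lambda_{k+1}T}{\sqrt n}\le\norm{A-A_k}_F^2$; thus the right-hand side is at most $(\epsilon^2+2\epsilon)\norm{A-A_k}_F^2\le 3\epsilon\norm{A-A_k}_F^2$ for $\epsilon\le 1$, which is the claim.

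The only real obstacle is the first step: seeing that $\norm{A^{1/2}(I-P)A^{1/2}}_F^2$ equals the sum of squared eigenvalues of $A$ restricted to $\mathrm{range}(I-P)$. Once that is in hand, the strong near-optimality hypothesis feeds in exactly as a bound on the first moment $\sum_j\mu_j$, interlacing pins each $\mu_j$ from below at $\lambda_{k+j}$, and the gap between $\sum_j\mu_j^2$ and $\norm{A-A_k}_F^2$ is squeezed by Cauchy--Schwarz; the factor $\sqrt n$ in the hypothesis is exactly what is needed to convert the $\ell_1$ slack in $\sum_j d_j$ into $\ell_2$ slack. (A more computational route---expanding $A^{1/2}(I-P)A^{1/2}=(A-A_k)+A^{1/2}(P_k-P)A^{1/2}$ in the eigenbasis of $A$ and bounding the cross term and the blocks of $A^{1/2}(P_k-P)A^{1/2}$ one by one---also works but is messier and needs special care when $A$ has eigenvalue ties straddling position $k$, an issue the compression viewpoint sidesteps.)
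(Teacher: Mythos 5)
Your proof is correct and is essentially the paper's argument in different notation: the eigenvalues $\mu_j$ of the compression $W^{T}AW$ are exactly the squared singular values $\delta_j^2$ of $A^{1/2}(I-P)$ that the paper works with, Cauchy interlacing $\mu_j \ge \lambda_{k+j}$ is precisely the Weyl bound $\delta_j \ge \lambda_{k+j}^{1/2}$ the paper invokes, and your final display is the expansion of the paper's bound $\sum_{i\ge k+2}\lambda_i^2 + (\lambda_{k+1}+\epsilon T/\sqrt{n})^2$. The one cosmetic improvement is that bounding $\lambda_{k+1}T/\sqrt{n}$ and $T^2/n$ each by $\norm{A-A_k}_F^2$ lets you avoid the paper's case split on whether $\lambda_{k+1}\gtrless T/\sqrt{n}$; both you and the paper also implicitly need $\rank(P)\le k$ (your parenthetical handles $\rank(P)<k$, and for $\rank(P)>k$ the claim as stated can fail, so this restriction is inherent, not a gap in your write-up).
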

\begin{proof}
We can rewrite using the fact that $(I-P) = (I-P) ^2$ since it is a projection:
\begin{align*}
\norm{A-A^{1/2}PA^{1/2}}_F^2 = \norm{A^{1/2}(I-P)^2 A^{1/2}}_F^2 = \norm{A^{1/2} (I-P)}_4^4.
\end{align*}
Let $\delta_i = \sigma_i(A^{1/2}(I-P))$ denote the $i^{th}$ singular value of  $A^{1/2}(I-P)$. Let $\lambda_i$ be the $i^{th}$ eigenvalue of $A$.
By the assumption that $P$ gives a near optimal low-rank approximation of $\ha$:
\begin{align*}
\sum_{i=1}^{n-k} \delta^2_i \le \sum_{i=k+1}^{n} \lambda_i + \epsilon/\sqrt{n} \norm{A^{1/2}-A^{1/2}_k}_F^2.
\end{align*}
Additionally, by Weyl's monotonicity theorem (see e.g. Theorem 3.2 in \cite{guNewPaper} and proof of Lemma 15 in \cite{musco2015randomized}), for all $i$, $\delta_i \ge  \lambda^{1/2}_{i+k}$. We thus have:
\begin{align*}
\norm{A-A^{1/2}PA^{1/2}}_F^2 = \sum_{i=1}^{n-k} \delta_i^4 \le \sum_{i=k+2}^{n} \lambda_i^2 + \left (\lambda_{k+1} + \epsilon/\sqrt{n} \norm{A^{1/2}-A^{1/2}_k}_F^2 \right )^2.
\end{align*}
If $\lambda_{k+1} \ge 1/\sqrt{n} \cdot \norm{A^{1/2}-A^{1/2}_k}_F^2$ then $\left (\lambda_{k+1} + \epsilon/\sqrt{n} \norm{A^{1/2}-A^{1/2}_k}_F^2 \right )^2 \le (1+ \epsilon)^2 \lambda_{k+1}^2 \le (1+3\epsilon) \lambda_{k+1}^2$ and hence:
\begin{align*}
\norm{A-A^{1/2}PA^{1/2}}_F^2 \le (1+3\epsilon)\sum_{i=k+1}^n \lambda_i^2 = (1+3\epsilon) \norm{A-A_k}_F^2.
\end{align*}
Alternatively if $\lambda_{k+1} \le 1/\sqrt{n} \cdot \norm{A^{1/2}-A^{1/2}_k}_F^2$ then $$\left (\lambda_{k+1} + \epsilon/\sqrt{n} \norm{A^{1/2}-A^{1/2}_k}_F^2 \right )^2 \le \left ((1+\epsilon)/\sqrt{n} \norm{A^{1/2}-A^{1/2}_k}_F^2 \right )^2 \le (1+\epsilon)^2 \norm{A-A_k}_F^2$$ which also gives the theorem.
\end{proof}
\subsection{PSD Low-Rank Approximation in $n^{1.69} \cdot \poly(k/\epsilon)$ Time}

We now combine Theorem \ref{APAbound} with the ridge leverage score sampling algorithm of \cite{mm16} to give a sublinear time algorithm for low-rank approximation of $A$ reading $n^{3/2}\cdot \poly(k/\epsilon)$ entries of the matrix and running in $n^{1.69} \cdot \poly(k/\epsilon)$ time. We note that this approach could also be used with adaptive sampling \cite{deshpande2006adaptive} or volume sampling techniques \cite{agr16}, as outlined in the introduction.
\begin{theorem}\label{slowRuntimeRidge}
There is an algorithm based off ridge leverage score sampling which, given PSD $A \in \R^{n \times n}$ with high probability outputs $M \in \mathbb{R}^{n \times k}$ with $\norm{A-MM^T}_F^2 \le (1+\epsilon) \norm{A-A_k}_F^2$. The algorithm reads $\tilde O(n^{3/2} k/\epsilon)$ entries of $A$ and runs in $\tilde O \left (n^{(\omega+1)/2} \cdot (k/\epsilon)^{\omega-1} \right )$ time where $\omega < 2.38$ is the exponent of matrix multiplication.
\end{theorem}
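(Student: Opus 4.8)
The plan is to reduce low-rank approximation of $A$ to \emph{near-exact} low-rank approximation of $\ha$ and then invoke Theorem \ref{APAbound}. By that theorem it suffices to produce, in factored form, a rank-$k$ orthogonal projection $P$ with $\norm{\ha - \ha P}_F^2 \le (1+\epsilon/\sqrt n)\norm{\ha - \ha_k}_F^2$: then $\ha P\ha = (\ha P)(\ha P)^T$ is PSD of rank $\le k$ and satisfies $\norm{A - \ha P\ha}_F^2 \le (1+3\epsilon)\norm{A - A_k}_F^2$, so writing $\ha P\ha = MM^T$ with $M \in \R^{n\times k}$ and rescaling $\epsilon$ by a constant proves the theorem.

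To build $P$ cheaply I would exploit (as in the introduction) that $\ha\ha = A$, so the dot product of columns $i,j$ of $\ha$ is exactly $A_{i,j}$ and $\norm{\ha_i}_2^2 = A_{i,i}$. Hence any column-selection scheme whose probabilities depend only on column norms of $\ha$ and dot products with the currently-selected columns can be run on $\ha$ by reading only the corresponding entries of $A$. First I would run the ridge leverage score sampler of \cite{mm16} (Lemma \ref{thm:originalSampling}) to get, in $\tilde O(nk^{\omega-1})$ time and $\tilde O(nk)$ reads, a constant-factor rank-$k$ approximation of $\ha$ in the span of a column subset $T_0$ with $|T_0| = \tilde O(k)$; this uses only the diagonal of $A$ and the columns $A_{:,T_0}$. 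Then one round of adaptive sampling \cite{deshpande2006adaptive} (equivalently, continuing the recursive ridge leverage score sampling of \cite{mm16}) adds $\tilde O(k/\epsilon')$ more columns with $\epsilon' = \epsilon/\sqrt n$, their probabilities computed from $A_{i,i} - A_{i,T_0}(A_{T_0,T_0})^+A_{T_0,i}$, yielding a subset $T$ with $|T| = \tilde O(\sqrt n k/\epsilon)$ such that (with high probability, after a routine boosting of the in-expectation guarantee, or directly via volume sampling \cite{agr16}) the best rank-$k$ approximation of $\ha$ with column space inside $\mathrm{colspan}(\ha_{:,T})$ has $(1+\epsilon')$-relative error. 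The reads for this step are $\tilde O(n|T| + |T|^2) = \tilde O(n^{3/2}k/\epsilon)$ (in the regime $k/\epsilon \le \sqrt n$; otherwise the claimed bounds dominate $\nnz(A)$ and $n^\omega$, so one just reads $A$ and runs an exact SVD).

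The remaining step is purely linear-algebraic: extract $P$ and then $M$ in factored form without ever forming $\ha$. Writing $C = \ha_{:,T}$, I have $C^TC = A_{T,T}$ and $C^T\ha = A_{T,:}$ in hand, so the projection of $\ha$ onto $\mathrm{colspan}(C)$ is $\Pi_C\ha = C(C^TC)^+C^T\ha = CD$ with $D := (A_{T,T})^+A_{T,:}$, and $P := U_kU_k^T$ for $U_k$ the top-$k$ left singular vectors of $CD$ gives $\norm{\ha - \ha P}_F^2 = \norm{\ha - U_kU_k^T\ha}_F^2 \le (1+\epsilon')\norm{\ha - \ha_k}_F^2$. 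I obtain $U_k$ in factored form by eigendecomposing $A_{T,T}$ to get an orthonormal basis $Q_C$ of $\mathrm{colspan}(C)$ and then taking the best rank-$k$ approximation of the $|T|\times n$ matrix $Q_C^TCD$; this SVD of an $n\times|T|$-scale object is the dominant cost, $\tilde O(n|T|^{\omega-1}) = \tilde O(n^{(\omega+1)/2}(k/\epsilon)^{\omega-1})$ time. Finally, since $\ha$ is symmetric, $\ha P\ha = (\ha U_k)(\ha U_k)^T$ and $\ha U_k = A_{:,T}\cdot(\text{the }|T|\times k\text{ change-of-basis matrix}) =: M$, computable from $A_{:,T}$ and the already-computed small factors; output $M$, which satisfies $\norm{A - MM^T}_F^2 = \norm{A - \ha P\ha}_F^2 \le (1+3\epsilon)\norm{A-A_k}_F^2$.

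The main obstacle is the first step: achieving the unusually stringent $(1+\epsilon/\sqrt n)$-relative error for $\ha$ while reading only $\tilde O(n^{3/2}k/\epsilon)$ entries of $A$. This is possible precisely because the dot-product structure of $\ha$ is ``precomputed'' inside $A$ --- in particular the sampler always reads the full diagonal of $A$, which is what lets it bypass the $\nnz(A)$ lower bound for general matrices --- and because adaptive / volume / ridge-leverage-score sampling need only $\tilde O(k/\epsilon')$ columns (not $\tilde O(k/(\epsilon')^2)$) to guarantee a relative-error rank-$k$ span. Everything else --- boosting the adaptive-sampling guarantee to high probability, the routine accounting that $\tilde O(n|T|^{\omega-1})$ dominates the arithmetic, and verifying the identities used to compute $M$ --- is bookkeeping.
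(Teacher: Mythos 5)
Your proposal is correct and follows essentially the same route as the paper: reduce via Theorem \ref{APAbound} to producing a rank-$k$ projection $P$ with $(1+\epsilon/\sqrt n)$-relative error for $\ha$, obtain $P$ from a ridge-leverage-score column subset $T$ of size $\tilde O(\sqrt n\, k/\epsilon)$ by exploiting $\ha\ha = A$ (Lemma \ref{thm:originalSampling} for the scores, then column subset selection), and extract $M$ from the factored identity $\ha P\ha = \big(A_{:,T}(A_{T,T})^+A_{T,:}\big)_k$. The paper commits to the ridge-leverage-score column subset selection guarantee (Lemma \ref{css_intro}, i.e.\ Theorem 7 of \cite{cohen2015ridge}) where you hedge between adaptive/volume/ridge sampling with boosting, but the paper itself notes these are interchangeable here; one minor wording nit is that you cannot literally form an orthonormal basis $Q_C$ of $\mathrm{colspan}(\ha_{:,T})$ without $\ha$ in hand, though the quantity you actually use, $Q_C^T C D$, is computable from the eigendecomposition $A_{T,T} = V_C\Sigma_C^2 V_C^T$, which is exactly what the paper does via $(S^TAS)^{+/2}$.
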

This follows from Lemma \ref{thm:originalSampling}, adapted from Theorem 8 of \cite{mm16}, which shows that it is possible to estimate the ridge leverage scores of $\ha$ with $\tilde O(nk)$ accesses to $A$ and $O(nk^{\omega-1})$ time. We can use these scores to sample a set of rows from $\ha$ whose span contains a near optimal low-rank approximation. Specifically we have:

\begin{lemma}[Theorem 7 of \cite{cohen2015ridge}]\label{css_intro} 
For any $B \in \mathbb{R}^{n \times n}$,
for $i \in \{1,\ldots,n\}$, let $\tilde \tau_i^k \ge \tau_i^k(B)$ be an overestimate for the $i^{th}$ rank-$k$ ridge leverage score of $B$. Let $p_i = \frac{\tilde \tau^k_i}{\sum_i \tilde \tau^k_i}$ and $t = c \left(\log k + \frac{\log(1/\delta)}{\epsilon}\right) \sum_i^k \tilde \tau^k_i$ for $\epsilon < 1$ and some sufficiently large constant $c$. Construct $R$ by sampling $t$ rows of $B$, each set to row $b_i$ with probability $p_i$. With probability $1-\delta$, letting $P_R$ be the projection onto the rows of $R$,
\begin{align*}
\norm{B-\left (B P_R\right)_k}_F^2 \le (1+\epsilon) \norm{B-B_k}_F^2.
\end{align*}
\end{lemma}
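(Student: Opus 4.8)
The plan is to reproduce the argument of Theorem 7 of \cite{cohen2015ridge}; let me lay out its structure. First I would set $\lambda \eqdef \frac1k\norm{B-B_k}_F^2$, so that the (row) rank-$k$ ridge leverage score is $\tau_i^k(B) = b_i^T(B^TB+\lambda I)^+ b_i$ and, by Lemma \ref{lem:sum}, $\sum_i \tau_i^k(B) \le 2k$ (so the sample has size $O(k\log k + k/\epsilon)$ when the $\tilde\tau_i^k$ are constant-factor overestimates). Let $S \in \R^{t\times n}$ be the sampling operator that additionally rescales its $j$-th sampled row $i$ by $1/\sqrt{tp_i}$, and put $\tilde R \eqdef SB$; since rescaling a row does not change the row space, $P_R = P_{\tilde R}$, so it suffices to argue about $\tilde R$. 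Fix the SVD $B = U\Sigma V^T$ and write $B_k = U_k\Sigma_kV_k^T$, noting the orthogonality facts $B_k^TB_{\setminus k} = 0$ and $U_k^TU_{\setminus k} = 0$. The first reduction is standard: for any rank-$k$ $Y$ whose rows lie in the row span of $\tilde R$, the Pythagorean theorem gives $\norm{B-Y}_F^2 = \norm{B(I-P_{\tilde R})}_F^2 + \norm{BP_{\tilde R}-Y}_F^2$, and minimizing over such $Y$ yields exactly $\norm{B-(BP_{\tilde R})_k}_F^2$; hence it is enough to exhibit \emph{one} rank-$k$ matrix $Y$ supported on the row span of $\tilde R$ with $\norm{B-Y}_F^2 \le (1+O(\epsilon))\norm{B-B_k}_F^2$, after which a rescaling of $\epsilon$ finishes.

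Next I would extract two properties from the sample. Because $p_i \propto \tilde\tau_i^k \ge \tau_i^k(B)$ and $\sum_i\tau_i^k(B)=O(k)$, the ``ridged Gram matrix'' $B^TB+\lambda I$ has effective dimension $O(k)$, so a matrix Chernoff bound shows that the $O(\log(k/\delta))$-dependent part of the stated sample size makes $S$ a spectral approximation, $\tfrac12(B^TB+\lambda I)\preceq \tilde R^T\tilde R+\lambda I\preceq \tfrac32(B^TB+\lambda I)$; in particular every singular direction of $B$ with $\sigma_i^2 \gtrsim \lambda$ (all but at most $k$ of the top $k$, and at most $k$ of the tail) is essentially captured by the row span of $\tilde R$, and $S$ is an $O(1)$-subspace embedding for their span. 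The $O(1/\epsilon)$-dependent part of the sample size then contributes, via a \emph{spectral-norm} matrix Bernstein (approximate matrix multiplication) bound, control on the residual cross term between the captured ``head'' and the remaining ``tail'': with $H$ a suitable set of head directions and $U_{\setminus H},\Sigma_{\setminus H}$ its complement, one gets $\bigl\|\Sigma_{\setminus H}U_{\setminus H}^T(I-S^TS)U_H\bigr\|_2^2 \le O(\epsilon/k)\norm{B-B_k}_F^2$ with the claimed probability. The ridge parameter $\lambda$ is chosen precisely so that the relevant second-moment quantities (which involve $p_i^{-1}\lesssim k/\tau_i^k(B)$ together with quotients of the form $(\sigma_j^2+\lambda)/\max(\Theta(\lambda),\sigma_j^2)$) come out proportional to $\lambda$, and so that the matrix-dimension factor in the concentration bound is only $O(\log k)$.

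For the assembly I would build the candidate by regression onto the sampled coordinates: set $G \eqdef B^TS^TSU_H(U_H^TS^TSU_H)^{+}$, whose columns are linear combinations of columns of $B^TS^T$ and hence lie in the row span of $\tilde R$; the pseudoinverse is a genuine inverse by the spectral bound. Splitting $U^TS^TSU_H$ into its $H$-block ($\approx I$) and its complementary block ($\approx 0$) gives $G = V_H\Sigma_H + \Delta$ with $\Delta = V_{\setminus H}\Sigma_{\setminus H}\bigl(U_{\setminus H}^T(S^TS-I)U_H\bigr)(U_H^TS^TSU_H)^{-1}$, so $\norm{\Delta}_2 = O\bigl(\norm{\Sigma_{\setminus H}U_{\setminus H}^T(I-S^TS)U_H}_2\bigr)$ and, since $\Delta$ has low rank, $\norm{\Delta}_F^2 \le O(k)\norm{\Delta}_2^2 = O(\epsilon)\norm{B-B_k}_F^2$ by Step~2. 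Taking $Y$ of rank $\le k$ obtained from $U_HG^T$ (truncating to the top $k$ directions if $H$ is enlarged to include large tail directions), the orthogonality facts above cancel the cross terms, leaving $\norm{B-Y}_F^2 = \norm{B-B_k}_F^2 + (\text{error controlled by }\norm{\Delta}_F^2) + (\text{the }\le\epsilon\norm{B-B_k}_F^2\text{ lost by discarding top-}k\text{ directions below the }\epsilon\lambda\text{ scale}) \le (1+O(\epsilon))\norm{B-B_k}_F^2$. A union bound over the two failure events of Step~2 completes the proof.

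I expect the genuine obstacle to be Step~2 — specifically getting the $O(1/\epsilon)$ oversampling rate rather than $O(1/\epsilon^2)$. The reason this is possible is exactly the distinction between this statement and a projection-cost preserving sketch (Lemma \ref{thm:pcp}): a PCP must \emph{multiplicatively} preserve the cost of every rank-$k$ projection and genuinely needs $\epsilon^{-2}$ samples, whereas here we need only drive a single additive error term $\norm{\Delta}_F^2$ below $\epsilon\norm{B-B_k}_F^2$, with slack of order $\lambda=\norm{B-B_k}_F^2/k$ — so a one-sided, spectral-norm AMM bound with variance $O(\mathrm{poly}(k)\,\lambda)$ and log-dimension $O(\log k)$ suffices, and beyond the $O(\log(k/\delta))$ spectral-approximation threshold one needs only $O(\log(1/\delta)/\epsilon)$ further oversampling. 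A secondary subtlety, also handled in \cite{cohen2015ridge}, is the precise definition of the ``head'': it must be enlarged to include the (at most $k$) tail directions of size $\gtrsim\lambda$, since those are also captured by the sample and would otherwise inflate the cross-term estimate, yet the final rank-$k$ candidate must still be extracted carefully so as not to pay for this enlargement.
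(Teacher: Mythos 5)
You should first note that the paper does not prove this lemma at all: it is imported verbatim as Theorem 7 of \cite{cohen2015ridge} and used as a black box in the appendix (to get the $n^{3/2}$-query warm-up algorithm), so there is no in-paper argument to compare yours against. Judged as a reconstruction of the cited proof, your outline has the right architecture: the reduction via the Pythagorean theorem to exhibiting a single rank-$k$ candidate in the row span of $R$ is correct; the two regimes of the sample size (an $O(\log(k/\delta))$ oversampling for the spectral approximation $\tfrac12(B^TB+\lambda I)\preceq \tilde R^T\tilde R+\lambda I\preceq\tfrac32(B^TB+\lambda I)$ with $\lambda=\norm{B-B_k}_F^2/k$, plus an $O(\log(1/\delta)/\epsilon)$ oversampling for a spectral-norm cross-term bound at scale $\sqrt{\epsilon\lambda}$) are the right mechanism; and your explanation of why $\epsilon^{-1}$ rather than the PCP's $\epsilon^{-2}$ suffices — one additive error term driven below $\epsilon\lambda k$ versus multiplicative preservation for every projection — is exactly the right distinction from Lemma \ref{thm:pcp}.

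The step you have not actually closed is the one you flag as a ``secondary subtlety,'' and it is not secondary. Your cross-term variance bound needs every direction outside the head $H$ to satisfy $\sigma_j^2\le O(\lambda)$, and it needs $\norm{U_H^Te_i}_2^2\le O(1)\,\tau_i^k(B)$, which the ridge scores only give for directions with $\sigma_j^2\ge\Omega(\lambda)$ (the ratio $(\sigma_j^2+\lambda)/\sigma_j^2$ blows up below the ridge scale). Meanwhile your final candidate must have rank at most $k$, and among the top $k$ directions there may be up to $k$ with $\sigma_j^2\in[\epsilon\lambda,\lambda)$: discarding them can cost up to $k\lambda=\norm{B-B_k}_F^2$ (a factor of $2$, not $1+\epsilon$), yet including them in the regression head inflates the Bernstein variance by $1/\epsilon$ and pushes the sample size back to $k/\epsilon^2$, and the spectral approximation at ridge level $\lambda$ cannot certify that such directions are captured by the row span at all. ``Truncating to the top $k$ directions'' does not by itself reconcile these three constraints; this reconciliation is the actual content of the argument in \cite{cohen2015ridge}, and your sketch defers it to that reference rather than supplying it. Since the paper itself also defers the entire lemma to that reference, your proposal is no less complete than the paper on this point — but it should not be read as a self-contained proof.
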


Note that $(BP_R)_k$ can be written as a row projection of $B$ -- onto the top $k$ singular vectors of $BP_R$.
So, if we compute for each $i$, $\tilde \tau_i^k \ge \tau_i^k(\ha)$ using Lemma \ref{thm:originalSampling}, set $\epsilon' = \epsilon/3\sqrt{n}$, and let $S$ be a sampling matrix selecting  $\tilde O( \sum \tilde \tau^k_i /\epsilon') = \tilde O(k\sqrt{n}/\epsilon)$ rows of $\ha$, then by Theorem \ref{APAbound}, letting $P$ be the projection onto the rows of $S\ha $, we have $\norm{A- (\ha P)_k (\ha P)_k^T}_F^2 \le (1+\epsilon) \norm{A-A_k}_F^2$.

Finally, $(\ha P)_k (\ha P)_k^T = (\ha P \ha)_k = (AS (S^TAS)^+ S^T A)_k$. We can compute a factorization of this matrix by computing $(S^T A S)^{+/2}$, then computing $AS (S^T A S)^{+/2}$ and taking the SVD of this matrix. Since $S$ has $\tilde O(k\sqrt{n}/\epsilon)$ columns, using fast matrix multiplication this requires time $\tilde O( n \cdot (k\sqrt{n}/\epsilon)^{\omega-1}) = \tilde O(n^{(\omega+1)/2} \cdot (k/\epsilon)^{\omega-1})$ and $\tilde O(n^{3/2}k/\epsilon)$ accesses to $A$ (to read the entries of $AS$), giving Theorem \ref{slowRuntimeRidge}.
%

\section{Additional Proofs for Main Algorithm}\label{additional_proofs}

\begin{replemma}{lem:sum}[Sum of Ridge Leverage Scores]
For any $A \in \R^{n\times d}$, $\sum_{i=1}^d \tau_i(A) \le 2k$.
\end{replemma}
\begin{proof}
We rewrite Definition \ref{def:ridgeScores}
using $A$'s singular value decomposition $A = U\Sigma V^T$.
\begin{align*}
\tau_i(A) &= a_i^T \left (U \Sigma^2 U^T + \frac{\norm{A-A_k}_F^2}{k} U U^T \right)^{-1} a_i\\
&= a_i^T \left (U \bar \Sigma U^T \right ) a_i,
\end{align*}
where $\bar \Sigma_{i,i} = \frac{1}{\sigma^2_i(A) + \frac{\norm{A-A_k}_F^2}{k}}$.
We then have:
\begin{align*}
\sum_{i=1}^n \tau_i(A) = \tr\left (A^T U \bar \Sigma U^T A\right ) = \tr \left (V \Sigma \bar \Sigma \Sigma V^T \right) = \tr(\Sigma^2 \bar \Sigma)
\end{align*}
$(\Sigma^2 \bar \Sigma)_{i,i} = \frac{\sigma_i^2(A)}{ \sigma^2_i(A) + \frac{\norm{A-A_k}_F^2}{k}}$. For $i \le k$ we simply upper bound this by $1$. So:
\begin{align*}
\tr(\Sigma^2 \bar \Sigma) = k + \sum_{i=k+1}^n\frac{\sigma_i^2(A)}{\sigma_i^2(A) + \frac{\norm{A-A_k}_F^2}{k}} \leq k +k \sum_{i=k+1}^n\frac{\sigma_i^2(A)}{\norm{A-A_k}_F^2} = 2k.
\end{align*}
\end{proof}

We first prove our row sampling PCP result for spectral norm error. We follow with the closely related proof Lemma \ref{leveragePCP} which gives Frobenius norm error.
\begin{replemma}{leveragePCPspectral}[Spectral Norm Row PCP] For any PSD $A \in \R^{n \times n}$, and $\epsilon < 1$ let $k' = \lceil c k/\epsilon^2 \rceil$ and $\tilde \tau_i^{k'}(\ha) \ge \tau_i^{k'}(\ha)$ be an overestimate for the $i^{th}$ rank-$k'$ ridge leverage score of $\ha$. Let $\tilde \ell_i = 4\epsilon \sqrt{\frac{n}{k}} \tau_i^{k'}(\ha)$, $p_i = \frac{\tilde \ell_i}{\sum_i \tilde \ell_i}$, and $t = \frac{c' \log n}{\epsilon^2}\cdot \sum_i \tilde \ell_i$. Construct weighted sampling matrices $S_1,S_2 \in \R^{n \times t}$, where the $j^{th}$ column is set to $\frac{1}{\sqrt{tp_i}} e_i$ with probability $p_i$.

For sufficiently large constants $c,c'$, with high probability, letting $\tilde A = S_2^T A S_1$, for any orthogonal projection $P \in \mathbb{R}^{t \times t}$:
\begin{align*}
(1-\epsilon) \norm{AS_1(I-P)}_2^2 - \frac{\epsilon}{k}\norm{A-A_k}_F^2 \le \norm{\tilde A(I-P)}_2^2 \le (1+\epsilon) \norm{AS_1(I-P)}_2^2 + \frac{\epsilon}{k}\norm{A-A_k}_F^2.
\end{align*}
We refer to $\tilde A$ as an $(\epsilon,k)$-spectral PCP of $AS_1$.
\end{replemma}
Note that if $\tilde \tau_i^{k'}(\ha)$ is a constant factor approximation to $\tau_i^{k'}(\ha)$, $t = O \left (\frac{\sqrt{nk}\log n}{\epsilon^3} \right)$. We use `with high probability' to mean with probability $\ge  1-1/n^d$ for some large constant $d$.
\begin{proof}
For conciseness write $C \eqdef AS_1$ and write the eigendecomposition $A=U\Lambda U^T$ with $\lambda_i = \Lambda_{i,i}$.
Applying the triangle inequality we have:
\begin{align*}
\norm{\tilde A (I-P)}_2^2 = \norm{ (I-P) \tilde A^T \tilde A (I-P)}_2 &= \norm{ (I-P) [C^T C + (\tilde A^T \tilde A - C^T C)] (I-P)}_2\\
&\in \norm{C(I-P)}_2^2 \pm \norm{ (I-P) (\tilde A^T \tilde A - C^T C) (I-P)}_2
\end{align*}
Thus to show the Lemma it suffices to show:
\begin{align}\label{sufficientSpectralBound}
 \norm{ (I-P)(\tilde A^T \tilde A - C^T C)(I-P)}_2 \le \epsilon \norm{C(I-P)}_2^2 + \frac{\epsilon}{k} \norm{A-A_k}_F^2.
\end{align}

Let $m$ be the largest index with $\lambda_m^2 \ge \frac{\epsilon^2}{k} \norm{A-A_k}_F^2$. Let $U_{H} \eqdef U_m$ contain the top $m$ `head' eigenvectors of $A$ and let $U_T$ contain the remaining `tail' eigenvectors. Let  $C_H = U_HU_H^T C$ and $C_T = U_T U_T^T C$. $C_H + C_T = C$ and so
by the triangle inequality:
\begin{align}\label{spectralBreakdown}
 \| (I-P)(\tilde A^T \tilde A - C^T C)(I-P) \|_2 \le 
\|(I-P)&(C_H^T S_2 S_2^T C_H - C_H^T C_H)(I-P)\|_2\nonumber\\ + 
&\norm{(I-P)(C_T^T S_2 S_2^T C_T - C_T^T C_T)(I-P) }_2\nonumber \\ +
&2\norm{(I-P) C_H^T S_2 S_2^T C_T (I-P)}_2. 
\end{align}

We bound each of the terms in the above sum separately. Specifically we show:
\begin{itemize}
\item Head Term: $\|(I-P)(C_H^T S_2 S_2^T C_H - C_H^T C_H)(I-P)\|_2 \le \epsilon \norm{C (I-P)}_2^2$
\item Tail Term:  $\norm{(I-P)(C_T^T S_2 S_2^T C_T - C_T^T C_T)(I-P) }_2 \le  \frac{13\epsilon^2}k  \norm{A-A_k}_F^2$.
\item Cross Term: $\norm{(I-P) C_H^T S_2 S_2^T C_T (I-P)}_2 \le \frac{10\epsilon}{k} \norm{A-A_k}_F^2 + 4\epsilon \norm{C(I-P)}_F^2$.
\end{itemize}

Combining these three bounds, after adjusting constant factors on $\epsilon$ by making the constants $c$ and $c'$ in the rank parameter $k'$ and sample size $t$ large enough, gives \eqref{sufficientSpectralBound} and thus the lemma. For the remainder of the proof we thus fix $c = 1$ so $k' = \lceil k/\epsilon^2 \rceil$.

\medskip
\noindent
\textbf{Head Term:}
\medskip

We first show that the ridge scores of $\ha$ upper bound the standard leverage scores of $U_m$.
\begin{lemma}\label{lem:aux1}
For any $p$ with $\lambda^2_p(A) \ge \frac{1}{k} \norm{A-A_k}_F^2$ we have: 
$$\sqrt{\frac{16n}{k}} \cdot \tau_i^k(\ha) \ge \norm{(U_{p})_i}_2^2.$$
where $\norm{(U_{p})_i}_2^2$ is the $i^{th}$ row norm of $U_{p}$, whose columns are the top $p$ eigenvectors of $A$.
\end{lemma}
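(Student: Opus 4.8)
The plan is to reduce the whole statement to a diagonal (eigenbasis) computation. First I would write $\ha = U\Lambda^{1/2}U^T$ in its eigendecomposition with $\lambda_j = \Lambda_{j,j}$ sorted in decreasing order, and recall from the computation already carried out in the proof of Lemma \ref{lem:scoreBound} that, since $\ha\ha^T = A$ and $\norm{\ha-\ha_k}_F^2 = \sum_{j=k+1}^n\lambda_j$, the $i$-th rank-$k$ ridge leverage score of $\ha$ equals $\tau_i^k(\ha) = x_i^T U\bar\Lambda U^T x_i$, where $x_i$ is the $i$-th column of $\ha$ and $\bar\Lambda_{j,j} = (\lambda_j + \lambda_{\mathrm{tail}})^{-1}$ with $\lambda_{\mathrm{tail}} \eqdef \tfrac1k\sum_{j=k+1}^n\lambda_j$. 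Substituting $U^Tx_i = \Lambda^{1/2}U^Te_i$ collapses this to the clean diagonal sum
$$\tau_i^k(\ha) = \sum_{j=1}^n U_{ij}^2\cdot \frac{\lambda_j}{\lambda_j + \lambda_{\mathrm{tail}}},$$
while the target quantity is just $\norm{(U_p)_i}_2^2 = \sum_{j=1}^p U_{ij}^2$. So it suffices to show that each weight $\frac{\lambda_j}{\lambda_j+\lambda_{\mathrm{tail}}}$ for $j \le p$ is at least $\sqrt{k/(16n)}$; keeping only the first $p$ nonnegative terms of the sum and multiplying through by $\sqrt{16n/k}$ then yields the lemma. (If $\lambda_{\mathrm{tail}} = 0$, the pseudoinverse convention makes this weight equal $1$ whenever $\lambda_j>0$ and the claim is immediate in that degenerate case, so I would assume $\lambda_{\mathrm{tail}}>0$ below.)

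Next I would reduce to the single worst weight. Since $t\mapsto t/(t+\lambda_{\mathrm{tail}})$ is increasing and the eigenvalues are sorted, for every $j \le p$ we have $\frac{\lambda_j}{\lambda_j+\lambda_{\mathrm{tail}}} \ge \frac{\lambda_p}{\lambda_p + \lambda_{\mathrm{tail}}}$, so it is enough to prove $\frac{\lambda_p}{\lambda_p+\lambda_{\mathrm{tail}}} \ge \tfrac14\sqrt{k/n} = \sqrt{k/(16n)}$, i.e.\ (after cross-multiplying) $(4\sqrt{n/k} - 1)\,\lambda_p \ge \lambda_{\mathrm{tail}}$.

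The one genuine estimate is controlling the tail average $\lambda_{\mathrm{tail}}$ by $\lambda_p$, and this is where I expect the only real work to be. Here I would invoke Cauchy–Schwarz in the form $\norm{\cdot}_1 \le \sqrt{n}\norm{\cdot}_2$ used elsewhere in the paper: $\sum_{j=k+1}^n\lambda_j \le \sqrt{n}\,\big(\sum_{j=k+1}^n\lambda_j^2\big)^{1/2} = \sqrt{n}\,\norm{A-A_k}_F$, hence $\lambda_{\mathrm{tail}} \le \tfrac{\sqrt n}{k}\norm{A-A_k}_F = \sqrt{n/k}\cdot\big(\tfrac1k\norm{A-A_k}_F^2\big)^{1/2} \le \sqrt{n/k}\,\lambda_p$, where the final inequality is exactly the hypothesis $\lambda_p^2 \ge \tfrac1k\norm{A-A_k}_F^2$. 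Finally $4\sqrt{n/k} - 1 \ge \sqrt{n/k}$ since $3\sqrt{n/k}\ge 1$ whenever $9n\ge k$, which holds trivially as $k\le n$; combining, $(4\sqrt{n/k}-1)\lambda_p \ge \sqrt{n/k}\,\lambda_p \ge \lambda_{\mathrm{tail}}$, which is what was needed. Assembling the pieces gives $\tau_i^k(\ha) \ge \sqrt{k/(16n)}\sum_{j=1}^p U_{ij}^2 = \sqrt{k/(16n)}\,\norm{(U_p)_i}_2^2$, i.e.\ $\sqrt{16n/k}\cdot\tau_i^k(\ha)\ge\norm{(U_p)_i}_2^2$. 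The main obstacle is essentially just getting the diagonalization and the bookkeeping of the two norms right; once $\tau_i^k(\ha)$ is in the diagonal-sum form above, the rest is monotonicity, one Cauchy–Schwarz step, and the hypothesis.
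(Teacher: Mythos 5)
Your proof is correct, but it takes a genuinely different route from the paper. The paper's proof works through the ridge leverage scores of $A$ itself: it writes $\tau_i^k(A) = \sum_j U_{ij}^2\,\frac{\lambda_j^2}{\lambda_j^2 + \frac{1}{k}\norm{A-A_k}_F^2}$, uses the hypothesis $\lambda_p^2 \ge \frac1k\norm{A-A_k}_F^2$ to conclude each weight with $j\le p$ is at least $1/2$ (so $\tau_i^k(A) \ge \frac12\norm{(U_p)_i}_2^2$), and then invokes Lemma~\ref{lem:scoreBound} to pass to $\tau_i^k(\ha)$ at the cost of a factor $2\sqrt{n/k}$, yielding the $\sqrt{16n/k}$. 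You instead work with $\tau_i^k(\ha) = \sum_j U_{ij}^2\,\frac{\lambda_j}{\lambda_j + \lambda_{\mathrm{tail}}}$ directly and derive the weight lower bound $\sqrt{k/(16n)}$ for $j\le p$ from scratch, via the Cauchy--Schwarz estimate $\lambda_{\mathrm{tail}} \le \frac{\sqrt n}{k}\norm{A-A_k}_F \le \sqrt{n/k}\,\lambda_p$. Your argument is self-contained (it does not rely on Lemma~\ref{lem:scoreBound}), but it essentially re-derives the $\ell_1/\ell_2$ step that the paper has already packaged inside Lemma~\ref{lem:scoreBound}; the paper's version is a bit more economical in the context of the whole section, since Lemma~\ref{lem:scoreBound} is needed elsewhere anyway. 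Both routes are valid and roughly the same length, and your bookkeeping ($(4\sqrt{n/k}-1)\lambda_p \ge \sqrt{n/k}\,\lambda_p \ge \lambda_{\mathrm{tail}}$ using $k\le n$) checks out.
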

\begin{proof}
If $U_{p}$ contains no eigenvectors, this is true vacuously as all row norms are $0$. Otherwise
\begin{align*}
\tau_i^k(A) &= a_i^T \left (A^2 + \frac{\norm{A-A_k}_F^2}{k} I \right)^{-1} a_i\nonumber\\
&= e_i^TU \hat \Lambda U^T e_i
\end{align*}
where 
$
\hat \Lambda_{j,j} = \frac{\lambda_j^2}{\lambda_j^2 + \frac{\norm{A-A_k}_F^2}{k}}.
$ We can then write:
\begin{align*}
\tau^k_i(A)= \sum_{j=1}^n U_{i,j}^2\cdot \hat \Lambda_{j,j} &\ge \sum_{j = 1}^p U_{i,j}^2\cdot \hat \Lambda_{j,j}\tag{truncate sum}\\
&\ge \sum_{j =1}^p  \left (U_{i,j}^2\cdot \frac{\lambda_j^2}{2\lambda_j^2} \right )\tag{By assumption, $\lambda_j^2(A) \ge \lambda_p^2(A) \ge \frac{1}{k}\norm{A-A_k}_F^2$}\\
&\ge  \frac{1}{2} \sum_{j=1}^m U_{i,j}^2 = \frac{1}{2} \norm{(U_{p})_i}_2^2.
\end{align*}
This gives the lemma combined with the fact that $\tau_i^k(A) \le 2\sqrt{\frac{n}{k}} \tau_i^k(\ha)$ by Lemma \ref{lem:scoreBound}.
\end{proof}

Applying Lemma \ref{lem:aux1} with $k' = \lceil k/\epsilon^2 \rceil$, since $\lambda_m^2(A) \ge \frac{1}{k'} \norm{A-A_{k'}}_F^2$, we have:
\begin{align*}
\tilde \ell_i = \sqrt{\frac{16n\epsilon^2}{k}} \cdot  \tilde \tau_i^{k'}(\ha) \ge \norm{(U_m)_i}_2^2 =  \norm{(U_H)_i}_2^2
\end{align*}
Further, $t = \frac{c'\log n}{\epsilon^2 } \cdot \sum_i \tilde \ell_i $ so if we set $c'$ large enough, we have by a standard matrix Chernoff bound (see Lemma \ref{lem:chernoff}) since $U_H$ is an orthogonal span for the columns of $C_H$ and hence and its row norms are the leverage scores of $C_H$, with high probability:
\begin{align}\label{Csubspace}
(1-\epsilon) C_H^T C_H \preceq C_H^TS_2S_2^TC_H \prec (1+\epsilon)C_H^TC_H.
\end{align}
This in turn gives:
\begin{align}
\norm{(I-P)(C_H^T S_2 S_2^T C_H - C_H^T C_H)(I-P) }_2 &\le \epsilon \norm{(I-P)C_H^T C_H(I-P) }_2\nonumber\\
 &= \epsilon \norm{C_H (I-P)}_2^2 \le \epsilon \norm{C (I-P)}_2^2.\label{spectralHead}
\end{align}

\medskip
\noindent
\textbf{Tail Term:}
\medskip

We can loosely bound via the triangle inequality and the fact that  $\norm{I-P}_2 \le 1$:
\begin{align}
\norm{(I-P)(C_T^T S_2 S_2^T C_T - C_T^T C_T)(I-P) }_2 \le \norm{S_2C_T}_2^2 + \norm{C_T}_2^2.\label{tailLoose}
\end{align}

Since $k' = \lceil k/\epsilon^2 \rceil$:
\begin{align*}
\tilde \ell_i = 4\epsilon \sqrt{\frac{n}{k}} \cdot  \tilde \tau_i^{k'}(\ha)  \ge x_i^T \left (A + \frac{\epsilon \norm{\ha-\ha_k}_F^2}{\sqrt{nk}} \right )^+ x_i.
\end{align*}
Thus for $S = S_1$ or $S = S_2$, for sufficiently large $c'$ in our sample size $t = \frac{c'\log n}{\epsilon^2}$, by a matrix Chernoff bound (Corollary \ref{cor:chernoff}), with high probability
\begin{align*}
(1-\epsilon) \ha S S^T \ha - \frac{\epsilon \norm{\ha-\ha_k}_F^2}{\sqrt{nk}} \preceq A \preceq (1+\epsilon) \ha S S^T \ha + \frac{\epsilon \norm{\ha-\ha_k}_F^2}{\sqrt{nk}}.
\end{align*}
Which in turn implies
\begin{align*}
\sigma_1^2(S^T\ha (I-U_TU_T^T)) \le (1+\epsilon)\sigma_1^2(\ha (I-U_TU_T^T)) +  \frac{\epsilon \norm{\ha-\ha_k}_F^2}{\sqrt{nk}}.
\end{align*}
and so, applying the AMGM inequality,
\begin{align*}
\norm{S_2^T C_T}_2^2 &= \norm{S_2^T (I-U_TU_T^T) AS_1}_2^2\\
 &\le \sigma_1^2(S_2^T \ha (I-U_T U_T^T) ) \cdot \sigma_1^2((I-U_T U_T^T)\ha  S_1)\\
&\le 2(1+\epsilon)^2 \sigma_1^4(\ha (I-U_TU_T)^T) + \frac{2\epsilon^2 \norm{\ha-\ha_k}_F^4}{nk}\\
&\le 8 \norm{A(I-U_TU_T^T)}_2^2 + \frac{2\epsilon^2 \norm{A-A_k}_F^2}{k}\tag{by $\ell_1$/$\ell_2$ bound.}\\
&\le \frac{10\epsilon^2 \norm{A-A_k}_F^2}{k}.\tag{$\norm{A(I-U_TU_T^T)}_2^2  \le \frac{\epsilon^2 \norm{A-A_k}_F^2}{k}$ by definition.}
\end{align*}
Similarly we have:
\begin{align*}
\norm{C_T}_2^2 &= \sigma_1^2((I-U_TU_T^T)AS_1)\\
&\le \sigma_1^2(S_1^T \ha (I-U_TU_T^T)) \cdot \sigma_1^2(\ha (I-U_TU_T^T))\\
&\le (1+\epsilon) \sigma_1^4(\ha (I-U_TU_T^T)) + \frac{\epsilon \norm{\ha-\ha_k}_F^2}{\sqrt{nk}} \cdot \sigma_1^2(\ha (I-U_TU_T^T))\\
&\le \frac{3\epsilon^2 \norm{A-A_k}_F^2}{k}.
\end{align*}
So overall, plugging back into \eqref{tailLoose} gives:
\begin{align}
\norm{(I-P)(C_T^T S_2 S_2^T C_T - C_T^T C_T)(I-P) }_2 \le \frac{13\epsilon^2}k  \norm{A-A_k}_F^2.\label{spectralTail}
\end{align}

\medskip
\noindent
\textbf{Cross Term:}
\medskip

By submultiplicativity of the spectral norm:
\begin{align*}
\norm{(I-P) C_{T}^T S_2 S_2^T C_H (I-P)}_2 \le \norm{C_{T}^T S_2}_2 \cdot \norm{S_2^T C_H (I-P)}_2.
\end{align*}
As shown above for our tail bound, $\norm{C_{T}^T S_2}_2 \le \sqrt{\frac{10\epsilon^2}{k}} \norm{A-A_k}_F$. Further, as shown in \eqref{Csubspace}, $S_2$ gives a subspace embedding for $C_H$ so we have:
\begin{align}
\norm{(I-P) C_{T}^T S_2 S_2^T C_H (I-P)}_2 &\le \sqrt{\frac{10\epsilon^2}{k}} \norm{A-A_k}_F \cdot (1+\epsilon) \norm{C_H(I-P)}_2\nonumber\\
&\le \frac{10 \epsilon}{k} \norm{A-A_k}_F^2 + 4\epsilon \norm{C(I-P)}^2_2\label{spectralCross}.
\end{align}
where the last bound follows form the AMGM inequality.

Plugging our head \eqref{spectralHead}, tail \eqref{spectralTail} and cross term \eqref{spectralCross} bounds in \eqref{spectralBreakdown}, and adjusting constants on $\epsilon$ by making $c$ and $c'$ sufficiently large gives the lemma.

\end{proof}


\begin{replemma}{leveragePCP}[Frobenius Norm Row PCP] For any PSD $A \in \R^{n \times n}$ and $\epsilon \le 1$ let $k' = \lceil ck/\epsilon \rceil$ and let $\tilde \tau_i^{k'}(\ha) \ge \tau_i^{k'}(\ha)$ be an overestimate for the $i^{th}$ rank-$k'$ ridge leverage score of $\ha$. Let $\tilde \ell_i = \sqrt{\frac{16n\epsilon}{k}} \cdot \tilde \tau_i^{k'}(\ha) $, $p_i = \frac{\tilde \ell_i}{\sum_i \tilde \ell_i}$, and $t = \frac{c'\log n}{\epsilon^2} \sum_i \tilde \ell_i$.
Construct weighted sampling matrices $S_1,S_2 \in \R^{n \times t}$ each whose $j^{th}$ column is set to $\frac{1}{\sqrt{tp_i}} e_i$ with probability $p_i$.

For sufficiently large constants $c,c'$, with probability $\frac{99}{100}$, letting $\tilde A = S_2^T A S_1$, for any rank-$k$ orthogonal projection $P \in \mathbb{R}^{t \times t}$:
\begin{align*}
(1-\epsilon) \norm{AS_1(I-P)}_F^2 \le \norm{\tilde A(I-P)}_F^2 + \Delta \le (1+\epsilon) \norm{AS_1(I-P)}_F^2
\end{align*}
 for some fixed $\Delta$ (independent of $P$) with $|\Delta| \le 600 \norm{A-A_k}_F^2$.
\end{replemma}
Note that if $\tilde \tau_i^{k'}(\ha)$ is a constant factor approximation to $\tau_i^{k'}(\ha)$, $t = O \left (\frac{\sqrt{nk}\log n}{\epsilon^{2.5}} \right)$.
\begin{proof}
The proof is similar to that of Lemma \ref{leveragePCPspectral}.
Denote $C \eqdef AS_1$ and write the eigendecomposition $A=U\Lambda U^T$ with $\lambda_i = \Lambda_{i,i}$. Let $m$ be the largest index with $\lambda_m^2 \ge \frac{\epsilon}{k} \norm{A-A_{k'}}_F^2.$ Let $U_H \eqdef U_m$ contain the top $m$ `head' eigenvectors of $A$ and let $U_T$ contain the remaining `tail' eigenvectors. Let $C_H = U_HU_H^T C$ and $C_T = U_T U_T^T C$ and note that $C_H + C_T = C$. 

By the Pythagorean theorem we have $\norm{C(I-P)}_F^2 = \norm{C_H (I-P)}_F^2 + \norm{C_{T}(I-P)}_F^2$.
Expanding using the identity $\norm{M}_F^2 = \tr(M^T M)$,
\begin{align}
\norm{\tilde A(I-P)}_F^2 &= \norm{S_2^T C_{H}(I-P)}_F^2 + \norm{S_2^T C_{T}(I-P)}_F^2 + 2\tr \left ( (I-P) C_H^T S_2 S_2^T C_T (I-P)\right)
\label{headTailSplit}
\end{align}
We bound each of the terms in the above sum separately. Specifically we show:
\begin{itemize}
\item Head Term: $\norm{S_2^TC_H(I-P)}_F^2 \in (1 \pm \epsilon) \norm{C_H(I-P)}_F^2$
\item Tail Term:  $\norm{S_2^TC_{T}(I-P)}_F^2 + \Delta \in \norm{C_{T}(I-P)}_F^2 \pm \epsilon \norm{A-A_k}_F^2$ where $|\Delta| \le 600\norm{A-A_k}_F^2$.
\item Cross Term: $|\tr\left ((I-P)C_{H}^T S_2 S_2^T C_T (I-P)\right )| \le \epsilon \norm {C(I-P)}_F^2.$
\end{itemize}
It is not hard to see that combining these bounds gives the Lemma.
$\norm{A-A_k}_F^2 \le (1+3\epsilon) \norm{C(I-P)}_F^2$ since $C$ is an $(\epsilon,k)$-column PCP of $A$ by Lemmas \ref{thm:pcp}, \ref{lem:scoreBound}, and the fact that rank-$k'$ ridge scores upper bound the rank-$k$ scores (as long as $c > 1$). Additionally,
$\norm{A-A_k}_F^2 \le \norm{A(I-P)}_F^2$ for any rank-$k$ $P$. So plugging into \eqref{headTailSplit} we have:
\begin{align*}
\norm{\tilde A(I-P)} + \Delta &\in (1\pm \epsilon) \norm{C(I-P)} \pm (2+3\epsilon)\epsilon \norm{C(I-P)}_F^2\\
&\in (1\pm O(\epsilon) ) \norm{C(I-P)}.
\end{align*}
This gives the lemma by adjusting constants on $\epsilon$ by making $c$ and $c'$ large enough.  For the remainder of the proof we thus fix $c = 1$ and so $k' = \lceil k/\epsilon \rceil$.

\medskip
\noindent
\textbf{Head Term:}
\medskip

By Lemma \ref{lem:aux1} applied to $k' = \lceil k/\epsilon \rceil$, since $\lambda_m^2 \ge \frac{1}{k'} \norm{A-A_{k'}}_F^2$:
$$\tilde \ell_i \ge \sqrt{\frac{16n\epsilon}{k}} \cdot  \tilde \tau_i^{k'}(\ha) \ge \norm{(U_H)_i}_2^2.$$

Further, $t = \frac{c'\log n}{\epsilon^2 } \cdot \sum_i \tilde \ell_i $ so if we set $c'$ large enough, by a standard matrix Chernoff bound (see Lemma \ref{lem:chernoff}) since $U_H$ is an orthogonal span for the columns of $C_H$ and hence and its row norms are the leverage scores of $C_H$, with high probability:
\begin{align*}
(1-\epsilon) C_H^T C_H \preceq C_H^TS_2S_2^TC_H \prec (1+\epsilon)C_H^TC_H.
\end{align*}
This gives the bound $\norm{S_2^TC_H(I-P)}_F^2 \in (1 \pm \epsilon) \norm{C_H(I-P)}_F^2$.

\medskip
\noindent
\textbf{Tail Term:}
\medskip
We want to show:
\begin{align}\label{tailBound}
\norm{S_2^TC_{T}(I-P)}_F^2 + \Delta \in \norm{C_{T}(I-P)}_F^2 \pm \epsilon \norm{A-A_k}_F^2
\end{align}
where $|\Delta| \le 600\norm{A-A_k}_F^2$.

We again split using Pythagorean theorem, $\norm{S_2^TC_{T}(I-P)}_F^2  = \norm{S_2^TC_T}_F^2 - \norm{S_2^T C_T P}_F^2$. We set $\Delta = \norm{C_T}_F^2 - \norm{S_2^T C_T}_F^2$. We have $\E \norm{S_2 C_T}_F^2 = \norm{C_T}_F^2 \le (1+\epsilon) \norm{A-A_k}_F^2$, where the last bound follows since $C$ is an $(\epsilon,k)$-column PCP for $A$. Thus by a Markov bound, with probability $299/300$, $|\Delta| \le (1+\epsilon)300\norm{A-A_k}_F^2 \le 600 \norm{A-A_k}_F^2$. 

Additionally,
we have $\norm{C_T}_2^2 \le \frac{10\epsilon}{k} \norm{A-A_k}_F^2$ and $\norm{S_2^T C_T}_2^2 \le \frac{10\epsilon}{k} \norm{A-A_k}_F^2$ with high probability by an identical argument to that used for Lemma \ref{leveragePCPspectral}. This gives:
\begin{align*}
\left |\norm{S_2^TC_TP}_F^2 - \norm{C_TP}_F^2 \right | \le k (\norm{S_2^T C_T}_2^2 + \norm{C_T}_2^2) \le 20\epsilon \norm{A-A_k}_F^2
\end{align*}
which gives the main bound \eqref{tailBound} after adjusting constants on $\epsilon$.

\medskip
\noindent
\textbf{Cross Term:}
\medskip
We want to show:
\begin{align}\label{cross1Bound}
|\tr\left ((I-P)C_{T}^T S_2 S_2^TC_H (I-P)\right )| \le \epsilon \norm {C(I-P)}_F^2.
\end{align}
We can write:
\begin{align*}
\left |\tr\left ((I-P)C_{T}^T S_2 S_2^T C_H (I-P)\right ) \right | &= \left |\tr\left (C_{T}^T S_2 S_2^T C_H (I-P)\right )\right|\tag{Cyclic property of trace and $(I-P) = (I-P)^2$}\\
&= \left |\tr \left (C_{T}^T S_2 S_2^TC_{H}(C^TC)^+(C^TC)(I-P) \right )\right|
\end{align*}
where in the last step, inserting $(C^TC)^+(C^TC)$, which is the projection onto the row span of $C$ has no effect as the rows of $C_H = U_H U_H^T C$ already lie in this span. $\langle M, N \rangle = \tr( M (C^T C)^+ N^T)$ is a semi-inner product since $C^TC$ is positive semidefinite, so by Cauchy-Schwarz:
\begin{align*}
\left |\tr\left ((I-P)C_{T}^T S_2 S_2^TC_{H} (I-P)\right ) \right | \le \norm{C_{T}^T S_2 S_2^T C_{H} (C^TC)^{+/2} }_F \cdot \norm{C(I-P)}_F.
\end{align*}
Using the singular value decomposition $C = XS Y^T$:
\begin{align}
\left |\tr\left (C_{T}^T S_2 S_2^TC_{H} (I-P)\right ) \right | &\le \norm{C_{T}^T S_2 S_2^T U_HU_H ^T X}_F \cdot \norm{C(I-P)}_F\nonumber\\
& \le \norm{C_{T}^TS_2S_2^T U_H}_F \cdot \norm{C(I-P)}_F\label{firstTrace}
\end{align}
As argued, by Lemma  \ref{lem:aux1} we have $\tilde \ell_i \ge \norm{(U_H)_i}_2^2$, and so by a standard approximate matrix multiplication result \cite{drineas2006fast}, with probability $299/300$ if we set the constant $c'$ in our sample size $> c''$ for some fixed $c''$:
\begin{align*}
\norm{C_{T}^T S_2 S_2^T U_H}_F &\le \frac{\norm{U_H}_F \norm{C_{T}}_F}{\sqrt{\frac{t \cdot c'' \norm{U_H}_F^2}{\sum_i \tilde \ell_i} \cdot }}\\
&\le \epsilon \norm{C_{T}}_F = O(\epsilon \norm{A-A_k}_F)
\end{align*}
where the last step follows from the fact that $\norm{C_T}_F^2 = O(\norm{A-A_k}_F^2)$ since $C$ is an $(\epsilon,k)$-column PCP for $A$. The final bound then follows from combining with \eqref{firstTrace} with the fact that $\norm{A-A_k}_F^2 \le (1+\epsilon) \norm{C(I-P)}_F^2$ and adjusting constants on $\epsilon$ by increasing the constant $c'$ in the sample size $t$ and $c$ in the rank parameter $k' = \lceil ck/\epsilon \rceil$.

The full lemma follows simply from noting that we can union bound over our failure probability for each term so all hold simultaneously with probability $99/100$.
\end{proof}

\begin{lemma}[Leverage Score Sampling Matrix Chernoff]\label{lem:chernoff} For any $A \in \R^{n \times d}$
for $i \in \{1,\ldots,d\}$, let $\ell_i(A) \eqdef a_i^T (AA^T)^+ a_i$ be the $i^{th}$ column leverage score of $A$ and let $\tilde \ell_i \ge \ell_i(A)$ be an overestimate for this score. 
Let $p_i = \frac{\tilde \ell_i}{\sum_i \tilde \ell_i}$ and $t = \frac{c\log(d/\delta)}{\epsilon^2} \sum_i \tilde \ell_i$ for sufficiently large $c$. Construct $C$ by sampling $t$ columns of $A$, each set to $\frac{1}{\sqrt{tp_i}}a_i$ with probability $p_i$. With probability $1-\delta$:
\begin{align}\label{chernoff_main_bound}
(1-\epsilon) CC^T \preceq AA^T \preceq (1+\epsilon) CC^T.
\end{align}
\end{lemma}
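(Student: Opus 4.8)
The plan is to reduce the claim to a concentration statement for a sum of i.i.d.\ rank-one PSD matrices and then apply a standard matrix Chernoff (Bernstein) bound. First I would pass to the column span of $A$: write the thin SVD $A = U\Sigma V^T$ with $U \in \R^{n\times r}$, $r = \rank(A) \le d$, and let $\Pi = UU^T$ be the orthogonal projector onto the column span of $A$. Then $(AA^T)^{+/2} (AA^T) (AA^T)^{+/2} = \Pi$, and since every column of $C$ is a rescaled column of $A$ we have $\Pi C = C$. Hence it suffices to show
\[
(1-\epsilon)\,\Pi \;\preceq\; (AA^T)^{+/2} CC^T (AA^T)^{+/2} \;\preceq\; (1+\epsilon)\,\Pi ,
\]
because conjugating this chain by $(AA^T)^{1/2}$ and using $(AA^T)^{1/2}(AA^T)^{+/2} = \Pi$, $\Pi CC^T\Pi = CC^T$, and $(AA^T)^{1/2}\Pi(AA^T)^{1/2} = AA^T$ yields $(1-\epsilon)AA^T \preceq CC^T \preceq (1+\epsilon)AA^T$; inverting these inequalities and absorbing the resulting $\tfrac{1}{1\pm\epsilon}$ factors into $\epsilon$ (i.e.\ enlarging $c$ by a constant) gives exactly the stated bound \eqref{chernoff_main_bound}.

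\textbf{Setting up the sum.} For the random column index $i_j$ drawn in round $j$, set $y_j \eqdef \tfrac{1}{\sqrt{t p_{i_j}}}(AA^T)^{+/2} a_{i_j}$ and $X_j \eqdef y_j y_j^T$, so that $\sum_{j=1}^t X_j = (AA^T)^{+/2} CC^T (AA^T)^{+/2}$. The $X_j$ are i.i.d., PSD, and supported on the $r$-dimensional range of $\Pi$. Their common expectation is
\[
\E[X_j] = \sum_i p_i \cdot \frac{1}{t p_i}(AA^T)^{+/2} a_i a_i^T (AA^T)^{+/2} = \frac{1}{t}(AA^T)^{+/2}(AA^T)(AA^T)^{+/2} = \frac{1}{t}\Pi ,
\]
so $\E\!\big[\sum_j X_j\big] = \Pi$, whose $r$ nonzero eigenvalues all equal $1$. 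For the per-summand operator-norm bound, $\|X_j\|_2 = \|y_j\|_2^2 = \tfrac{1}{t p_{i_j}} a_{i_j}^T (AA^T)^+ a_{i_j} = \tfrac{\ell_{i_j}(A)}{t p_{i_j}} \le \tfrac{\tilde\ell_{i_j}}{t}\cdot\tfrac{\sum_i \tilde\ell_i}{\tilde\ell_{i_j}} = \tfrac{\sum_i\tilde\ell_i}{t} = \tfrac{\epsilon^2}{c\log(d/\delta)}$, where the inequality uses the hypothesis $\ell_{i_j}(A)\le\tilde\ell_{i_j}$ and the last equality the definition of $t$.

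\textbf{Applying matrix Chernoff.} Viewing the $X_j$ in the orthonormal basis $U$ (so they become $r\times r$ matrices with $\E\sum_j \widetilde X_j = I_r$ and the same operator norms), apply the standard matrix Chernoff bound (e.g.\ Tropp): with per-summand bound $R = \tfrac{\epsilon^2}{c\log(d/\delta)}$, $\mu_{\min}=\mu_{\max}=1$, and $\epsilon<1$, both events $\lambda_{\max}(\sum_j \widetilde X_j)\ge 1+\epsilon$ and $\lambda_{\min}(\sum_j \widetilde X_j)\le 1-\epsilon$ have probability at most $r\cdot e^{-\Omega(\epsilon^2/R)} = r\cdot(d/\delta)^{-\Omega(c)} \le d\,(d/\delta)^{-\Omega(c)}$; for $c$ a sufficiently large constant this is at most $\delta$ after a union bound over the two events. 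Conjugating back by $U$ gives the displayed two-sided $\Pi$-bound, and hence the lemma after the constant rescaling of $\epsilon$ noted above.

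\textbf{Main obstacle.} There is no real obstacle here — this is a by-now-routine argument. The only points needing care are: (i) treating rank-deficient $A$ correctly via the pseudoinverse and the projector $\Pi$, which is also what makes the dimension factor in the matrix Chernoff bound $r\le d$ rather than $n$; (ii) observing that the hypothesis $\ell_i(A)\le\tilde\ell_i$ is precisely what forces the per-summand norm bound $\|X_j\|_2\le\sum_i\tilde\ell_i/t$ to be uniform in $i$; and (iii) bookkeeping the constant-factor loss in converting between the equivalent Loewner forms $(1\pm\epsilon)AA^T$ and $(1\pm\epsilon)CC^T$, which is harmless since it is absorbed into the constant $c$.
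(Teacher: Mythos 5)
Your proof is correct, and it takes a genuinely different (though closely related) route from the paper's. The paper centers the random summands, defining
\begin{align*}
Y \;=\; \Sigma^{-1}U^T\bigl(CC^T - AA^T\bigr)U\Sigma^{-1} \;=\; \sum_{j=1}^t X_j, \qquad \E X_j = 0,
\end{align*}
and then applies the matrix \emph{Bernstein} inequality (Tropp, Thm.~7.3.1), which requires both a per-summand operator-norm bound and a variance bound $\E(Y^2)\preceq \tfrac{\epsilon^2}{c\log(d/\delta)}I$; the paper computes the latter explicitly and, like you, uses the key observation $\tfrac{1}{\ell_i(A)} a_i a_i^T \preceq AA^T$ (which is your $\|y_j\|_2^2 = \ell_{i_j}(A)/(t p_{i_j})$ computation in disguise). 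You instead keep the summands uncentered and PSD, writing $\sum_j X_j = (AA^T)^{+/2}CC^T(AA^T)^{+/2}$ with $\E\sum_j X_j = \Pi$, and invoke the matrix \emph{Chernoff} bound for sums of independent PSD matrices. That variant needs only the per-summand norm bound (no variance calculation), so your argument is slightly leaner. You also get the dimension factor in the tail bound as $r=\rank(A)$ rather than $d$, and you explicitly track the passage from $(1\pm\epsilon)AA^T\preceq CC^T$ to the stated $(1\pm\epsilon)CC^T \preceq AA^T$ form and the rank-deficient case via $\Pi$ — points the paper elides. Both are routine applications of Tropp-style concentration; the content is the same.
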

\begin{proof}
Write the singular value decomposition $A = U\Sigma V^T$. Note that:
$$\ell_i = e_i^T V \Sigma U^T \left (U\Sigma^2 U^T \right)^+ U \Sigma V^T e_i = \norm{v_i}_2^2.$$

Let $Y =  \Sigma^{-1}U^T\left (CC^T - AA^T \right)U \Sigma^{-1}$. We can write:
$$Y = \sum_{j=1}^t \left [ { \Sigma}^{-1} U^T\left (c_j c_j^T-\frac{1}{t}AA^T\right ) U  \Sigma^{-1}\right ]\eqdef \sum_{j=1}^t \left [ X_j \right ].$$
For each $j \in 1,\ldots,t$, $X_j$ is given by:
\begin{align*}
X_j = 
\frac{1}{t} \cdot \Sigma^{-1}U^T\left (\frac{1}{p_i} a_i a_i^T - AA^T\right ) U \Sigma^{-1} \text{ with probability } p_i.
\end{align*}

$\E Y = 0$ since $\E X_j = \sum_{i=1}^d p_i \left [ \frac{1}{p_i} a_i a_i^T - AA^T\right ] = 0$. Furthermore, $CC^T = U \Sigma Y \Sigma U + AA^T$. Showing $\norm{Y}_2 \le \epsilon$ gives $-\epsilon I \preceq Y \preceq  \epsilon I$, which gives the lemma. We prove this bound using 
a matrix Bernstein inequality from \cite{tropp2015introduction}. This inequality requires upper bounds on the spectral norm of each $X_j$ and on variance of $Y$. We first note that 
for any $i$, $\frac{1}{\ell_i(A)} a_i a_i^T \preceq AA^T.$ This follows from writing any $x$ in the column span of $A$ as $(AA^T)^{+/2}y$ and then noting:
\begin{align*}
x^T \left (a_i a_i^T \right ) x = y^T (AA^T)^{+/2} a_i a_i^T (AA^T)^{+/2} y \le \ell_i(A) \norm{y}_2^2
\end{align*}
since $(AA^T)^{+/2} a_i a_i^T (AA^T)^{+/2}$ is rank-$1$ and so has maximum eigenvalue $\tr\left ((AA^T)^{+/2} a_i a_i^T (AA^T)^{+/2} \right ) = \ell_i(A)$ 
by the cyclic property of trace. Further $x^T AA^T x = y ^T (AA^T)^{+/2} AA^T (AA^T)^{+/2} y = \norm{y}_2^2$, giving us the bound. We then have:
\begin{align*}
\frac{1}{\ell_i(A)} \cdot \Sigma^{-1} U^Ta_ia_i^T U\Sigma^{-1} \preceq \Sigma^{-1} U^TAA^T U\Sigma^{-1} = I.
\end{align*}
And so $\frac{1}{tp_i}\Sigma^{-1} U^Ta_ia_i^T U\Sigma^{-1} \preceq \frac{\epsilon^2}{c\log(d/\delta ) \tilde \ell_i} \Sigma^{-1} U^Ta_ia_i^T U\Sigma^{-1} \preceq \frac{\epsilon^2}{c\log(d/\delta)} I$. Additionally, $$\frac{1}{tp_i}\Sigma^{-1} U^TAA^T U\Sigma^{-1} \preceq \frac{\epsilon^2}{c\log(d/\delta)} I$$ as long as $\tilde \ell_i \le 1$ which we may as well enforce since $\ell_i(A) \le 1$.
Overall this gives $\norm{X_j}_2 \le \frac{\epsilon^2}{c\log(d/\delta)}$.
Next we bound the variance of $Y$.
\begin{align}
\E (Y^2) &= t \cdot \E (X_j^2 ) = \frac{1}{t} \sum p_i \cdot  \left (\frac{1}{p_i^2} \Sigma^{-1} U^Ta_i a_i^T U \Sigma^{-2} U^Ta_ia_i^T U \Sigma^{-1} \right. \nonumber\\
&\left. - 2\frac{1}{p_i} \Sigma^{-1} U^T a_i a_i^T U \Sigma^{-2} U^TAA^T U {\Sigma}^{-1} + \Sigma^{-1} U^TAA^T U \Sigma^{-2} U^TAA^T U \Sigma^{-1} \right) \nonumber\\
&\preceq \frac{1}{t}\sum \left [ \frac{\sum \tilde\ell_i}{\tilde \ell_i} \cdot  \ell_i(A) \cdot \Sigma^{-1} Ua_i a_i^T U \Sigma^{-1}\right ] - \frac{1}{t}\Sigma ^{-1} U^T AA^T U\Sigma^{-1}\nonumber\\
&\preceq \frac{\epsilon^2}{c\log(d/\delta)} \Sigma ^{-1} U^T AA^T U\Sigma^{-1} \nonumber \preceq  \frac{\epsilon^2}{c\log(d/\delta)} I.
\end{align}
By Theorem 7.3.1 of \cite{tropp2015introduction}, for $\epsilon < 1$,
\begin{align*}
\Pr \left [\norm{ Y}_2 \ge \epsilon \right ] &\le 4 d \cdot e^{\frac{-\epsilon^2/2}{\left (\frac{\epsilon^2}{c\log(d/\delta)}(1+\epsilon/3)\right )}}.
\end{align*}
Which gives $\Pr \left [\norm{ Y}_2 \ge \epsilon \right ]  \le 4d e^{- \frac{c\log(d/\delta)}{4}}\le \delta$
if we choose $c$ large enough. 
\end{proof}

The above Lemma also gives an easy Corollary for the approximation obtained when sampling with ridge leverage scores:
\begin{corollary}[Ridge Leverage Scores Sampling Matrix Chernoff]\label{cor:chernoff}
For any $A \in \R^{n \times d}$
for $i \in \{1,\ldots,d\}$, let $\tau^\lambda_i(A) \eqdef a_i^T (AA^T + \lambda I)^+ a_i$ be the $i^{th}$ $\lambda$-ridge leverage score of $A$ and let $\tilde \tau_i^\lambda \ge \tau_i^\lambda(A)$ be an overestimate for this score. 
Let $p_i = \frac{\tilde \tau_i^\lambda}{\sum_i \tilde \tau_i^\lambda}$ and $t = \frac{c\log(d/\delta)}{\epsilon^2} \sum_i \tilde \tau_i^\lambda$ for sufficiently large $c$. Construct $C$ by sampling $t$ columns of $A$, each set to $\frac{1}{\sqrt{tp_i}}a_i$ with probability $p_i$. With probability $1-\delta$:
\begin{align}
(1-\epsilon) CC^T - \epsilon \lambda I \preceq AA^T \preceq (1+\epsilon) CC^T + \epsilon \lambda I.
\end{align}
\end{corollary}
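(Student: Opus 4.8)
The plan is to reduce to Lemma~\ref{lem:chernoff} by a standard augmentation trick. Form the matrix $\bar A \eqdef [\,A \ \ \sqrt{\lambda}\,I\,] \in \R^{n\times(d+n)}$, so that $\bar A \bar A^T = AA^T + \lambda I \eqdef M$. The key observation is that for $i \le d$ the (standard) leverage score of the $i$-th column of $\bar A$ is exactly $a_i^T(\bar A\bar A^T)^+ a_i = a_i^T(AA^T+\lambda I)^+ a_i = \tau_i^\lambda(A)$, i.e.\ the ridge leverage score. Sample $C$ from the columns of $A$ as in the statement, and set $\bar C \eqdef [\,C\ \ \sqrt{\lambda}\,I\,]$, keeping the $\sqrt{\lambda}I$ block deterministically rather than sampling it; then $\bar C\bar C^T = CC^T + \lambda I$ and, crucially, $\bar C\bar C^T - \bar A\bar A^T = CC^T - AA^T$, since the identity blocks cancel.

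Next I would run the matrix--Bernstein argument from the proof of Lemma~\ref{lem:chernoff} essentially verbatim, with $M$ in place of $AA^T$ and $M^{-1/2}$ in place of $(AA^T)^{+/2}$. Define $Y \eqdef M^{-1/2}(\bar C\bar C^T - \bar A\bar A^T)M^{-1/2} = M^{-1/2}(CC^T - AA^T)M^{-1/2} = \sum_{j=1}^t X_j$ with $\E X_j = 0$, exactly as before (the deterministic identity block contributes nothing to this random deviation). The only new structural fact needed is the ridge analogue of $\frac{1}{\ell_i}a_ia_i^T \preceq AA^T$, namely $\frac{1}{\tau_i^\lambda(A)}a_ia_i^T \preceq M$; this holds since $M^{-1/2}a_ia_i^TM^{-1/2}$ is rank one with trace $a_i^TM^{-1}a_i = \tau_i^\lambda(A)$, hence is $\preceq \tau_i^\lambda(A)\,I$. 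Combined with $AA^T \preceq M$ and $\tau_i^\lambda(A) \le \tilde\tau_i^\lambda$, this gives $\|X_j\|_2 \le \frac{1}{t}\sum_i \tilde\tau_i^\lambda = \frac{\epsilon^2}{c\log(d/\delta)}$ and the same variance bound $\E Y^2 \preceq \frac{\epsilon^2}{c\log(d/\delta)}I$, so Theorem~7.3.1 of~\cite{tropp2015introduction} yields $\|Y\|_2 \le \epsilon$ with probability $1-\delta$ for $c$ large enough.

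Finally, $\|Y\|_2 \le \epsilon$ means $-\epsilon M \preceq CC^T - AA^T \preceq \epsilon M$, i.e.\ $-\epsilon(AA^T+\lambda I) \preceq CC^T - AA^T \preceq \epsilon(AA^T+\lambda I)$. Rearranging each side and using $\epsilon \le 1/2$ gives $(1-\epsilon)CC^T - \epsilon\lambda I \preceq AA^T \preceq (1+2\epsilon)CC^T + 2\epsilon\lambda I$; applying this with $\epsilon$ replaced by $\epsilon/2$ (absorbing the change into the constant $c$) yields the stated two-sided bound. The only point to be careful about is the legitimacy of keeping the identity block fixed rather than including it in the sampling, but this is immediate here: that block cancels out of $\bar C\bar C^T - \bar A\bar A^T$, so $Y$ is literally the same random object analyzed in Lemma~\ref{lem:chernoff} (up to the harmless substitution $(AA^T)^{+/2}\to M^{-1/2}$), and nothing else in that proof changes. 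I do not expect any genuine obstacle beyond this bookkeeping.
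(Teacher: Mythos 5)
Your proposal is correct and takes essentially the same approach as the paper: the paper's proof of Corollary~\ref{cor:chernoff} likewise instantiates Lemma~\ref{lem:chernoff} on the augmented matrix $[A,\ \sqrt{\lambda}I]$ with the identity block deterministically fixed, then subtracts $\lambda I$ from all sides of the resulting spectral sandwich. You spell out the bookkeeping a bit more explicitly (re-deriving the mean-zero decomposition and the ridge analogue $\frac{1}{\tau_i^\lambda}a_ia_i^T\preceq AA^T+\lambda I$ from scratch, and explicitly tracking the $\frac{1}{1\pm\epsilon}$ rearrangement at the end), but the idea and the route are identical to the paper's one-paragraph proof.
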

\begin{proof}
We can instantiate Lemma \ref{lem:chernoff} with $[ A, \sqrt{\lambda I}]$ setting $\tilde \ell_i = \tilde \tau_i^\lambda$. We simply fix the columns of the identity to appear in our sample. This only decreases variance, all calculations go through, and with probability $\delta$:
\begin{align*}
(1-\epsilon) [C,\sqrt{\lambda} I][C,\sqrt{\lambda} I]^T \preceq [A,\sqrt{\lambda} I][A,\sqrt{\lambda} I]^T \preceq (1+\epsilon) [C,\sqrt{\lambda} I][C,\sqrt{\lambda} I]^T
\end{align*}
which gives the desired bound if we subtract $\lambda I$ from all sides.
\end{proof}

\section{Outputting a PSD Low-Rank Approximation}\label{psd_appendix}

In this section, we prove Theorem \ref{thm:psdOutput}, showing to to efficiently output a low-rank approximation to $A$ under the restriction that the approximation is also PSD. We start with the following lemma, which shows that as long as we have a good low rank subspace for approximating $A$ in the spectral norm (computable using Theorem \ref{thm:spectral}), we can quickly find a near optimal PSD low-rank approximation:

\begin{lemma}\label{psdApproxProj}
Given PSD $A \in \R^{n \times n}$ and orthonormal basis $Z \in R^{n \times m}$ with $\norm{A-AZZ^T}_2^2 \le \frac{\epsilon}{k}\norm{A-A_k}_F^2$, there is an algorithm which accesses $O \left(\frac{nm\log m}{\epsilon^2} \right)$ entries of $A$, runs in $\tilde O \left (\frac{nm^{\omega-1} }{\epsilon^{2(\omega-1)}} \right )$ time, and with probability $99/100$ outputs $M \in \mathbb{R}^{n \times k}$ satisfying $\norm{A - MM^T}_F^2 \le (1+\epsilon) \norm{A-A_k}_F^2$.
\end{lemma}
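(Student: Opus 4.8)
The plan is to reduce the statement to a clean \emph{idealized} fact — the best PSD rank-$k$ matrix whose column span lies in $\mathrm{span}(Z)$ already achieves $(1+O(\epsilon))\|A-A_k\|_F^2$ — and then to show that a good-enough approximation to that matrix can be produced within the stated access/time budget by leverage-score sampling, following the same template as Steps 4--6 of \nameref{spectralAlgo} and Lemma \ref{AS1frob}. Throughout write $P = ZZ^T$, $\eta := \tfrac{\epsilon}{k}\|A-A_k\|_F^2$, and $G := Z^TAZ$, which is PSD since $A$ is; we may assume $m\ge k$.

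\textbf{Correctness of the idealized procedure.} Any PSD rank-$\le k$ matrix with column span in $\mathrm{span}(Z)$ equals $ZMZ^T$ for some $M\succeq 0$ of rank $\le k$. Since $Z^T(A-PAP)Z = 0$, the matrix $A-PAP$ is trace-orthogonal to every $ZM'Z^T$, so $\|A-ZMZ^T\|_F^2 = \|A-PAP\|_F^2 + \|G-M\|_F^2$, which is minimized at $M = G_k$ (PSD, as $G\succeq 0$); call the minimizer $B^* = ZG_kZ^T = (PAP)_k$. Using that $A$ is PSD, $\|A-A_k\|_F^2 = \|A\|_F^2 - \sum_{i\le k}\lambda_i(A)^2$ and $\|A-B^*\|_F^2 = \|A\|_F^2 - \sum_{i\le k}\lambda_i(G)^2$, so it suffices to prove $\sum_{i\le k}\lambda_i(G)^2 \ge \sum_{i\le k}\lambda_i(A)^2 - 3k\eta$. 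The hypothesis gives $\|(I-P)A\|_2 = \|A(I-P)\|_2 \le \sqrt\eta$ (using $A=A^T$), hence $\|(I-P)u_i\|_2 \le \sqrt\eta/\lambda_i$ for each eigenpair $(\lambda_i,u_i)$ of $A$ with $\lambda_i>0$ — i.e. the large eigendirections of $A$ sit almost exactly inside $\mathrm{span}(Z)$. Writing $PAPu_i = \lambda_iPu_i - PA(I-P)u_i$ and bounding $\|PA(I-P)u_i\|_2 \le \|A(I-P)\|_2\|(I-P)u_i\|_2 \le \eta/\lambda_i$ yields $\|PAPu_i\|_2 \ge \sqrt{\lambda_i^2-\eta} - \eta/\lambda_i$, so $\|PAPu_i\|_2^2 \ge \lambda_i^2 - 3\eta$ (the cases $\lambda_i^2<\eta$ and $\lambda_i=0$ being trivial). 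Since $\lambda_i(G)=\lambda_i(PAP)$ for $i\le m$ and $\sum_{i\le k}\lambda_i(PAP)^2 = \max_{V^TV=I_k}\|PAP\,V\|_F^2 \ge \|PAP\,U_k\|_F^2 = \sum_{i\le k}\|PAPu_i\|_2^2$, we get $\sum_{i\le k}\lambda_i(G)^2 \ge \sum_{i\le k}\lambda_i(A)^2-3k\eta$, hence $\|A-B^*\|_F^2 \le (1+3\epsilon)\|A-A_k\|_F^2$.

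\textbf{The algorithm.} Sample $S\in\mathbb{R}^{n\times t}$ with $t=\Theta(m\log m/\epsilon^2)$ columns according to the leverage scores $\|z_i\|_2^2/m$ of $Z$. By the matrix Chernoff bound (Lemma \ref{lem:chernoff}), $S$ is a $(1\pm\epsilon)$-subspace embedding for $\mathrm{span}(Z)$ with high probability, and sampling by these same scores also yields, via approximate matrix multiplication \cite{drineas2006fast}, the bound $\|Z^TSS^TE - Z^TE\|_F \le \epsilon\|E\|_F$ for any fixed $E$ with probability $99/100$. Read the $t$ columns $AS$ (and hence, by symmetry, the rows $S^TA=(AS)^T$) of $A$, using $O(nt)=O(nm\log m/\epsilon^2)$ accesses. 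Form $\hat A := Z\hat W$ in factored form, where $\hat W := (S^TZ)^+(S^TA)\in\mathbb{R}^{m\times n}$ solves $\min_W\|S^TZW-S^TA\|_F$; computing $(S^TZ)^+$, then $\hat W$, then a thin SVD $\hat W=\hat U\hat\Sigma\hat V^T$ costs $\tilde O(n m^{\omega-1}/\epsilon^{2(\omega-1)})$ time with fast matrix multiplication. Let $\hat Q := Z\hat U_k\in\mathbb{R}^{n\times k}$ (orthonormal), approximately compute $\hat Q^TA\hat Q$ by a further leverage-score sample of $A$ exactly as in Step 5 of \nameref{spectralAlgo} (an extra $\tilde O(nk)$ accesses and $\tilde O(nk^{\omega-1})$ time), project the result to the PSD cone to obtain $H\succeq 0$, and output $M := \hat Q H^{1/2}$, so $MM^T = \hat Q\hat Q^TA\hat Q\hat Q^T$ up to the PSD projection.

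\textbf{Correctness of the algorithm and the main obstacle.} Combining the subspace embedding for $Z$ with the AMM bound applied to the residual $(I-P)A$ — whose Frobenius norm satisfies $\|(I-P)A\|_F^2 = \sum_j\lambda_j^2\|(I-P)u_j\|_2^2 \le \sum_j\min(\lambda_j^2,\eta) \le k\eta + \|A-A_k\|_F^2 = (1+\epsilon)\|A-A_k\|_F^2$ — the sketched (rank-constrained) regression behaves just as in Lemma \ref{AS1frob}: $\hat Q$ spans a near-optimal rank-$k$ subspace of $\mathrm{span}(Z)$ for approximating $A$, and $\|A-\hat Q\hat Q^TA\hat Q\hat Q^T\|_F^2 \le (1+O(\epsilon))\|A-B^*\|_F^2 \le (1+O(\epsilon))\|A-A_k\|_F^2$, and the PSD projection of $\hat Q^TA\hat Q$ perturbs the cost by a $(1+O(\epsilon))$ factor since $\hat Q^TA\hat Q$ is already within $O(\sqrt\epsilon)\|A-A_k\|_F$ of the PSD matrix $Q^{*T}AQ^*$; rescaling $\epsilon$ by a constant gives the lemma. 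The one point that needs care — and the only real obstacle — is that the crude analysis (bound $\|A-\hat A\|_F$ and truncate to rank $k$) loses a constant factor; avoiding this requires genuinely using the \emph{spectral} guarantee on $Z$ (the inequality $\|(I-P)u_i\|_2 \le \sqrt\eta/\lambda_i$), exactly as in the proof of Lemma \ref{AS1spectral}, to ensure it is the near-optimal rank-$k$ \emph{subspace}, not merely the near-optimal rank-$m$ approximant, that survives the truncation.
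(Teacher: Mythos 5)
Your proof of the idealized fact---the best PSD rank-$k$ matrix supported on $\mathrm{span}(Z)$ achieves $(1+O(\epsilon))\|A-A_k\|_F^2$---is correct and is a genuinely nice, self-contained alternative to the paper's citation of Lemma 10 of \cite{CWPSD}. The trace-orthogonality decomposition $\|A-ZMZ^T\|_F^2 = \|A-PAP\|_F^2 + \|G-M\|_F^2$, the identity $\|A-B^*\|_F^2 = \|A\|_F^2 - \sum_{i\le k}\lambda_i(G)^2$, and the eigenvalue comparison via $\|(I-P)u_i\|_2 \le \sqrt{\eta}/\lambda_i$ all check out. This is a more elementary route than the paper takes for that half of the argument.

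However, the algorithmic half of your proposal has a genuine gap, and you essentially acknowledge it yourself. The paper solves the sketched, rank- and PSD-constrained regression $\min_{X\succeq 0, \operatorname{rank}(X)=k}\|S_1^TA - S_1ZXZ^T\|_F^2$ in one shot via the affine-embedding framework (Theorem 39 of \cite{clarkson2013low}) plus the closed form $\tilde X = (B/2+B^T/2)_{k,+}$, which automatically selects the right rank-$k$ subspace and the right PSD weighting simultaneously. Your two-stage scheme---sketch to get $\hat W\approx Z^TA$, take the top-$k$ SVD of $\hat W$ to get $\hat Q$, then form $\hat Q\hat Q^TA\hat Q\hat Q^T$---requires a separate argument that this particular $\hat Q$ nearly maximizes $\sum_{i\le k}\lambda_i(Q^TAQ)^2$ over rank-$k$ subspaces $Q\subseteq\mathrm{span}(Z)$, but $\hat Q$ comes from the left singular subspace of (approximately) $PA$, i.e., from the top eigenvectors of $PAAP$, not of $G=Z^TAZ$, and you never reconcile the two. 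The concluding sentence defers exactly this step to ``an argument like Lemma \ref{AS1spectral}'' without carrying it out; similarly, the claim that the PSD projection of the sampled $\hat Q^TA\hat Q$ only perturbs the cost by $(1+O(\epsilon))$ is asserted via a comparison to an undefined $Q^{*T}AQ^*$ that does not even live in the same basis. As written, the reduction from ``there exists a good PSD rank-$k$ approximant in $\mathrm{span}(Z)$'' to ``the algorithm finds one'' is not established.
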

\begin{proof}
By Lemma 10 of \cite{CWPSD} for any basis $Z \in \R^{n \times m}$ with $\norm{A-AZZ^T}_2^2 \le \frac{\epsilon}{k} \norm{A-A_k}_F^2$,
\begin{align*}
\min_{X: rank(X) = k, X \succeq 0} \norm{A-Z X Z^T}_F^2 \le (1+O(\epsilon))\norm{A-A_k}_F^2.
\end{align*}

As in \nameref{mainAlgo} we can find a near optimal $X$ by further sampling $Z$ using its leverage scores. If we sample $t_1 = O \left (\frac{m \log m}{\epsilon^2}\right)$ rows of $Z$ by their leverage scores (their norms since $Z$ is orthonormal) to form $S_1 \in R^{n \times t_1}$, by Theorem 39 of \cite{clarkson2013low}, we will have an \emph{affine embedding} of $Z$. Specifically, letting $B^* = \argmin_B \norm{A - ZB}_F^2$ and $E^* = A-ZB^*$, for any $B$ we have:
\begin{align*}
\norm{S_1^T A - S_1 Z B}_F^2 + \left (\norm{E^*}_F^2 - \norm{S_1^T E^*}_F^2 \right ) \in \left [ (1-\epsilon) \norm {A-ZB}_F^2, (1+\epsilon) \norm{A-ZB}_F^2 \right ].
\end{align*}
Note that this is similar to the embedding property used in the proof of Theorem \ref{thm:main} to show that $W$ computed in Step 6 of \nameref{mainAlgo} gave a near optimal low-rank approximation to $AS_1$.

By a Markov bound, since $\E \norm{S_1^T E^*}_F^2 = \norm{E^*}_F^2$, with probability $99/100$, $\left | \norm{E^*}_F^2 - \norm{S_1^T E^*}_F^2 \right | \le 100\norm{E^*}_F^2 = O(1) \norm{A-A_k}_F^2$. This guarantees that a $(1+\epsilon)$ approximation to the sketched problem gives a $(1+O(\epsilon))$ approximation to the original. That is, for any PSD $\tilde X$ with $\rank(\tilde X) = k$, and 
$$\norm{S_1^TA - S_1 Z \tilde X Z^T}_F^2 \le (1+\epsilon)\min_{X: \rank(X) = k, X \succeq 0}  \norm{S_1^TA-S_1ZXZ^T}_F^2$$
we have $\norm{A-Z\tilde X Z^T}_F^2 \le (1+O(\epsilon)) \norm{A-A_k}_F^2.$ 
Following \cite{CWPSD}, we write $S_1^T Z$ in its SVD $S_1^TZ = U_z\Sigma_z V_z^T$. Since $S_1ZX Z^T$ falls in the column span of $S_1Z$ and the row span of $Z^T$, we write $\delta \eqdef \norm{(I-U_zU_z^T)S_1^TA}_F^2 + \norm{U_zU_z^TS_1^TA(I-ZZ^T)}_F^2$ and by Pythagorean theorem have:
\begin{align*}
\norm{S_1^TA-S_1ZXZ^T}_F^2 &= \norm{U_zU_z^TS_1 AZZ^T - U_z \Sigma _z V_z^T X Z^T}_F^2 + \delta\\
&= \norm{U_z^T S_1^T AZ - \Sigma_z V_z^T X}_F^2 + \delta \\
&= \norm{\Sigma_z V_z^T \left (V_z \Sigma_z^{-1} U_z^T S_1^T AZ - X \right)}_F^2 + \delta\end{align*}
Since $S_1$ is sampled via $Z$'s leverage scores, $(1-\epsilon) I \preceq S_1^T Z \preceq (1+\epsilon ) I$ and so:
$$\norm{S_1^TAZ-S_1ZXZ^T}_F^2 = (1\pm \epsilon) \norm{V_z^T \Sigma_z^{-1} U_z^T S_1^T AZ - X}_F^2 + \delta.$$
Finally, following \cite{CWPSD}, letting $B = V_z^T \Sigma_z^{-1} U_z^T S_1^T AZ$, we have 
$$\tilde X = \argmin_{X | X\succeq 0, \rank(X) = k} \norm{B - X}_F^2 = \left (B/2 + B^T/2 \right)_{k,+}$$
where $N_{k,+}$ has all but the top $k$ \emph{positive} eigenvalues of $N$ set to $0$. We output $M = Z \tilde X^{1/2}$. Overall, the above algorithm requires accessing $O \left (\frac{nm\log m}{\epsilon^2} \right)$ entries of $A$ (the entries of $AS_1$) and has runtime $\tilde O \left(\frac{n m^{\omega-1}}{\epsilon^{2(\omega-1)}}  \right )$ giving the lemma.
\end{proof}

We  can obtain $Z$ with rank $m = \Theta(k/\epsilon)$ by applying Theorem \ref{thm:spectral} with rank $k' = \Theta(k/\epsilon)$ and error parameter $\epsilon' = \Theta(1)$.
Combined with Lemma \ref{psdApproxProj} this yields:
\begin{reptheorem}{thm:psdOutput}[Sublinear Time Low-Rank Approximation -- PSD Output]
There is an algorithm that given any PSD $A \in \R^{n \times n}$, accesses $\tilde O \left (\frac{nk^2}{\epsilon^2} + \frac{nk}{\epsilon^3} \right)$ entries of $A$, runs in $\tilde O \left (\frac{nk^\omega}{\epsilon^\omega} + \frac{n k^{\omega-1}}{\epsilon^{3(\omega-1)}} \right)$ time and with probability at least $9/10$ outputs $M \in \R^{n \times k}$ with:
$$\norm{A-MM^T}_F^2 \le (1+\epsilon)\norm{A-A_k}_F^2.$$
\end{reptheorem}

\end{document}